\documentclass[journal]{IEEEtran}

\usepackage{graphicx}
\usepackage{amsmath}
\usepackage[noend]{algpseudocode}
\usepackage{algorithmicx,algorithm}
\usepackage{amsthm}
\usepackage{amsfonts}
\usepackage{subfigure} 
\usepackage{color}
\usepackage{cite}
\usepackage{setspace}
\newtheorem{Lemma}{Lemma}
\newtheorem{Theorem}{Theorem}
\newtheorem{Definition}{Definition}
\newtheorem{Corollary}{Corollary}
\newtheorem{Proposition}{Proposition}
\newtheorem{Assumption}{Assumption}
\usepackage{amssymb}
\usepackage{color}
\graphicspath{{images/}}
\usepackage{booktabs}
\usepackage{multirow} 
\usepackage{makecell}
 \usepackage[numbers,sort&compress]{natbib}
 \newtheorem{remark}{Remark}
\usepackage{stfloats}
\usepackage{amsmath}
\usepackage{amssymb}
\usepackage{enumitem}
\usepackage{threeparttable}
\usepackage{comment}
\usepackage{bm}
\usepackage{amssymb}        % 提供 \checkmark (打钩) 符号
\usepackage{booktabs}       % 提供 \toprule, \midrule, \bottomrule (三线表)
\usepackage{makecell}       % 提供 \makecell 命令 (单元格内换行)
\usepackage[table]{xcolor}  % 提供表格颜色支持 (gray!10 背景色)
\usepackage{tabularx}
\usepackage{newtxtext}
\usepackage{newtxmath}
\begin{document}	
\title{Fluid-Spatiotemporal Stochastic Geometry: Information Flow in Non-Stationary Fields
}
	
	\author{Wen-Yu Dong, Weiwei Jiang,~\IEEEmembership{Senior Member,~IEEE}, Song Zhao,\\ Qi Bi,~\IEEEmembership{Fellow,~IEEE}, Sheng Chen,~\IEEEmembership{Life Fellow,~IEEE}	%
	
	\thanks{W.-Y. Dong, S. Zhao and Q. Bi are with Future Technology Research Center, China Telecom Research Institute, Beijing 102209, China (E-mails: dongwy@chinatelecom.cn; zhaosong1@chinatelecom.cn; qibi@chinatelecom.cn)} %
	\thanks{W. Jiang is with the School of Information and Communication Engineering, Beijing University of Posts and Telecommunications, Beijing, 100876, China (Email: jww@bupt.edu.cn)}
	\thanks{S. Chen is with the School of Electronics and Computer Science, University of Southampton, Southampton SO17 1BJ, U.K., and also with Faculty of Information Science and Technology, Ocean University of China, Qingdao 266100, China (E-mail: sqc@ecs.soton.ac.uk).} %
	\vspace*{-5mm}
}

\maketitle 

\begin{abstract}
	The fundamental limits of information flow in spatial networks have been extensively characterized under the assumption of stationary spatial point processes.
	However, this stationarity hypothesis fails to capture the macroscopic transport of information demand in regimes where the node intensity field exhibits continuous, non-separable spatiotemporal evolution.
	This paper establishes the theoretical foundations of Fluid-Spatiotemporal Stochastic Geometry (F-STSG), treating the dynamic topology as a hydrodynamic limit of the discrete node constellation.
	We formulate the identification of latent network dynamics as an inverse boundary value problem.
	By invoking the principle of minimum kinetic energy consistent with Optimal Transport theory, we prove the existence and uniqueness of a scalar potential field that strictly governs the compressive evolution of the network load.
	This field-theoretic formulation establishes a rigorous field-measure coupling between the continuous Lagrangian transport and the discrete Eulerian interference geometry.
	Based on this, we derive the Information Flux vector, a sufficient statistic for the macroscopic advection of the capacity region, and establish the Material Derivative as the kinematic predictor of topological divergences.
	Finally, we characterize the fundamental limits of such non-stationary systems through two key theoretical contributions.
	{\color{black}First, by analyzing the trade-off between spectral efficiency and topological coordination overhead, we derive an asymptotic scaling law for the optimal node density under a quadratic-overhead approximation.}
	We prove that in the interference-limited regime, the energy-optimal topology scales as the square root of the structural cost ratio, defining a thermodynamic inverse-square information barrier independent of link-level spectral efficiency.
	{\color{black}Second, we reveal a fundamental source-channel duality perspective for mobile networks. By proving that the macroscopic flow divergence mathematically determines the topological entropy production rate, we establish the physical foundation for the information-theoretic cost of mobility. This demonstrates that tracking the dynamic network state requires control signaling capacities that scale fundamentally with the kinematic entropy of the topology.}
\end{abstract}

\begin{IEEEkeywords}
Stochastic geometry, non-stationary point processes, hydrodynamic limits, inverse problems, scaling laws.
\end{IEEEkeywords}

\section{Introduction}\label{sec:intro} % S1

\IEEEPARstart{C}{haracterizing} the fundamental limits of information flow in stochastic fields is a central problem in network information theory.
Classical analyses, from Shannon's capacity to modern stochastic geometry (SG) \cite{1995Stochastic, Baccelli2009}, are predominantly predicated on the axiom of stationarity.
By modeling the spatial configuration of nodes as a realization of a stationary point process, typically a Poisson point process (PPP), these frameworks leverage the ergodic hypothesis to equate time averages with spatial ensemble averages.
While this approach has successfully derived asymptotic scaling laws for static networks \cite{Gupta2000, Haenggi2012}, it encounters a fundamental theoretical barrier when applied to non-stationary regimes.

In operational scenarios, the spatial configuration of users is rarely static or purely random. Instead, it exhibits continuous, collective, and non-separable spatiotemporal evolution.
We identify this macroscopic regime of collective mobility as the ``Digital Tide''.
Unlike microscopic Brownian motion where node displacements are independent, the Digital Tide represents a structured, compressible flow of information demand, whose underlying stochastic process is inherently non-ergodic.
Consequently, the classical notion of ergodic capacity becomes ill-defined, and the system's performance limits are no longer governed by static parameters, but by the hydrodynamic transport properties of the information field itself.

Within this context, existing SG frameworks reveal a critical limitation: they typically treat network evolution merely as a sequence of independent realizations. This quasi-static approximation implicitly discards the temporal causality inherent in the transport dynamics. In non-stationary regimes driven by exogenous rhythms, the Shannon capacity limit transforms from a static parameter into a dynamic manifold governed by the trajectory of the distribution itself. Standard metrics, therefore, fail to predict topological divergences, such as the formation of transient hotspots, before they manifest as outages.

To rigorously capture these dynamics, we analyze the network in the thermodynamic limit where the number of nodes $N \to \infty$ \cite{Lasry2007}.
In this asymptotic regime, the discrete node constellation converges to a deterministic continuum field $\lambda(\bm{x},t)$ defined on the spatiotemporal domain $\mathbb{R}^d \times \mathbb{R}^+$---the mathematical manifestation of the Digital Tide.
This approach allows us to apply field-theoretic tools to quantify the collective transport of information.
However, a fundamental theoretical gap exists in this hydrodynamic limit: there is currently no analytical framework capable of coupling the continuous Lagrangian dynamics of the node intensity field with the discrete Eulerian statistics of network interference.
Existing methodologies track mass transport but fail to map it rigorously to the information-theoretic performance of the infrastructure.

This paper bridges this gap by establishing the theoretical foundations of Fluid-Spatiotemporal Stochastic Geometry (F-STSG).
Departing from heuristic mobility models, we formulate the identification of latent network dynamics as an inverse boundary value problem~\cite{Isakov2006}.
Our primary objective is to recover the unique, minimum-energy macroscopic velocity field $\bm{v}(\bm{x},t)$ solely from the observable density evolution.
This formulation unifies continuum mechanics with point process theory, enabling the characterization of the fundamental kinematic, thermodynamic, and information-theoretic limits of non-stationary networks.

\subsection{Related Works}\label{sec:related_works} % S1.1

Modeling the kinetics of large-scale networks presents a fundamental trade-off between physical fidelity and analytical tractability.
To contextualize the theoretical landscape, this section categorizes existing literature into three distinct epistemological approaches: the microscopic snapshots of SG, the macroscopic flows of continuum models, and the abstract representations of data-driven and graph models.
We analyze the structural limitations of each paradigm regarding spatiotemporal consistency and conclude by identifying the fundamental information-theoretic constraints, particularly the thermodynamic cost of mobility, that remain unaddressed in current network control theory.

\subsubsection{Microscopic Stochastic Geometry: From Static Fields to Reactive Deployment}
Following the foundational works in \cite{Baccelli2009} and \cite{Haenggi2012}, SG has become a standard analytical tool for spatial averaging \cite{DongTCOM, DongJSAC, DongIoTJ, DongGC, DongGC2}. However, this framework relies primarily on the separation of space and time domains. Specifically, the spatial distribution is typically modeled as a stationary process, serving as a fixed background upon which temporal variations are analyzed.

This structural separation creates a fundamental dichotomy in reliability analysis, particularly concerning temporal interference correlation. 
Because the spatial topology is treated as a fixed realization, the interference experienced by a user becomes highly correlated across consecutive time slots. 
For instance, the study in \cite{Haenggi2013} identifies a phase transition in reliability, revealing that such static interference fields can lead to infinite local delay, as a user suffering from strong interference is likely to face the same blockage in subsequent retransmissions. 
Consequently, existing investigations are typically confined to limiting regimes: either quasi-static interference where the topology remains frozen, or fast-varying interference \cite{Lu2021, Win2009} where the topology is independently redrawn in every slot. In these scenarios, as well as in retransmission analysis \cite{Nigam2015, Krishnan2017}, the continuous spatiotemporal evolution of the network is absent.

To address these static or memoryless extremes, subsequent research attempted to explicitly incorporate node mobility, yet these extensions often fail to capture macroscopic structural evolution.
Existing mobility models primarily examine how individual node displacement modifies the local interference field rather than modeling the collective evolution of the user density function \cite{Krishnan2017, Tabassum2019}. 
This limitation parallels studies on transmitter activity via queueing theory \cite{Yang2018, Zhong2018}: both approaches are predicated on a time-invariant background intensity, implying that the partial derivative of density with respect to time is zero. 
Essentially, these frameworks track the microscopic motion of particles without capturing the macroscopic flow of the medium.

The limitations of such microscopic analytical lens constrain network synthesis and control, particularly in dynamic orchestration strategies such as unmanned aerial vehicle deployments.
Lacking a predictive kinematic law for density evolution, current literature typically addresses demand through a localized approach where deployments serve as reactive measures to maintain local stability. 
For instance, resources are dispatched only when specific clusters experience performance degradation \cite{Zhang2022, Al-Hourani2016}. 
These methods rely on sequential static snapshots, effectively precluding the analysis of macroscopic advection phenomena in which the topology exhibits non-separable and continuous spatiotemporal evolution.

\subsubsection{Macroscopic Flow Models: The Inverse Problem and Rationality}
To address the unrealistic assumption of static constraints in SG, approximating dense networks as continuous media provides a dynamic alternative rooted in vehicular traffic flow theory \cite{Lighthill1955, Greenshields1935} and pedestrian dynamics \cite{Helbing1995, Hughes2002}. Within the wireless domain, mean field games (MFG) \cite{Lasry2007} adopt similar continuum principles to model edge caching \cite{Kim2020}, computation offloading \cite{Zheng2021}, and interference management \cite{Zhang2019, Wang2014}.

A fundamental limitation of these frameworks lies in their formulation as a forward problem, which postulates future density evolution based on prescribed microscopic drivers. 
Classical paradigms, most notably the Lighthill-Whitham-Richards (LWR) models \cite{Lighthill1955}, depend on empirical fundamental diagrams such as velocity-density constitutive laws \cite{Greenshields1935, Hughes2002}. 
Similarly, MFG relies on the assumption of rational utility maximization, including the minimization of energy cost \cite{Zheng2021} or security risks \cite{Wang2014}. 
In realistic environments, however, these drivers are typically latent or inaccessible, as the subjective utility functions governing human mobility are fundamentally opaque to external observers. 
A theoretical gap exists for an inverse problem formulation capable of reconstructing macroscopic dynamics solely from observable aggregate data, thereby circumventing reliance on unverifiable microscopic behavioral assumptions.

This reliance on complex behavioral coupling results in significant mathematical intractability. The structure of MFG typically involves a system of coupled forward-backward non-linear partial differential equations (PDEs), comprising the Hamilton-Jacobi-Bellman and Fokker-Planck-Kolmogorov equations \cite{Zhang2019, Kim2020}. Such systems rarely admit analytical solutions and necessitate computationally expensive iterative numerical solvers. 
Unlike linear systems, these coupled non-linear structures preclude the derivation of explicit closed-form scaling laws, limiting theoretical insight into how network performance scales with fundamental mobility parameters.

Beyond these modeling and computational constraints, classical fluid models exhibit a geometric discrepancy regarding field-measure decoupling. These approaches track mass transport as a continuous field without explicitly mapping the density to the discrete interference geometry governing the signal-to-interference-plus-noise ratio (SINR). 
To date, a unified framework that rigorously couples continuous mass transport with the discrete random measures governing interference statistics remains absent.

\subsubsection{Abstract and Data-Driven Frameworks: The Loss of Physical Causality}
Alternative frameworks typically prioritize computational or analytical tractability at the expense of spatial resolution or physical interpretability. 
Temporal point processes \cite{Gonzalez2016, Farrahi2014} efficiently capture event arrival rates along the temporal dimension but impose a one-dimensional abstraction. 
This formulation inherently relies on a distance-agnostic interference assumption, effectively reducing network interactions to mean-field intensities. 
Consequently, the distinct spatial clustering required for rigorous SINR analysis is obscured, rendering these models insufficient for geometry-dependent interference management.

In the domain of complex network dynamics \cite{Boccaletti2006, Barabasi2013}, the analysis shifts focus to topological graph properties such as degree distribution. 
By formulating the network as a topological graph $G(V,E)$ defined on the sets of nodes $V$ and edges $E$, rather than a geometric constellation in $\mathbb{R}^d$, these models abstract away the Euclidean metric space. 
This abstraction precludes the derivation of path-loss dependent metrics including area spectral efficiency (ASE), as the critical relationship between physical distance and signal attenuation is discarded.

From a computational perspective, data-driven paradigms leveraging machine learning \cite{Wang2023, Ding2025, Li2024} approach traffic prediction through high-dimensional optimization. 
Although these methods effectively identify statistical correlations, they generally function as phenomenological models that lack explicit physical derivation. 
Currently, a theoretical framework that derives macroscopic network evolution directly from first principles, thereby reconciling data-driven prediction with deterministic physical interpretability, remains to be established.

\subsubsection{Information Theoretic Limits: Mobility as Cost vs. Gain}
Beyond specific modeling techniques, the analysis must address the information-theoretic constraints imposed by network dynamics.
The foundational work by Grossglauser and Tse \cite{Grossglauser2022} demonstrates that node mobility can theoretically increase the capacity scaling of ad hoc networks to scale linearly with $N$, denoted as $\Theta(N)$, by utilizing nodes as physical relays.
However, this capacity gain is strictly predicated on the assumption of ergodic mixing over extended time scales.
As quantified by the work of \cite{Gamal2006}, achieving such throughput benefits necessitates a large delay tolerance, which strictly bounds the delay-throughput tradeoff.
Consequently, for latency-sensitive applications, the network topology acts as a real-time constraint rather than a delay-tolerant resource, rendering the relaying gain inaccessible.

In such real-time regimes, topological dynamics function not as a relaying mechanism, but as a primary source of estimation and control costs.
At the physical link level, this cost manifests as channel aging, where Doppler shifts reduce the channel coherence time and render feedback information rapidly obsolete.
While the scaling laws of channel state information (CSI) feedback are established for static links \cite{Jindal2006}, the penalty induced by this continuous state degradation \cite{Truong2013} remains under-characterized in network-level analysis.
Specifically, current frameworks lack a unified metric to quantify how the aggregate information loss from mobility offsets the spatial multiplexing gains from densification.

At the macroscopic system level, tracking a continuously evolving topology is information-theoretically analogous to stabilizing an unstable plant over a noisy channel.
According to the data rate theorem \cite{Tatikonda2004}, the minimum information rate required to maintain bounded estimation error is lower-bounded by the topological entropy of the system.
Yet, existing literature has not formulated a theoretical boundary that explicitly connects this entropic cost of mobility to the potential gains of network capacity.
Crucially, a rigorous mapping is missing that translates the physical kinetics of node movement into the information-theoretic cost required to track it.
Establishing such a thermodynamic cost function is essential to formulate a source-channel duality for mobile network control, yet this remains an open problem in the literature.

\subsection{Key Theoretical Questions: Searching for Fundamental Limits}\label{S1.2}

The transition from stationary snapshots to the hydrodynamic limit of SG fundamentally alters the problem space.
In this non-stationary regime, the classical capacity region is no longer a static polygon but a dynamic manifold driven by the information demand field.
To explore the fundamental limits of such systems, three critical questions arise regarding the kinematics, thermodynamics, and informatics of the network flow:

\begin{enumerate}
	\item \textbf{The Kinematic Limit (Minimum Energy Transport Bound):} The scalar continuity equation represents a fundamental conservation law but is mathematically underdetermined for the vector velocity field.
	\textit{Does there exist a unique, minimum-energy macroscopic velocity field that explains the observed density evolution, thereby establishing a kinematic lower bound on the transport cost of the network load?}
	
	{\color{black}\item \textbf{The Thermodynamic Limit (Energy-Capacity Scaling):}
	In the interference-limited regime, densification increases ASE but incurs super-linear coordination costs.  \textit{Does a structural invariant exist that defines the thermodynamic limit of network densification—specifically, a universal scaling law for the optimal node density that is asymptotically independent of physical layer parameters such as spectral efficiency?} }

	\item \textbf{The Information-Theoretic Limit (Source-Channel Duality):}
	As the network topology evolves, it acts as a non-stationary information source.
	\textit{Is there a fundamental information-theoretic lower bound on the control signaling rate required to track this dynamic state? Can we prove that the macroscopic flow divergence is mathematically equivalent to the entropy production rate of the topology?}
\end{enumerate}

\begin{table*}[t]
	\small
	\centering
	\caption{Methodological Comparison and Uniqueness of F-STSG}
	\label{tab:gap_analysis} % Tab.I
	\vspace*{-2mm}
	\renewcommand{\arraystretch}{1.5} 
	\setlength{\tabcolsep}{6pt} 
	\begin{threeparttable}
		\begin{tabular}{l c c c >{\columncolor{gray!10}}c} 
			\toprule
			\multicolumn{1}{c}{\textbf{Theoretical Dimension}} & 
			\textbf{\makecell{Stochastic\\Geometry (SG)}} & 
			\textbf{\makecell{Continuum Flow\\(MFG/LWR)}} & 
			\textbf{\makecell{Graph \& Data\\Driven Models}} & 
			\textbf{\makecell{F-STSG\\(Proposed)}} \\
			\midrule
			
			% Row 1: 几何精度
			\textbf{1. Euclidean Spatial Resolution} & \checkmark & $\times$ & $\times$ & \textbf{\checkmark} \\
			\textit{ (Explicitly models Distance, Path Loss, \& SINR)} & & & & \\
			
			% Row 2: 动态特性
			\textbf{2. Non-Stationary Evolution} & $\times$ & \checkmark & \checkmark & \textbf{\checkmark} \\
			\textit{ (Captures time-varying densities $\partial_t \lambda \neq 0$)} & & & & \\
			
			% Row 3: 逆向问题 (关键修改点)
			\textbf{3. Inverse Problem Tractability} & $\times$ & $\times$ & $\times$ & \textbf{\checkmark} \\
			\textit{ (Recovers vector field from aggregate counts)} & & & & \\
			
			% Row 4: 行为假设依赖 (攻击 MFG 的痛点)
			\textbf{4. Independence from Micro-Drivers} & $\times$ & $\times$ & \checkmark & \textbf{\checkmark} \\
			\textit{ (Avoids latent utility functions or rational agents)} & & & & \\
			
			% Row 5: 解析解
			\textbf{5. Closed-Form Scaling Laws} & \checkmark & $\times$ & $\times$ & \textbf{\checkmark} \\
			\textit{ (Analytically derivable vs. Numerical/Iterative)} & & & & \\
			
			% Row 6: 场-测度耦合
			\textbf{6. Discrete-Continuum Coupling} & $\times$ & $\times$ & $\times$ & \textbf{\checkmark} \\
			\textit{ (Bridges fluid transport with discrete interference)} & & & & \\
			
			% Row 7: 热力学/信息论
			\textbf{7. Thermodynamic Consistency} & $\times$ & $\times$ & $\times$ & \textbf{\checkmark} \\
			\textit{ (Quantifies Entropic Cost of Mobility)} & & & & \\
			
			\midrule
			\multicolumn{1}{c}{\textbf{Key Limitations Identified}} & 
			\makecell[t]{\textit{Static /}\\ \textit{Time-Slotted}} & 
			\makecell[t]{\textit{Math Intractability /}\\ \textit{Subjective Utility}} &  % 加上 /
			\makecell[t]{\textit{Geometry-Blind /}\\ \textit{Lack Causality}} &  % 这里也建议加上 /
			-- \\
			\midrule
			\multicolumn{1}{c}{\textbf{Relevant References}} & 
			\makecell[t]{\cite{Baccelli2009, Haenggi2013}} & 
			\makecell[t]{\cite{Lasry2007, Kim2020,Lighthill1955}} &  % 把 LWR 加在这里
			\makecell[t]{\cite{Boccaletti2006, Wang2023}} & 
			\textbf{This Work} \\
			\bottomrule
		\end{tabular}
		\begin{tablenotes}
			\footnotesize
			\item MFG: Mean Field Games, LWR: Lighthill-Whitham-Richards, F-STSG: Fluid-Spatiotemporal Stochastic Geometry. \\ \checkmark: considered, $\times$ : not considered, -- : indicates the aforementioned limitations are resolved.
		\end{tablenotes}
	\end{threeparttable}
	\vspace*{-3mm}
\end{table*}

\subsection{Main Contributions}\label{S1.3}

This paper addresses these questions by establishing F-STSG, a field-theoretic framework capable of rigorously deriving macroscopic flow dynamics directly from the temporal evolution of node intensity.
The primary contributions are summarized as follows:

\begin{itemize}
\item \textbf{Resolution of the Inverse Problem via Optimal Transport:}
We formulate the identification of latent network dynamics as an inverse boundary value problem. By invoking the principle of minimum kinetic energy, consistent with the theory of Optimal Transport \cite{Benamou2000}, we justify the imposition of an irrotational constraint. We prove the existence and uniqueness of a macroscopic velocity field that strictly governs the compressive evolution of the network load, resolving the ill-posedness of the scalar continuity equation.

	\item \textbf{Lagrangian Information Kinetics:}
	We generalize microscopic node mobility to the Information Flux, a vector field quantifying the spatial advection of capacity demand.
	Bridging continuum mechanics and Shannon theory, we establish that the flux vector constitutes a sufficient statistic for the dynamic capacity region.
	Furthermore, we derive the Material Derivative as the kinematic predictor of topological divergences, generalizing the concept of outage probability to non-stationary hydrodynamic regimes.
	
	{\color{black}\item \textbf{Asymptotic Energy-Capacity Scaling Laws under Quadratic Overhead:}
	We characterize the asymptotic limits of fluid networks. Grounded in the degrees-of-freedom analysis of interference channels where feedback overhead scales quadratically with density \cite{Jindal2006}, we derive an asymptotic closed-form scaling law for the optimal node density.}
	We prove that in the interference-limited regime, the energy-optimal topology scales as the square root of the structural cost ratio ($\lambda^* \propto \sqrt{\mathcal{P}_{\mathrm{static}}/\kappa}$, where $\lambda^*$ is the energy-optimal node density, $\mathcal{P}_{\mathrm{static}}$ is the baseline static power consumption, and $\kappa$ is the effective coordination coefficient.).
	This result constitutes a structural invariant, defining a thermodynamic barrier independent of link-level spectral efficiency (SE).
	
	\item \textbf{The Source-Channel Duality of Mobility:}
{\color{black}	We elevate the analysis from physical transport to a continuous information-theoretic perspective. By modeling the dynamic topology as an evolving information source, we prove that the macroscopic velocity divergence mathematically determines the topological entropy production rate. This establishes the physical foundation for the source-channel duality of network mobility, where the divergence of the Information Flux serves as an intrinsic kinematic footprint of the signaling complexity required to track the non-stationary network state.}
\end{itemize}

Table~\ref{tab:gap_analysis} provides a systematic comparison between the proposed F-STSG framework and existing modeling paradigms, highlighting the key methodological gaps addressed by our approach. The remainder of this paper is organized as follows.

Section~\ref{sec:system_model} establishes the F-STSG system model.
It rigorously defines the ``Digital Tide'' as the hydrodynamic limit of the mobile node constellation and introduces the dynamic network state process, which mathematically couples the continuous intensity field with the discrete, stochastic infrastructure measure.

Section~\ref{sec:core_theory} presents the framework's core theoretical engine: the reconstruction of the macroscopic velocity field $\bm{v}(\bm{x},t)$.
We address the fundamental under-determination of the continuity equation by invoking the theory of Optimal Transport.
Specifically, we identify the physical velocity field as the unique minimizer of the kinetic energy functional (Proposition~\ref{Prop_Optimal_Transport}), which justifies the irrotational constraint.
This transforms the kinematic reconstruction into a well-posed inverse boundary value problem for the scalar flow potential (Theorem~\ref{Theorem1}).

Section~\ref{sec:linear_response} generalizes this solution to arbitrary, non-symmetric topologies.
We develop a Green's function formalism using perturbation theory, deriving the linearized flow potential equation (Theorem~\ref{Theorem2}) and discussing the spectral properties of the network transport operator.

Section~\ref{sec:metrics_toolbox} constructs the novel field-theoretic analytical toolbox.
We derive two complementary sets of indicators:
(i) Lagrangian Kinematics, including the Information Flux vector (Proposition~\ref{Prop_Flux_Integral}), the Material Derivative as a stability statistic (Proposition~\ref{Prop_Material_Derivative}), and the Centroid Drift Velocity (Proposition~\ref{Prop_Centroid_Drift}); and
(ii) Dynamic Performance Metrics, specifically mapping the field evolution to the instantaneous Association and Coverage Probabilities (Propositions~\ref{Prop_Assoc_Prob} and \ref{Prop_Cov_Prob}) via spatial expectations.

Section~\ref{sec:simulation} provides rigorous numerical corroboration.
By comparing the macroscopic analytical predictions against microscopic event-driven Monte Carlo simulations, we verify the asymptotic exactness of the hydrodynamic limit assumption and the validity of the field-measure coupling.

Section~\ref{sec:applications} explores the fundamental theoretical insights and asymptotic limits of the framework.
The analysis culminates in two key contributions:
{\color{black}(i) the derivation of an asymptotic energy-capacity scaling law (Proposition~\ref{Prop_Scaling}), revealing that the optimal node density scales as the square root of the structural cost ratio ($\lambda^* \propto \sqrt{\mathcal{P}_{\mathrm{static}}/\kappa}$);} and
(ii) the establishment of a fundamental source-channel duality, proving that the Information Flux divergence constitutes the exact topological entropy production rate (Proposition~\ref{Prop_Entropy}).

Section~\ref{sec:variants} discusses the theoretical generalizations of the framework.
We formulate rigorous mathematical extensions for handling non-conservative dynamics (via non-homogeneous Poisson equations), rotational vorticity (via Helmholtz decomposition), and structural correlations in the infrastructure (via Pair Correlation Functions (PCFs)).

Section~\ref{sec:future_work} outlines promising avenues for future research, including the rate-distortion theory of topology and physics-informed deep learning.
Finally, Section~\ref{sec:conclusion} concludes the paper.

\subsection{Notations}\label{S1.4}

Throughout this paper, $\mathbb{R}^d$ denotes the $d$-dimensional Euclidean space, and $\mathcal{B}(\mathbb{R}^d)$ denotes the Borel $\sigma$-algebra on $\mathbb{R}^d$. 
Boldface lower-case letters denote vectors, e.g., $\bm{x}$. $\|\bm{x}\|$ denotes the Euclidean norm of vector $\bm{x}$, while $\|f\|_{L^2}$ denotes the $L^2$-norm of function $f$.
For a bounded domain $\Omega \subset \mathbb{R}^d$, $\partial \Omega$ denotes its boundary.
For spatiotemporal field $f(\bm{x},t)$ and vector field $\bm{F}(\bm{x},t)$, the operators $\nabla f$, $\nabla \cdot \bm{F}$, $\nabla \times \bm{F}$, and $\Delta f$ (or $\nabla^2 f$) denote the gradient, divergence, curl, and Laplacian, respectively.
The material derivative is denoted by $\frac{D}{Dt}\! \triangleq\! \frac{\partial}{\partial t} + \bm{v} \cdot \nabla$. The notation $\mu_N\! \xrightarrow{w}\! \lambda$ indicates the weak convergence of measure $\mu_N$ to density $\lambda$.
$\mathbb{E}[\cdot]$ and $\mathbb{P}(\cdot)$ denote statistical expectation and probability, respectively. 
The Dirac measure concentrated at $\bm{x}$ is denoted by $\delta_{\bm{x}}$, and $\mathbb{I}(\cdot)$ represents the indicator function. 
The reduced Palm distribution of a point process $\Phi$ is denoted by $\mathbb{P}^!_{\Phi}$. 
Standard asymptotic notations are used, e.g., $f(x) \sim g(x)$ implies $\lim_{x \to \infty} f(x)/g(x) = 1$, $f(x) = \mathcal{O}(g(x))$ implies $\limsup_{x \to \infty} |f(x)/g(x)| < \infty$, and $f(x) \in \Theta(g(x))$ indicates that $g(x)$ is an asymptotically tight bound for $f(x)$.
Key system-specific symbols and variables used throughout the paper are summarized in Table~\ref{tab:key_notations}.

\begin{table}[tbp]
	\small
	\caption{Summary of Key Notations}
	\label{tab:key_notations} % Tab.II
	\vspace*{-2mm}
	\centering
	\renewcommand{\arraystretch}{1.25}
	\begin{tabularx}{\columnwidth}{ll}
		\toprule
		\textbf{Symbol} & \textbf{Definition} \\
		\midrule
		\multicolumn{2}{l}{\textit{Network Model \& Physical Layer}} \\
		$\Phi_{\mathrm{B}}$ & Base station (BS) point process \\
		$\Psi(t)$ & Set of active BSs at time $t$ \\
		$\bm{X}_i, \mathcal{K}(i)$ & Location and tier index of the $i$-th BS \\
		$P_k, \mathcal{W}_k$ & Transmit power and bias weight of tier $k$ \\
		$\ell_k(r), \alpha_k$ & Path loss function and exponent of tier $k$ \\
		$\gamma_k$ & SINR threshold for tier $k$ \\
		$I(\bm{x},t)$ & Aggregate interference field \\
		$\sigma^2, h$ & Noise power and small-scale fading gain \\
		
		\multicolumn{2}{l}{\textit{Fluid-Spatiotemporal Dynamics}} \\
		$\lambda(\bm{x},t)$ & Macroscopic user intensity (``Digital Tide'') \\
		$N(t)$ & Total mobile user population \\
		$\bm{v}(\bm{x},t)$ & Macroscopic velocity field \\
		$\bm{J}(\bm{x},t)$ & Information flux vector ($\bm{J} \triangleq \lambda \bm{v}$) \\
		$\mathcal{D}(\bm{x},t)$ & Congestion divergence ($\mathcal{D} \triangleq \nabla \cdot \bm{v}$) \\
		$\phi(\bm{x},t)$ & Scalar potential (Irrotational component $\bm{v}_{\mathrm{irr}}$) \\
		$\bm{A}(\bm{x},t)$ & Vector potential (Rotational component $\bm{v}_{\mathrm{rot}}$) \\
		$S(\bm{x},t)$ & Source/generation rate in continuity equation \\
		$\bm{C}(t)$ & Network load centroid \\
		
		\multicolumn{2}{l}{\textit{Perturbation \& Analysis}} \\
		$\lambda_0, \phi_0$ & Zeroth-order background intensity and potential \\
		$\delta \lambda, \delta \phi$ & First-order intensity fluctuation, potential perturbation \\
		$G(\bm{x},\bm{y})$ & Green's function for the transport operator \\
		$\psi_n, \nu_n$ & Eigenfunctions and eigenvalues of transport operator \\
		$\mathcal{L}_0$ & Linearized transport operator $\nabla \cdot (\lambda_0 \nabla)$ \\
		
		\multicolumn{2}{l}{\textit{Performance Metrics \& Extensions}} \\
		$P_{\mathrm{cov}}(t)$ & Instantaneous coverage probability \\
		$A_k(t)$ & Tier-$k$ association probability \\
		$H(t)$ & Topological differential entropy \\
		$\Omega(\lambda)$ & Network energy cost functional \\
		$\kappa$ & Coordination cost coefficient \\
		$\lambda^*$ & Energy-optimal node density \\
		$\bar{R}_{\infty}$ & Asymptotic ergodic spectral efficiency \\
		$g(r)$ & Pair correlation function (PCF) \\
		\bottomrule
	\end{tabularx}
	\vspace*{-3mm}
\end{table}

%======================================================================
%=====================Section II=======================================
%======================================================================
\section{System Model}\label{sec:system_model} % S2

We introduce a general analytical framework, termed F-STSG. By adopting a field-theoretic perspective, this framework treats the discrete node constellation as a continuous compressible fluid,  constructed to characterize the fundamental limits of non-stationary networks where the mobile node topology exhibits macroscopic, collective evolution. As illustrated in Fig.~\ref{fig:system_model}, the framework is formulated as the coupling of two distinct structures: a static discrete random measure representing the network infrastructure, and a continuous intensity field representing the hydrodynamic limit of the information demand.

\subsection{Network Skeleton: Discrete Random Measures}\label{sec:model_skeleton} % S2.1

We model the network infrastructure as a discrete random measure $\Phi_{\mathrm{B}}$ on the Euclidean domain $\mathbb{R}^d$~\cite{Daley2003}:
\begin{equation} % eq.1
	\Phi_{\mathrm{B}} = \sum_{i} \delta_{\bm{X}_i},
\end{equation}
where $\{\bm{X}_i\}$ denote the random locations of the infrastructure nodes,e.g., base stations (BSs), and $\delta_{\bm{x}}$ represents the Dirac measure concentrated at $\bm{x}$. While our subsequent numerical validation assumes a PPP for analytical tractability in the interference derivation, the core kinematic theory developed in Section~\ref{sec:core_theory} holds for any simple, stationary ergodic point process.
In the specific case of a multi-tier heterogeneous network, $\Phi_{\mathrm{B}}$ is the superposition of $K$ independent point processes, $\Phi_{\mathrm{B}} = \bigcup_{k=1}^K \Phi_k$. A mobile node at location $\bm{x} \in \mathbb{R}^d$ associates with the infrastructure according to a stationary policy based on the maximum biased received power. 

\begin{figure}[!t]
	\centering
	\includegraphics[width=0.93\columnwidth]{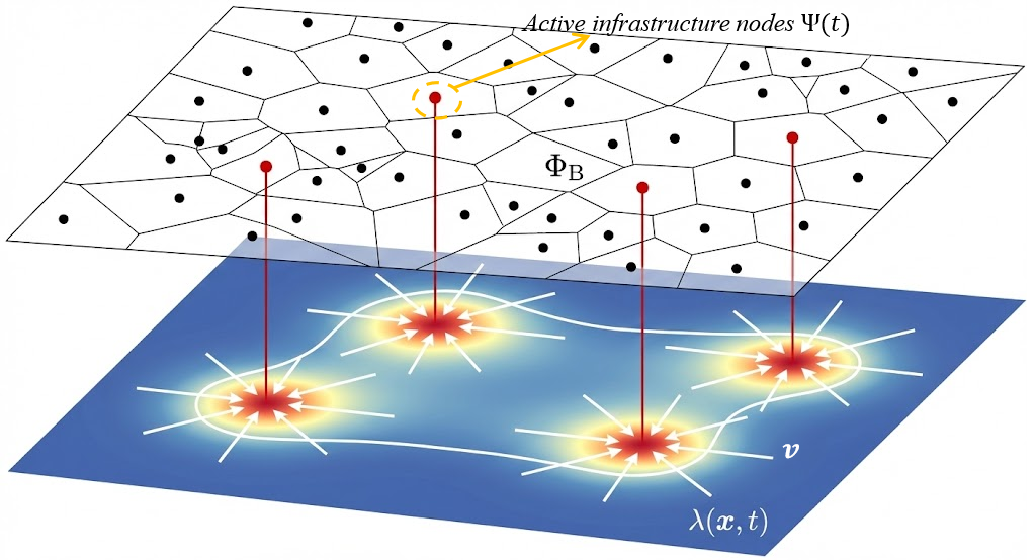}
	\vspace*{-2mm}
	\caption{Illustration of the Field-Measure Coupling in F-STSG.}
	\label{fig:system_model} % Fig.1
	\vspace*{-4mm}
\end{figure}

To formalize the association regions, let $\mathcal{K}(n)$ denote the index of the tier to which node $\bm{X}_n$ belongs. We denote the path loss function for tier $k$ as $\ell_k(\cdot): \mathbb{R}^+ \to \mathbb{R}^+$. While the F-STSG framework supports general monotonic path loss models, we adopt the standard power-law function $\ell_k(r) = r^{-\alpha_k}$ in the subsequent analysis for tractability, where $\alpha_k > 2$ is the path loss exponent.
We consider the weighted Voronoi tessellation where the association cell $V_i$ of a generic node $\bm{X}_i$ is defined as:
\begin{align} % eq.2
	V_i = &\left\{\bm{x} \in \mathbb{R}^d : P_{\mathcal{K}(i)} \mathcal{W}_{\mathcal{K}(i)} \ell_{\mathcal{K}(i)}(\|\bm{x} - \bm{X}_i\|) \ge \right.\nonumber\\
	&\quad\,\, \left. P_{\mathcal{K}(j)} \mathcal{W}_{\mathcal{K}(j)} \ell_{\mathcal{K}(j)}(\|\bm{x} - \bm{X}_j\|), \forall j \neq i \right\},
	\label{eq:assoc_rule_generic}
\end{align}
where $P_k$ and $\mathcal{W}_k$ denote the transmit power and association bias weight for tier $k$, respectively.

It is necessary to strictly distinguish the modeling of the infrastructure from that of the mobile nodes. While we apply a hydrodynamic limit to the mobile node population, as detailed in Section~\ref{sec:model_user_population}, we explicitly retain the discrete nature of the infrastructure $\Phi_{\mathrm{B}}$. This distinction prevents the loss of higher-order statistical moments—such as the spatial clustering of interference—which are typically smoothed out in pure mean-field models where BSs are approximated as a continuous density. The F-STSG framework thus mathematically couples a continuous intensity field representing demand with a discrete random measure representing geometry.

\subsection{Continuum Limit: Thermodynamic Limit and Digital Tide}\label{sec:model_user_population} % S2.2

A central feature of this work is the rigorous transition from microscopic mobility to macroscopic flow dynamics. We consider the distribution of mobile nodes—acting as information sources or sinks—as a dynamic stochastic system. To bridge the gap between discrete entities and continuous dynamics, we model the system in the thermodynamic limit.
Let $\{\Phi_{\mathrm{M}}^{(N)}(t)\}_{N \ge 1}$ be a sequence of point processes representing the locations of mobile nodes, indexed by the expected population size $N$. Mathematically, this is modeled as a sequence of Cox processes driven by a deterministic intensity for any Borel set $A \in \mathcal{B}(\mathbb{R}^d)$. Let $\mu_N(\cdot, t)$ be the empirical measure associated with the $N$-th process:
\begin{equation} % eq.3
	\mu_N(A, t) = \frac{1}{N} \sum_{\bm{Y}_j \in \Phi_{\mathrm{M}}^{(N)}(t)} \mathbb{I}(\bm{Y}_j \in A), \quad \forall A \in \mathcal{B}(\mathbb{R}^d).
\end{equation}
We assume that as $N \to \infty$, the normalized empirical measure converges weakly to a deterministic probability density function (PDF) $p(\bm{x},t)$:
\begin{equation} % eq.4
	\mu_N(\cdot, t) \xrightarrow{w} p(\bm{x},t) \mathrm{d}\bm{x} \triangleq \frac{\lambda(\bm{x},t)}{N(t)} \mathrm{d}\bm{x} ,
	\label{eq:weak_convergence}
\end{equation}
where $N(t)\! =\! \int_{\mathbb{R}^d} \lambda(\bm{x},t) \,\mathrm{d}\bm{x}$ is the total expected population, and $\lambda(\bm{x},t)$ is the macroscopic intensity field.
This asymptotic regime justifies the use of smooth PDEs to describe the evolution of discrete node constellations.

\begin{Definition}\label{def:digital_tide} % Def.1
	The Digital Tide is formally defined as the time-varying intensity field $\lambda(\bm{x},t) : \mathbb{R}^d \times \mathbb{R}^+ \to \mathbb{R}^+$, representing the hydrodynamic limit of the mobile node process. The discrete realization at any instant $t$, $\Phi_{\mathrm{M}}(t)$, is modeled as an inhomogeneous PPP (IPPP) governed by this intensity, such that for any Borel set $A \subseteq \mathbb{R}^d$, the expected number of nodes is:
	\begin{equation} % eq.5
		\mathbb{E}[N_A(t)] = \int_A \lambda(\bm{x},t) \,\mathrm{d}\bm{x}.
	\end{equation}
	Crucially, the evolution of the Digital Tide is strictly governed by the continuity equation of mass conservation, driven by a macroscopic velocity field $\bm{v}(\bm{x},t)$. This formulation assumes that microscopic node trajectories are continuous in time, ensuring that local density changes are exclusively attributable to flux divergence.
\end{Definition}

\begin{remark} % Rem.1
	We adopt the term ``Digital Tide'' to emphasize the non-separable spatiotemporal nature of the demand field, distinct from stationary breathing models. The continuum approximation in \eqref{eq:weak_convergence} is asymptotically exact in the regime of ultra-dense networks (UDN). In this regime, the relative stochastic fluctuation (shot noise) scales as $\mathcal{O}(1/\sqrt{\lambda})$, rendering the hydrodynamic description a statistical necessity for characterizing system stability.
\end{remark}

%----------new lemma-  Displacement theory-------------------
To rigorously justify the IPPP assumption in Definition \ref{def:digital_tide} under continuous spatiotemporal evolution, we must bridge the deterministic fluid kinematics with the stochastic properties of the discrete measure. This field-measure isomorphism is established via the fundamental displacement theorem of point processes.

\begin{Lemma}\label{Lemma_Displacement}[Topological Invariance under Kinematic Advection]
	Let the initial discrete node constellation $\Phi_{\mathrm{M}}(0)$ be a Poisson point process with intensity measure $\lambda_0(\bm{x})$. If the constituent nodes undergo independent deterministic displacements over time $t$, governed by the macroscopic velocity field $\bm{v}(\bm{x},t)$, the advected point process $\Phi_{\mathrm{M}}(t)$ remains strictly a Poisson point process for all $t > 0$. Furthermore, its intensity measure $\lambda(\bm{x},t)$ is exactly the unique solution to the conservative continuity equation $\partial_t \lambda + \nabla \cdot (\lambda \bm{v}) = 0$.
\end{Lemma}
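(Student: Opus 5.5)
The plan is to make the displacement precise as a flow map and then combine two classical facts: the Mapping (displacement) Theorem for Poisson processes, and the Lagrangian representation of the continuity equation. I assume $\bm{v}$ is regular enough (say $C^1$ in $\bm{x}$, continuous in $t$, globally Lipschitz or with linear growth) that the characteristic ODE $\dot{\bm{X}} = \bm{v}(\bm{X},t)$, $\bm{X}(0)=\bm{x}_0$, has a unique global solution. This defines the flow map $T_t:\mathbb{R}^d\to\mathbb{R}^d$, $\bm{x}_0 \mapsto \bm{X}(t;\bm{x}_0)$. Each node $\bm{Y}_j \in \Phi_{\mathrm{M}}(0)$ is moved to $T_t(\bm{Y}_j)$. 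Because the same deterministic map is applied to every point, the displacements are trivially independent given the initial configuration, and $\Phi_{\mathrm{M}}(t) = T_t(\Phi_{\mathrm{M}}(0))$.

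\textbf{Step 1 (Poisson property).} Here I invoke the Mapping Theorem (Kingman). If $\Phi$ is Poisson with intensity measure $\Lambda_0$, and $T$ is measurable with $\Lambda_0\circ T^{-1}$ non-atomic on points and locally finite, then $T(\Phi)$ is Poisson with intensity measure $\Lambda_0\circ T^{-1}$. I will verify this through the probability generating functional. For a test function $f$,
\[
\mathbb{E}\Big[\prod_{\bm{y}\in T_t(\Phi)} f(\bm{y})\Big] = \exp\Big(-\int (1-f(T_t(\bm{x})))\lambda_0(\bm{x})\,\mathrm{d}\bm{x}\Big),
\]
and this is exactly the PGFL of a PPP with pushforward intensity. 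Since $T_t$ is a diffeomorphism, the pushforward has no atoms and is locally finite, so simplicity is preserved. If one prefers genuinely random independent displacements, the Displacement Theorem gives the same conclusion, with the kernel as a Dirac mass at $T_t(\bm{x})$.

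\textbf{Step 2 (intensity solves continuity).} By the change of variables formula, the pushforward density is
\[
\lambda(\bm{x},t) = \lambda_0(T_t^{-1}\bm{x})\,\big|\det \nabla T_t^{-1}(\bm{x})\big|.
\]
I then show this solves the continuity equation weakly. For $\varphi\in C_c^\infty$,
\[
\frac{\mathrm{d}}{\mathrm{d}t}\int\varphi\lambda\,\mathrm{d}\bm{x} = \frac{\mathrm{d}}{\mathrm{d}t}\int\varphi(T_t\bm{x}_0)\lambda_0(\bm{x}_0)\,\mathrm{d}\bm{x}_0 = \int \nabla\varphi\cdot\bm{v}\,\lambda\,\mathrm{d}\bm{x},
\]
which is the weak form of $\partial_t\lambda+\nabla\cdot(\lambda\bm{v})=0$. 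Liouville's formula $\partial_t \det\nabla T_t = (\nabla\cdot\bm{v})\det\nabla T_t$ upgrades this to a classical solution when $\lambda_0\in C^1$.

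\textbf{Step 3 (uniqueness).} This is the main obstacle, since the lemma claims the solution is \emph{the unique} one. Uniqueness of the linear transport equation relies on the regularity of $\bm{v}$. Under Lipschitz $\bm{v}$, I would use the method of characteristics or a duality argument: solve the backward adjoint transport equation $\partial_t\varphi+\bm{v}\cdot\nabla\varphi=0$ with smooth terminal data, and test the difference of two solutions against it to get zero. For less regular fields, one would cite DiPerna--Lions theory, which requires Sobolev $\bm{v}$ with bounded divergence. I will state the Lipschitz hypothesis explicitly as the standing assumption, since it is exactly what makes both the flow map and uniqueness well-defined.
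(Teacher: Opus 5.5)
Your proposal is correct and follows the same route as the paper: apply the Mapping (Displacement) Theorem to the flow map $\mathcal{T}_t$ generated by $\bm{v}$ to preserve the Poisson property, then use conservation of mass along Lagrangian trajectories to show that the pushforward intensity satisfies $\partial_t \lambda + \nabla\cdot(\lambda\bm{v})=0$. Your version is more complete than the paper's sketch, which asserts uniqueness without argument and leaves the regularity of $\bm{v}$ implicit; your PGFL computation, weak formulation, and duality argument under a Lipschitz hypothesis supply exactly those missing pieces.
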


\begin{proof}
	This follows directly from the Mapping Theorem (or Displacement Theorem) of Poisson point processes \cite{Baccelli2009}. Let $\mathcal{T}_t: \mathbb{R}^d \to \mathbb{R}^d$ denote the deterministic flow map generated by the velocity field $\bm{v}(\bm{x},t)$, such that the trajectory of a node $\bm{x}_i(t)$ satisfies $\frac{\mathrm{d}\bm{x}_i}{\mathrm{d}t} = \bm{v}(\bm{x}_i, t)$. Since the mapping $\mathcal{T}_t$ applies independently to each point in $\Phi_{\mathrm{M}}(0)$ and ensures no atom accumulation (under a smooth velocity field), the mapped process $\Phi_{\mathrm{M}}(t) = \{\mathcal{T}_t(\bm{x}_i) : \bm{x}_i \in \Phi_{\mathrm{M}}(0)\}$ preserves complete spatial randomness. The conservation of probability measure along the Lagrangian trajectories guarantees that the transformed intensity $\lambda(\bm{x},t)$ rigorously satisfies the continuity equation.
\end{proof}

\begin{remark}%[Field-Measure Isomorphism and PGFL Restart]
	Lemma \ref{Lemma_Displacement} provides the indispensable mathematical foundation for the F-STSG framework. It proves that solving the macroscopic partial differential equation (PDE) for fluid density is structurally equivalent to deriving the intensity measure of the evolved stochastic geometry. Consequently, the physical continuous fluid density is legally isomorphic to the probabilistic spatial Poisson intensity. This isomorphism allows us to legitimately restart the probability generating functional (PGFL) and Laplace functional analysis at any non-stationary snapshot $t$, utilizing the PDE solution $\lambda(\bm{x},t)$ as the exact integral kernel for microscopic interference computation.
\end{remark}

\begin{remark}%Robustness to Microscopic Turbulence and Stochastic Kinematics
	The preservation of the PPP property established in Lemma~\ref{Lemma_Displacement} is not limited to strictly deterministic flow fields. According to the Independent Random Translation theorem for Poisson processes, if individual nodes exhibit microscopic stochastic deviations (e.g., Brownian turbulence $\bm{v}'$ introduced in Section~\ref{sec:turbulence}) superimposed on the macroscopic drift $\bar{\bm{v}}$, the evolved network remains strictly a PPP, provided these deviations are mutually independent. In such turbulent regimes, the continuity equation governing the intensity measure $\lambda(\bm{x},t)$ naturally generalizes to a Fokker-Planck (advection-diffusion) equation, flawlessly maintaining the field-measure isomorphism without breaking the analytical tractability of the Laplace functional.
\end{remark}
\subsection{Topological Dynamics: Non-Separable Intensity Fields}\label{sec:model_engine} % S2.3

The structural properties of the intensity field $\lambda(\bm{x},t)$ dictate the complexity of the network dynamics. We distinguish between two fundamental classes based on the coupling between spatial geometry and temporal evolution.

\begin{Definition} % Def.2
	A spatiotemporal field is separable if it admits the multiplicative factorization $\lambda(\bm{x},t) = f(\bm{x})g(t)$. Mathematically, this implies that the spatial gradient of the temporal logarithmic derivative vanishes identically over the entire domain:
	\begin{equation} % eq.6
		\nabla_{\bm{x}} \left( \frac{\partial}{\partial t} \ln \lambda(\bm{x},t) \right) \equiv \bm{0}, \quad \forall (\bm{x},t) \in \mathbb{R}^d \times \mathbb{R}^+.
		\label{eq:sep_condition}
	\end{equation}
	Separable fields model networks with static hotspots that merely undergo uniform amplitude scaling (global pulsation) without spatial displacement or deformation.
\end{Definition}

\begin{Definition} % Def.3
	A field is non-separable if the spatial structure and temporal evolution are inextricably coupled. Formally, we require that the mixed derivative is not identically zero:
	\begin{equation} % eq.7
		\nabla_{\bm{x}} \left( \frac{\partial}{\partial t} \ln \lambda(\bm{x},t) \right) \not\equiv \bm{0}.
		\label{eq:non_sep_condition}
	\end{equation}
	{\color{black}This condition mathematically guarantees that the intensity profile undergoes continuous spatial deformation or translation, serving as a rigorous indicator of macroscopic topological advection rather than mere local amplitude scaling.}
\end{Definition}

To illustrate this classification, consider the canonical time-varying power-law topology, often referred to as the ``Breathing City'' model. The intensity is modeled as a radially symmetric field governed by a mass-conserving power-law decay profile with a time-varying exponent $\beta(t)$:
\begin{equation} % eq.8
	\lambda(r,t) = \Lambda_0(t) (d_0 + r)^{-\beta(t)}, \quad r = \|\bm{x}\|,
\end{equation}
where  $d_0 > 0$ is a regularization parameter to ensure finiteness at the origin, and $\Lambda_0(t)$ is a time-dependent normalization factor uniquely determined by the global mass conservation constraint $\int \lambda(\bm{x},t) \,\mathrm{d}\bm{x} = N(t)$. Applying the non-separability criterion derived above, the mixed derivative yields:
\begin{equation} % eq.9
	\nabla_{\bm{x}} \frac{\partial}{\partial t} \ln \lambda \!=\!  -\dot{\beta}(t) \nabla_{\bm{x}} \ln(d_0 + \|\bm{x}\|)\! =\! - \frac{\dot{\beta}(t)}{d_0 + \|\bm{x}\|} \frac{\bm{x}}{\|\bm{x}\|}.
\end{equation}
This vector field is non-vanishing for any dynamic shape evolution $\dot{\beta}(t) \neq 0$. This confirms that a time-varying spatial decay generates a rigorous macroscopic velocity field, necessitating the field-theoretic approach developed in Section~\ref{sec:core_theory}.

\subsection{Dynamic State Process (Fluid-Infrastructure Coupling)}\label{sec:model_coupling} % S2.4

A distinguishing feature of the F-STSG framework is its ability to model the interaction between the continuous intensity field $\lambda(\bm{x},t)$ and the discrete network infrastructure $\Phi_{\mathrm{B}}$. This interaction is modeled via the dynamic network state process.

Let $\Psi(t) \subseteq \Phi_{\mathrm{B}}$ denote the subset of active infrastructure nodes at time $t$. To capture heterogeneous dynamics, we decompose this set into tier-specific active subsets $\Psi(t) = \bigcup_{k=1}^K \Psi_k(t)$, where $\Psi_k(t) \subseteq \Phi_k$. The activation of on-demand nodes is modeled as a functional of the underlying fluid field. Specifically, for a node in tier $k$, the local activation probability $p_{a,k}(\bm{x},t)$ is driven by the local fluid intensity $\lambda(\bm{x},t)$. This establishes a field-measure coupling, where the continuous demand field governs the discrete infrastructure realization.
Given a realization of the active set $\Psi(t)$, a mobile node at $\bm{x}$ associates with the network based on the maximum biased received power. Consistent with the notation in Section~\ref{sec:model_skeleton}, let $\mathcal{K}(i)$ denote the tier index of node $\bm{X}_i$. The serving node $\bm{X}^*$ is given by:
\begin{equation} % eq.10
	\bm{X}^* = \arg \max_{\bm{X}_i \in \Psi(t)} \left( P_{\mathcal{K}(i)} \mathcal{W}_{\mathcal{K}(i)} \ell_{\mathcal{K}(i)}(\|\bm{x} - \bm{X}_i\|) \right),
	\label{eq:assoc_rule}
\end{equation}
where $P_k$, $\mathcal{W}_k$, and $\ell_k(r)$ denote the transmit power, association bias weight, and path loss for tier $k$, respectively. This general rule captures load balancing and tier prioritization within the dynamic topology.

\subsection{Instantaneous Capacity Field and Information Outage}\label{sec:model_phy} % S2.5

The coupling between the continuous mobile node distribution and the discrete infrastructure induces a dynamic Shannon capacity field, denoted as $C(\bm{x},t)$. Consider a typical mobile node located at $\bm{x}_0 \in \mathbb{R}^d$ at time $t$. Conditioned on the active infrastructure realization $\Psi(t)$ and the association result $\bm{X}^*$, the instantaneous channel capacity (in bits/s/Hz) is given by:
\begin{equation} % eq.11
	C(\bm{x}_0, t) = \log_2(1 + \text{SINR}(\bm{x}_0, t)).
	\label{eq:capacity_def}
\end{equation}
Here, the SINR acts as the interface between the network geometry and information theory. Let $k^* = \mathcal{K}(\bm{X}^*)$ denote the index of the serving tier. The SINR is defined as:
\begin{equation} % eq.12
	\text{SINR}(\bm{x}_0, t) = \frac{P_{k^*} h_{\bm{X}^*} \ell_{k^*}(\|\bm{x}_0 - \bm{X}^*\|)}{I(\bm{x}_0, t) + \sigma^2},
	\label{eq:sinr_def}
\end{equation}
where $h_{\bm{X}^*}$ represents the random channel power gain (e.g., Rayleigh fading), $\sigma^2$ is the additive noise power, and $I(\bm{x}_0, t)$ is the aggregate interference field generated by all other active nodes:
\begin{equation} % eq.13
	I(\bm{x}_0, t) = \sum_{\bm{X}_j \in \Psi(t) \setminus \{\bm{X}^*\}} P_{\mathcal{K}(j)} h_j \ell_{\mathcal{K}(j)}(\|\bm{x}_0 - \bm{X}_j\|).
\end{equation}
Note that the summation is performed over $\Psi(t) = \bigcup_{k=1}^K \Psi_k(t)$, rigorously capturing inter-tier interference dynamics.

The primary performance metric is the instantaneous network coverage probability, $P_{\mathrm{cov}}(t)$. Unlike static analysis where user locations are typically assumed to be uniform, here the receiver location $\bm{x}$ is drawn from the time-varying intensity field $\lambda(\bm{x},t)$.
Mathematically, under the adiabatic assumption (see Section~\ref{sec:network_metrics}), $P_{\mathrm{cov}}(t)$ is defined as the spatial average of the conditional probability that the SINR exceeds the tier-specific threshold. Let $\gamma_k$ denote the SINR threshold for tier $k$, corresponding to a rate target $R_k = \log_2(1+\gamma_k)$. The coverage probability is:
\begin{equation} % eq.14
	P_{\mathrm{cov}}(t) \!\triangleq \!\mathbb{E}_{\bm{x} \sim p(\cdot,t)} \!\left[ \mathbb{P}_{h}\left( \text{SINR}(\bm{x}, t) > \gamma_{\mathcal{K}(\bm{X}^*)} \mid \Psi(t) \! \right)\! \right],
	\label{eq:pcov_def_rigorous}
\end{equation}
where the inner probability $\mathbb{P}_{h}(\cdot)$ is taken over the realizations of the small-scale fading gains $\{h_i\}$, conditioned on the active node locations $\Psi(t)$ with the normalized spatial PDF of the mobile nodes given by $p(\bm{x},t) = \lambda(\bm{x},t) / \int \lambda \,\mathrm{d}\bm{x}$. This definition explicitly captures the field-measure coupling: the network performance is weighted by the local demand density rather than the geometric area.

%======================================================================
%=====================Section III=======================================
%======================================================================
\section{Field Reconstruction: The Inverse Boundary Value Problem}\label{sec:core_theory} % S3

In the hydrodynamic limit established in Section~\ref{sec:system_model}, the network topology is fully characterized by the continuous intensity field $\lambda(\bm{x},t)$. While $\lambda(\bm{x},t)$ is the observable state, the underlying driver of network evolution is the macroscopic transport of demand. The theoretical challenge addressed in this section is the identification of these latent transport dynamics. We formulate this as a mathematical inverse problem: reconstructing the vector velocity field $\bm{v}(\bm{x},t)$ solely from the scalar density evolution $\lambda(\bm{x},t)$.

\vspace*{-2mm}
\subsection{Conservation Laws and Underdetermined System}\label{sec:continuity_equation} % S3.1
\vspace*{-1mm}

The fundamental link between topological evolution and transport dynamics is governed by the conservation of mass. For a conservative node constellation characterized by a vanishing source term $S(\bm{x},t)$, i.e., $S(\bm{x},t)=0$, where $S(\bm{x},t)$ represents the local rate of demand generation or depletion, the local rate of change in intensity is balanced strictly by the divergence of the flux vector:
\begin{equation} % eq.15
	\frac{\partial \lambda(\bm{x},t)}{\partial t} + \nabla \cdot (\lambda(\bm{x},t) \bm{v}(\bm{x},t)) = 0.
	\label{eq:continuity_conservative}
\end{equation}

Assuming that the per-node information demand is statistically homogenous, the conservation of node mass implies the conservation of aggregate information demand. Thus, $\lambda(\bm{x},t)$ serves as the macroscopic descriptor for the information density state.

The application of this continuum law to node kinematics is not a mere heuristic analogy but a rigorous mathematical consequence of the thermodynamic limit established in Section~\ref{sec:model_user_population}. Crucially, since individual mobile nodes follow continuous trajectories without discontinuous jumps, i.e., no teleportation, the macroscopic evolution of the intensity field must satisfy mass conservation. Eq.~\eqref{eq:continuity_conservative} does not imply that nodes interact like physical fluid particles via mechanisms such as viscosity; rather, it asserts that the temporal variation of the local node intensity is mathematically equivalent to the divergence of a transport flux. Thus, the fluid description is a statistical necessity of large-scale, continuous mobility.

It is recognized that total load dynamics are driven by two distinct mechanisms: local generation via birth-death processes ($S \neq 0$) and spatial advection via mobility-induced transport ($-\nabla \cdot (\lambda \bm{v})$). The F-STSG framework deliberately focuses on the conservative regime where $S=0$, based on the principle of mechanistic decoupling:
\begin{itemize}
	\item Generation dynamics constitute a scalar, local modulation of intensity, primarily impacting capacity dimensioning.
	\item Advection dynamics constitute a vector, non-local reconfiguration of the topology.
\end{itemize}
By isolating the conservative component, we strictly characterize the transport-induced limits of the network—specifically the Information Flux—distinct from static queuing stability.

Mathematically, \eqref{eq:continuity_conservative} represents a single scalar constraint on the $d$-dimensional vector field $\bm{v}(\bm{x},t)$. Consequently, the problem of determining $\bm{v}$ from $\lambda$ is ill-posed, as the solution is non-unique up to an arbitrary solenoidal field where a vector field $\bm{u}$ satisfying $\nabla \cdot (\lambda \bm{u}) = 0$.

\begin{remark} % Rem.2
	It is important to differentiate the F-STSG framework from classical macroscopic mobility models found in vehicular traffic flow theory, exemplified by the LWR model.
	\begin{enumerate}
		\item \textbf{Inverse vs. Forward Formulation}: Traffic theory typically solves a forward problem by prescribing a constitutive relation between velocity and density (e.g., a fundamental diagram where velocity decreases with congestion) to predict density evolution. In contrast, F-STSG solves the inverse problem by treating the non-separable intensity evolution $\lambda(\bm{x},t)$ as the observable input to reconstruct the latent minimum-energy velocity field necessary to satisfy conservation laws.
		\item \textbf{Field-Measure Coupling}: Traffic models typically focus solely on the hydrodynamics of the fluid. F-STSG, in contrast, is fundamentally concerned with the coupling between this continuous fluid representing information demand and the discrete static random measure modeling the network infrastructure. This coupling allows for the derivation of information-theoretic metrics, such as SINR coverage and topological entropy, which are inherently absent in pure flow mechanics.
	\end{enumerate}
\end{remark}

To render this inverse problem well-posed, we require a rigorous regularization condition based on the kinematics of the Digital Tide.

\subsection{Regularization via Optimal Transport Theory}\label{sec:problem_formulation} % S3.2

Solving the conservative continuity equation \eqref{eq:continuity_conservative} for the vector field $\bm{v}(\bm{x},t)$ poses a fundamental mathematical challenge: the equation is underdetermined. The single scalar constraint imposed by mass conservation is insufficient to uniquely define a $d$-dimensional vector field.
To resolve this ill-posedness rigorously, we seek the specific solution that is physically most meaningful.

\subsubsection{Helmholtz Decomposition and Inverse Problem}
According to the Helmholtz decomposition theorem  \cite{Batchelor1967}, any sufficiently smooth vector field $\bm{v}(\bm{x},t)$ can be decomposed into a curl-free component and a divergence-free component:
\begin{equation} % eq.16
	\bm{v} = \bm{v}_{\mathrm{irr}} + \bm{v}_{\mathrm{rot}} = -\nabla \phi + \nabla \times \bm{A} .
	\label{eq:helmholtz}
\end{equation}
where $\phi$ denotes the scalar potential (associated with the irrotational component $\bm{v}_{\mathrm{irr}}$), $\bm{A}$ denotes the vector potential (associated with the solenoidal component $\bm{v}_{\mathrm{rot}}$), and $\nabla \times$ represents the curl operator.

Substituting this into the continuity equation $\partial_t \lambda + \nabla \cdot (\lambda \bm{v}) = 0$ yields:
\begin{equation} % eq.17
	\frac{\partial \lambda}{\partial t} = \nabla \cdot (\lambda \nabla \phi) - \nabla \cdot (\lambda \bm{v}_{\mathrm{rot}}).
	\label{eq:expanded_continuity}
\end{equation}
The term $\nabla \cdot (\lambda \bm{v}_{\mathrm{rot}}) = \bm{v}_{\mathrm{rot}} \cdot \nabla \lambda$ represents advective mixing. While this term can contribute to local density variations, it does not represent the net transport of mass required to resolve the source-sink ambiguity. {\color{black}The continuity equation admits infinite solutions. To select the unique physical solution and resolve this ill-posedness, we formulate a constrained variational problem. Specifically, we isolate the component responsible for the efficient compressive transport of the ``Digital Tide'' by invoking the principle of minimum instantaneous kinetic energy.}

\subsubsection{The Minimum-Energy Solution}
We isolate the specific component responsible for the efficient compressive transport of the ``Digital Tide''.

\begin{Proposition}\label{Prop_Optimal_Transport} %[Minimum Energy Transport] % Pro.1
	Among all possible velocity fields $\bm{v}$ that satisfy the continuity equation for a given density evolution $\partial_t \lambda$, the irrotational field $\bm{v}^* = -\nabla \phi$ is the unique solution that minimizes the instantaneous kinetic energy of the network flow, defined as:
	\begin{equation} % eq.18
		E(t) = \int_{\mathbb{R}^d} \lambda(\bm{x},t) \|\bm{v}(\bm{x},t)\|^2 \,\mathrm{d}\bm{x}.
	\end{equation}
\end{Proposition}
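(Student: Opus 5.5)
The approach is the standard Benamou--Brenier/Otto argument: treat the flux $\bm{J}=\lambda\bm{v}$ as the unknown, so the admissible set is affine. At a fixed instant $t$ the constraint $\nabla\cdot(\lambda\bm{v}) = -\partial_t\lambda$ is linear in $\bm{v}$. The energy $E(t)=\int\lambda\|\bm{v}\|^2\,\mathrm{d}\bm{x}$ is the squared norm of $\bm{v}$ in the weighted Hilbert space $L^2_\lambda(\mathbb{R}^d;\mathbb{R}^d)$. The problem is therefore to minimize a strictly convex quadratic over the affine set $\mathcal{A}=\{\bm{v}:\nabla\cdot(\lambda\bm{v})=-\partial_t\lambda\}$. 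Its unique minimizer is the element of $\mathcal{A}$ orthogonal, in $L^2_\lambda$, to the linear space $\mathcal{N}=\{\bm{u}:\nabla\cdot(\lambda\bm{u})=0\}$ of weighted-solenoidal perturbations identified at the end of Section~\ref{sec:continuity_equation}.

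\textbf{Step 1: existence of a gradient element of $\mathcal{A}$.} Take $\bm{v}^*=-\nabla\phi$, where $\phi$ solves the weighted elliptic equation $\nabla\cdot(\lambda\nabla\phi)=\partial_t\lambda$. This is the equation obtained from \eqref{eq:expanded_continuity} with $\bm{v}_{\mathrm{rot}}=\bm{0}$. Existence and uniqueness up to an additive constant come from Lax--Milgram on the weighted Sobolev space $\dot H^1_\lambda$, which is the well-posedness result the paper defers to Theorem~\ref{Theorem1}. The compatibility condition is $\int\partial_t\lambda\,\mathrm{d}\bm{x}=\dot N(t)=0$, which holds in the conservative regime $S=0$. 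The solution must also satisfy a natural no-flux condition $\lambda\partial_n\phi\to0$ at infinity or on $\partial\Omega$.

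\textbf{Step 2: orthogonality and strict minimality.} Write any admissible $\bm{v}=\bm{v}^*+\bm{u}$ with $\bm{u}\in\mathcal{N}$. Expanding gives
\begin{equation*}
E[\bm{v}] = E[\bm{v}^*] + \int\lambda\|\bm{u}\|^2\,\mathrm{d}\bm{x} - 2\int\lambda\,\nabla\phi\cdot\bm{u}\,\mathrm{d}\bm{x}.
\end{equation*}
Integrate the cross term by parts:
\begin{equation*}
\int\lambda\,\bm{u}\cdot\nabla\phi\,\mathrm{d}\bm{x} = -\int\phi\,\nabla\cdot(\lambda\bm{u})\,\mathrm{d}\bm{x} + \text{boundary terms}.
\end{equation*}
The volume term vanishes because $\bm{u}\in\mathcal{N}$. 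Hence $E[\bm{v}]=E[\bm{v}^*]+\|\bm{u}\|^2_{L^2_\lambda}\ge E[\bm{v}^*]$, with equality iff $\bm{u}=0$ $\lambda$-a.e. This gives both minimality and uniqueness on $\{\lambda>0\}$. Equivalently, $\mathcal{N}^{\perp}$ in $L^2_\lambda$ is exactly the closure of gradients; this is the weighted Helmholtz decomposition, which should replace the unweighted one in \eqref{eq:helmholtz}. As a cross-check, a Lagrange multiplier computation gives the same answer: stationarity of $\int\lambda\|\bm{v}\|^2+2\int\phi\,(\partial_t\lambda+\nabla\cdot(\lambda\bm{v}))$ forces $\lambda\bm{v}=\lambda\nabla\phi$, so $\bm{v}$ is a gradient.

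\textbf{Main obstacle: the boundary and decay terms.} These come from the integration by parts in Step 2 and from the functional setting in Step 1. On $\mathbb{R}^d$ I would require $\phi\lambda\bm{u}\cdot\bm{n}\to0$ on large spheres. This is ensured by working in $\dot H^1_\lambda$ with finite-energy competitors and, e.g., $\lambda$ bounded below on compacts with sufficient decay, as in the power-law profile with $d_0>0$. On a bounded $\Omega$ I would impose the no-flux condition $\lambda\bm{v}\cdot\bm{n}=0$ on all competitors. Uniqueness should also be stated only modulo modifications on $\{\lambda=0\}$, where the energy does not see $\bm{v}$.
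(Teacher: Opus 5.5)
Your proof is correct, but it runs in the opposite logical direction from the paper's. The paper (Appendix~\ref{app:proof_prop1}) forms the Lagrangian $\int[\frac12\lambda\|\bm{v}\|^2-\phi(\nabla\cdot(\lambda\bm{v})+\partial_t\lambda)]\,\mathrm{d}\bm{x}$ and integrates by parts under the no-flux condition $\lambda\bm{v}\cdot\mathbf{n}=0$. From the first variation $\int\lambda(\bm{v}+\nabla\phi)\cdot\delta\bm{v}\,\mathrm{d}\bm{x}=0$ it reads off that any constrained critical point is a gradient; this is exactly your ``cross-check''. That is a \emph{necessity} argument. It identifies the form a minimizer must take and produces $\phi$ as the multiplier. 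By itself it does not show that the critical point is a minimum, or that it is unique. Both facts rest tacitly on strict convexity of $E$ over the affine constraint set, and on Theorem~\ref{Theorem1} to determine $\phi$ from the constraint.

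Your route is a \emph{sufficiency} argument. You first build the gradient competitor from the weighted Neumann problem; Theorem~\ref{Theorem1} does not depend on this Proposition, so there is no circularity. You then prove the exact identity $E[\bm{v}]=E[\bm{v}^*]+\|\bm{u}\|^2_{L^2_\lambda}$ for every admissible $\bm{v}=\bm{v}^*+\bm{u}$. This gives global minimality and $\lambda$-a.e.\ uniqueness in one step, with no constraint-qualification or second-variation discussion.

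The price has two parts. First, you need solvability of the elliptic problem up front. Second, you must eliminate the boundary term $\int_{\partial\Omega}\phi\,\lambda\bm{u}\cdot\mathbf{n}\,\mathrm{d}S$, which requires imposing no-flux on every competitor; the paper relies on the same hypothesis without comment. On $\mathbb{R}^d$ you could even avoid the elliptic existence question: the Proposition presupposes that an admissible field exists, so you can take its $L^2_\lambda$-projection onto your $\mathcal{N}^{\perp}$.

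Your remark that the right splitting is the $\lambda$-weighted Helmholtz decomposition is a genuine sharpening. That splitting separates gradients from fields with $\nabla\cdot(\lambda\bm{u})=0$, rather than using the unweighted decomposition in \eqref{eq:helmholtz}. The unweighted solenoidal part generally changes $\nabla\cdot(\lambda\bm{v})$, by $\bm{v}_{\mathrm{rot}}\cdot\nabla\lambda$, and is not $L^2_\lambda$-orthogonal to gradients. The weighted notion is also the one the paper's own numerical test in Section~\ref{sec:sim_velocity} actually uses.
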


\begin{proof}
See Appendix~\ref{app:proof_prop1}.
\end{proof}

	\begin{remark}%[Physical Interpretation and Scope of Applicability] % Rem.3
		It is crucial to distinguish the reconstructed velocity field $\hat{\bm{v}}$ from the microscopic mobility traces of individual users. 
		The actual trajectory of a user may contain significant rotational components (vorticity) or random meanderings. 
		The F-STSG framework projects this complex reality onto a canonical minimum-energy manifold. 
		This projection implies a macroscopic filtering mechanism validated by the Helmholtz decomposition:
		\begin{itemize}
			\item \textit{Filtration of Stationarity:} If the real rotation is topologically stationary, i.e., movement along density isolines ($\bm{v}_{\mathrm{rot}} \perp \nabla \lambda$), it consumes kinetic energy but does not alter the capacity demand distribution. F-STSG filters this out as kinematic noise.
			\item \textit{Equivalent Transport:} If the real rotation induces advection (transporting mass across gradients), F-STSG captures the net mass transport and represents it as an equivalent irrotational flux $\hat{\bm{v}}$ that generates the identical density evolution $\partial_t \lambda$.
		\end{itemize}
		Therefore, $\hat{\bm{v}}$ represents the thermodynamic lower bound of the transport cost. The framework is strictly applicable for macroscopic resource orchestration---where the objective is to quantify the net transfer of capacity demand---rather than for microscopic user tracking.
	\end{remark}

Substituting the optimal solution $\bm{v} = -\nabla \phi$ into \eqref{eq:continuity_conservative}, we derive the governing Poisson-like PDE for the flow potential:
\begin{equation} % eq.19
	\frac{\partial \lambda(\bm{x},t)}{\partial t} = \nabla \cdot \left( \lambda(\bm{x},t) \nabla \phi(\bm{x},t) \right).
	\label{eq:conservative_pde}
\end{equation}
This equation forms the theoretical core of our framework, establishing a deterministic link between the observable density evolution and the latent scalar potential.

%========================================================
\subsection{Existence and Uniqueness of Scalar Potential}\label{sec:analytical_solution} % S3.3

The flow potential equation \eqref{eq:conservative_pde} derived above constitutes the governing law for the ``Digital Tide'' in the conservative regime characterized by a vanishing source term $S(\bm{x},t)=0$. This equation represents a linear elliptic PDE for the potential $\phi(\bm{x},t)$ at each time instant $t$. We now establish the mathematical well-posedness of this field reconstruction problem.

{\color{black}While the fundamental infrastructure point processes are formulated on the whole space $\mathbb{R}^d$ to avoid boundary edge effects in interference analysis, practical network orchestration and fluid tracking operate within finite geographical boundaries (e.g., a metropolitan area or a specific control zone). To rigorously establish the well-posedness of the flow reconstruction, we restrict our kinematic analysis to a bounded operational domain where the user demand does not completely vanish.
}

\begin{Theorem}\label{Theorem1} %[Existence and Uniqueness]
	Let $\Omega \subset \mathbb{R}^d$ be a bounded domain with smooth boundary $\partial \Omega$, and $\lambda(\bm{x},t)$ be a strictly positive, smooth spatiotemporal intensity field satisfying the global mass conservation condition $\frac{\mathrm{d}}{\mathrm{d}t}\int_{\Omega}\lambda \,\mathrm{d}\bm{x} = 0$.
	Consider the Neumann boundary value problem for the scalar potential $\phi(\bm{x},t)$:
	\begin{align} % eqs.20,21
		\nabla \cdot (\lambda(\bm{x},t) \nabla \phi(\bm{x},t)) &= \frac{\partial \lambda(\bm{x},t)}{\partial t} \quad \text{in } \Omega, \label{eq:poisson_general} \\
		\frac{\partial \phi}{\partial n} &= 0 \quad \text{on } \partial \Omega. \label{eq:neumann_bc}
	\end{align}
	The boundary condition \eqref{eq:neumann_bc} imposes a no-flux constraint ($\bm{v} \cdot \mathbf{n} = 0$), ensuring that the system is closed. Under these conditions, there exists a solution $\phi(\bm{x},t)$ that is unique up to an additive constant. Consequently, the macroscopic velocity field $\bm{v} = -\nabla \phi$ is uniquely determined.
\end{Theorem}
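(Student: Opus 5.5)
The plan is to treat \eqref{eq:poisson_general}--\eqref{eq:neumann_bc} at each fixed $t$ as a uniformly elliptic Neumann problem in divergence form, and to apply the Lax--Milgram theorem (equivalently, the Fredholm alternative) on the quotient space of $H^1(\Omega)$ modulo constants. First, fix $t$ and set $a(\bm{x}) = \lambda(\bm{x},t)$ and $f(\bm{x}) = \partial_t \lambda(\bm{x},t)$. Since $\lambda$ is smooth and strictly positive on the compact set $\bar\Omega$, there are constants $0<\lambda_{\min}\le a\le\lambda_{\max}<\infty$, so the operator is uniformly elliptic with bounded coefficients. Multiplying by a test function $w$, integrating by parts, and using $\partial_n\phi=0$ gives the weak formulation: find $\phi\in H^1(\Omega)$ with
\begin{equation*}
B(\phi,w)\triangleq\int_\Omega a\,\nabla\phi\cdot\nabla w\,\mathrm{d}\bm{x} = -\int_\Omega f\,w\,\mathrm{d}\bm{x}\quad \forall w\in H^1(\Omega).
\end{equation*}

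Second, verify the compatibility condition. Taking $w\equiv 1$ forces $\int_\Omega f\,\mathrm{d}\bm{x}=0$. This is exactly the hypothesis $\frac{\mathrm{d}}{\mathrm{d}t}\int_\Omega\lambda\,\mathrm{d}\bm{x}=0$, after differentiating under the integral, which is legitimate by smoothness on the bounded domain. Consequently the right-hand side is a bounded linear functional on $V=\{w\in H^1(\Omega):\int_\Omega w=0\}$, and moreover the weak identity then holds for all $w\in H^1$ once it holds on $V$, because constants contribute nothing to either side.

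Third, and this is the main technical point, establish coercivity of $B$ on $V$. We have $B(w,w)\ge\lambda_{\min}\|\nabla w\|_{L^2}^2$, and the Poincar\'e--Wirtinger inequality $\|w\|_{L^2}\le C_P\|\nabla w\|_{L^2}$ on $V$ holds for a bounded connected Lipschitz domain; this is where smoothness and boundedness of $\Omega$, together with an implicit assumption that the domain is connected, are used. Together these give $B(w,w)\ge c\|w\|_{H^1}^2$. Lax--Milgram then yields a unique $\phi\in V$, and every solution in $H^1$ is $\phi+c$.

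For uniqueness modulo constants, the difference $\psi$ of two solutions satisfies $B(\psi,\psi)=0$, so $\lambda_{\min}\|\nabla\psi\|^2=0$ and $\psi$ is constant on the connected $\Omega$. Hence $\bm{v}=-\nabla\phi$ is unique. Finally, elliptic regularity for Neumann problems with smooth coefficients, smooth data and smooth boundary upgrades $\phi$ to a classical $C^\infty(\bar\Omega)$ solution, so the boundary condition holds pointwise and $\bm{v}\cdot\mathbf{n}=0$. If one also wants smoothness in $t$, it follows by differentiating the equation in $t$ and applying the same estimates, but that is not needed for the statement.
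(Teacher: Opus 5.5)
Your proof is correct. Its existence half follows the same idea as the paper, made more explicit, while the uniqueness half takes a different route.

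\textbf{Existence.} The paper invokes the Fredholm alternative for the self-adjoint Neumann operator $\nabla\cdot(\lambda\nabla)$, identifies the adjoint kernel with the constants, and checks that $\int_\Omega \partial_t\lambda\,\mathrm{d}\bm{x}=0$ by mass conservation. You realize the same compatibility condition through Lax--Milgram on the mean-zero subspace of $H^1(\Omega)$, with Poincar\'e--Wirtinger supplying coercivity. Your version names the function space and shows where the hypotheses enter. You need the uniform lower bound $\lambda_{\min}>0$ on $\bar\Omega$; the paper's ``uniformly elliptic given $\lambda>0$'' glosses over this, since pointwise positivity in the open set $\Omega$ alone would not suffice. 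You also need connectedness of $\Omega$, which is implicit in the word ``domain'' and is what makes the kernel exactly the constants.

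\textbf{Uniqueness.} The paper appeals to the maximum principle. For a Neumann problem this requires a classical solution in $C^2(\Omega)\cap C^1(\bar\Omega)$ together with Hopf's boundary lemma. The interior strong maximum principle alone does not rule out a nonconstant solution attaining its maximum on $\partial\Omega$. So the paper's argument tacitly relies on the regularity you establish only at the end. Your energy identity $\int_\Omega \lambda\,\|\nabla\psi\|^2\,\mathrm{d}\bm{x}=0$ works directly for weak $H^1$ solutions and needs no boundary lemma. The paper's route is shorter by citation; yours is self-contained and makes the final regularity step a genuine upgrade rather than a prerequisite.
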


\begin{proof}
	Equation \eqref{eq:poisson_general} is of the form $\mathcal{L}\phi = f$, where $\mathcal{L} = \nabla \cdot (\lambda \nabla)$ is a self-adjoint, uniformly elliptic operator given the condition $\lambda(\bm{x},t) > 0$.
	According to the Fredholm alternative~\cite{Evans2010}, a solution exists if and only if the source term $f = \partial_t \lambda$ is orthogonal to the kernel of the adjoint operator $\mathcal{L}^*$ (which consists of constant functions for Neumann boundaries).
	The solvability condition is thus:
	\begin{equation} % eq.22
		\int_{\Omega} \frac{\partial \lambda(\bm{x},t)}{\partial t} \,\mathrm{d}\bm{x} = \frac{\mathrm{d}}{\mathrm{d}t} \int_{\Omega} \lambda(\bm{x},t) \,\mathrm{d}\bm{x} = 0.
	\end{equation}
	This condition is strictly satisfied by the global conservation of mass assumption. Uniqueness up to an additive constant follows from the maximum principle.
\end{proof}

{\color{black}Although Theorem~\ref{Theorem1} is established for strictly positive fields on bounded domains, the framework accommodates whole-space models on $\mathbb{R}^d$. For spatial profiles where $\lambda \to 0$ as $\|\bm{x}\| \to \infty$, the governing operator $\nabla \cdot (\lambda \nabla)$ exhibits degenerate ellipticity. This singularity is analytically resolved by formulating the problem in the weighted Sobolev space $W^{1,2}_\lambda(\mathbb{R}^d)$. Under the constraint of finite macroscopic kinetic energy, this space guarantees the existence and uniqueness (up to a constant) of the potential $\phi$, consistently bridging the localized fluid kinematics with global point process formulations.}

%While Theorem \ref{Theorem1} guarantees existence for arbitrary topologies, analytical tractability often requires symmetry.

\begin{Corollary}\label{Cor_Symmetric}  % [Canonical Solution for Radially Symmetric Systems] % Cor.1
	For the canonical case of a radially symmetric intensity field $\lambda(r,t)$ on a disk of radius $R_{\max}$, the unique macroscopic radial velocity field $v_r(r,t)$ at radius $r$ admits the following closed-form integral representation:
	\begin{equation} % eq.23
		v_r(r,t) = - \frac{1}{r \lambda(r,t)} \int_0^r \frac{\partial \lambda(\rho, t)}{\partial t} \rho \,\mathrm{d}\rho.	\label{eq:v_solution_integral_form}
	\end{equation}
\end{Corollary}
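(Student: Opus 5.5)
The plan is to specialize the Neumann problem of Theorem~\ref{Theorem1} to the disk $\Omega=\{\|\bm{x}\|<R_{\max}\}$ in $d=2$ with radially symmetric data. We then reduce the elliptic PDE \eqref{eq:poisson_general} to an ODE in $r$ and integrate it once.

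\textbf{Radial solution.} Since $\lambda(r,t)$ and $\partial_t\lambda$ are invariant under rotations, uniqueness up to a constant (Theorem~\ref{Theorem1}) implies that $\phi(\cdot,t)$ is radial as well. Indeed, for any rotation $Q$, the function $\phi(Q\bm{x},t)$ solves the same Neumann problem, so $\phi(Q\bm{x},t)=\phi(\bm{x},t)+c_Q$. Evaluating at the fixed point $\bm{x}=\bm{0}$ gives $c_Q=0$. Hence $\phi=\phi(r,t)$ and $\bm{v}=-\partial_r\phi\,\bm{e}_r$, so $v_r=-\partial_r\phi$. In polar coordinates the operator becomes
\begin{equation*}
\nabla\cdot(\lambda\nabla\phi)=\frac{1}{r}\frac{\partial}{\partial r}\Bigl(r\lambda\frac{\partial\phi}{\partial r}\Bigr)=-\frac{1}{r}\frac{\partial}{\partial r}\bigl(r\lambda v_r\bigr).
\end{equation*}
Equivalently, one can start directly from the continuity equation \eqref{eq:continuity_conservative} with $\bm{v}=v_r\bm{e}_r$.

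\textbf{Integration.} The PDE then reads $\partial_r(r\lambda v_r)=-r\,\partial_t\lambda$. We integrate from $0$ to $r$. Regularity of the smooth solution at the origin, namely boundedness of $v_r$ near $r=0$, forces $r\lambda v_r\to0$ as $r\to0$. This yields $r\lambda(r,t)v_r(r,t)=-\int_0^r\partial_t\lambda(\rho,t)\rho\,\mathrm{d}\rho$. Dividing by $r\lambda>0$ gives \eqref{eq:v_solution_integral_form}.

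\textbf{Consistency with the boundary condition.} It remains to check that the resulting field meets the Neumann condition \eqref{eq:neumann_bc}. At $r=R_{\max}$ the integral equals $\frac{1}{2\pi}\frac{\mathrm{d}}{\mathrm{d}t}\int_\Omega\lambda\,\mathrm{d}\bm{x}$, which vanishes by the mass-conservation hypothesis. Hence $v_r(R_{\max},t)=0$. The constructed $\phi=-\int v_r\,\mathrm{d}r$ therefore solves the full boundary value problem, and by Theorem~\ref{Theorem1} it is the unique solution.

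The main subtlety is the symmetry step, i.e. justifying that $\phi$ has no angular dependence. The fixed-point argument at the origin handles it; without it, a rotational contribution could in principle appear. A minor secondary point is the vanishing of the integration constant at $r=0$. Smoothness gives $\partial_t\lambda=O(1)$, so the integral is $O(r^2)$ and $v_r=O(r)$ near the origin, which settles it.
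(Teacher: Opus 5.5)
Your proposal is correct and follows essentially the same route as the paper. Both reduce the equation in polar coordinates to $\partial_r(r\lambda v_r) = -r\,\partial_t\lambda$, integrate from $0$ to $r$ with vanishing flux at the origin, and divide by $r\lambda$; your additions (the rotation-invariance argument that the unique potential is radial, the regularity justification of $\lim_{r\to 0} r\lambda v_r = 0$, which the paper simply imposes, and the check via mass conservation that $v_r(R_{\max},t)=0$) fill in steps the paper takes for granted.
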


\begin{proof}
	In polar coordinates under radial symmetry, the conservative PDE \eqref{eq:conservative_pde} simplifies to:
	\begin{equation} % eq.24
		\frac{1}{r}\frac{\partial}{\partial r}(r \lambda v_r) = -\frac{\partial \lambda}{\partial t}.
	\end{equation}
	Multiplying by $r$ and integrating from $0$ to $r$:
	\begin{equation}\label{eqCor1} % eq.25
		\int_0^r \frac{\partial}{\partial \rho} (\rho \lambda v_\rho) \,\mathrm{d}\rho = - \int_0^r \rho \frac{\partial \lambda}{\partial t} \,\mathrm{d}\rho.
	\end{equation}
	Applying the Fundamental Theorem of Calculus to the left hand side (LHS) of (\ref{eqCor1}) subject to the non-singular boundary condition $\lim_{r\to 0} r\lambda v_r = 0$:
	\begin{equation} % eq.26
		r \lambda(r,t) v_r(r,t) = - \int_0^r \frac{\partial \lambda(\rho, t)}{\partial t} \rho \,\mathrm{d}\rho.
	\end{equation}
	Dividing the above result by $r\lambda(r,t)$ yields (\ref{eq:v_solution_integral_form}).
\end{proof}

Theorem~\ref{Theorem1} guarantees that the macroscopic velocity field is physically well-defined for any network topology. Corollary~\ref{Cor_Symmetric} further provides the computational basis for radially symmetric systems. Intuitively, \eqref{eq:v_solution_integral_form} states that the outward velocity at radius $r$ is directly proportional to the rate of mass accumulation within the enclosed volume, normalized by the local boundary density. This explicitly links the kinematic ``Digital Tide'' to the conservation of information demand.

%=======================================================================================

\subsection{Analytical Properties and Physical Interpretation}\label{sec:field_properties} % S3.4

The macroscopic velocity field $\bm{v}(\bm{x},t)$, established in Theorem~\ref{Theorem1} as the unique minimum-energy solution to the inverse boundary value problem, constitutes the fundamental kinematic object within the F-STSG framework. Unlike heuristic mobility vectors often employed in simulation-based studies, this derived field possesses specific analytical properties that facilitate rigorous theoretical development:

\begin{enumerate}
	\item \textit{Analytical Tractability:} As demonstrated by Corollary~\ref{Cor_Symmetric}, the field admits closed-form integral representations for canonical topologies. This allows the macroscopic transport dynamics to be computed directly from the parameters of the intensity field $\lambda(\bm{x},t)$, avoiding the computational complexity of numerical fluid dynamics solvers.
	\item \textit{Physical Causality:} The solution is derived strictly from the conservation of mass. It establishes a deterministic link between the temporal rate of change of the node intensity, $\partial_t \lambda$, and the spatial divergence of the transport flux, $\nabla \cdot (\lambda \bm{v})$. This mathematically couples the temporal evolution of demand to the spatial advection of the network topology.
	\item \textit{Statistical Consistency:} Since $\bm{v}(\bm{x},t)$ is a functional of the intensity $\lambda(\bm{x},t)$, it ensures that the Lagrangian transport of the fluid is asymptotically consistent with the Eulerian evolution of the underlying point process statistics. This guarantees that the flux-based metrics derived in Section~\ref{sec:fluid_metrics} are mathematically compatible with the stochastic interference analysis.
\end{enumerate}

This vector field $\bm{v}(\bm{x},t)$ serves as the basis for the perturbation analysis in Section~\ref{sec:linear_response} and the rigorous definition of the Information Flux in Section~\ref{sec:metrics_toolbox}.

%======================================================================
%=====================Section IV=======================================
%======================================================================
\vspace*{-2mm}
\section{Perturbation Analysis and Green's Function Formalism}\label{sec:linear_response} % S4

The governing field equation derived in Section~\ref{sec:core_theory}, $\partial_t \lambda = \nabla \cdot (\lambda \nabla \phi)$, constitutes a linear elliptic PDE for $\phi$ with spatially varying coefficients. While Theorem~\ref{Theorem1} guarantees solution existence, obtaining closed-form expressions for general topologies is analytically intractable due to the spatial inhomogeneity of the divergence operator $\nabla \cdot (\lambda \nabla)$.
However, many physical regimes of interest, such as the superposition of local demand fluctuations upon a smooth macroscopic background, can be modeled using perturbation theory.
In this section, we develop a Green's function framework to derive analytical solutions for arbitrary fields in the weak-variation limit.

\vspace*{-2mm}
\subsection{Linearized Field Equation}\label{sec:linearization} % S4.1

We decompose the general intensity field $\lambda(\bm{x},t)$ into a known background state $\lambda_0(\bm{x},t)$ and a zero-mean perturbation $\delta\lambda(\bm{x},t)$ (i.e., $\int \delta\lambda \,\mathrm{d}\bm{x} = 0$):
\begin{align} % eqs.27,28
	\lambda(\bm{x},t) &= \lambda_0(\bm{x},t) + \epsilon \delta\lambda(\bm{x},t), \label{eq:perturb_lambda} \\
	\phi(\bm{x},t) &= \phi_0(\bm{x},t) + \epsilon \delta\phi(\bm{x},t), \label{eq:perturb_phi}
\end{align}
where $\epsilon \ll 1$ is the perturbation parameter. The background pair $(\lambda_0, \phi_0)$ is assumed to satisfy the zeroth-order conservation equation identically:
\begin{equation} % eq.29
	\frac{\partial \lambda_0}{\partial t} = \nabla \cdot (\lambda_0 \nabla \phi_0).
	\label{eq:zeroth_order}
\end{equation}

\begin{Theorem}\label{Theorem2} %[Linearized Flow Potential Equation] % The.2
	Linearizing the system with respect to $\epsilon$, the potential perturbation $\delta\phi(\bm{x},t)$ is governed by the following linear inhomogeneous PDE:
	\begin{equation} % eq.30
		\nabla \cdot (\lambda_0 \nabla (\delta\phi)) = \underbrace{\frac{\partial (\delta\lambda)}{\partial t} - \nabla \cdot (\delta\lambda \nabla \phi_0)}_{\text{Effective Source Term } S_{\mathrm{eff}}(\bm{x},t)}.
		\label{eq:linearized_pde}
	\end{equation}
	Identifying the background velocity field as $\bm{v}_0(\bm{x},t) \triangleq -\nabla \phi_0(\bm{x},t)$, the source term simplifies to $S_{\mathrm{eff}} = \partial_t (\delta\lambda) + \nabla \cdot (\delta\lambda \bm{v}_0)$.
	The term on the LHS, specifically $\nabla \cdot (\lambda_0 \nabla (\delta\phi))$, represents the diffusion of the potential perturbation within the background medium $\lambda_0$, driven by the effective source $S_{\mathrm{eff}}$ which captures the net mass imbalance created by the density fluctuation.
\end{Theorem}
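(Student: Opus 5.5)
The plan is a direct first-order expansion of the governing equation \eqref{eq:conservative_pde} in the small parameter $\epsilon$, followed by matching of coefficients order by order. First, substitute the decompositions \eqref{eq:perturb_lambda} and \eqref{eq:perturb_phi} into $\partial_t \lambda = \nabla\cdot(\lambda\nabla\phi)$. Since the operator $\nabla\cdot(\lambda\nabla\phi)$ is bilinear in the pair $(\lambda,\phi)$, the right-hand side expands exactly as
\begin{equation*}
\nabla\cdot(\lambda_0\nabla\phi_0) + \epsilon\left[\nabla\cdot(\lambda_0\nabla\delta\phi) + \nabla\cdot(\delta\lambda\nabla\phi_0)\right] + \epsilon^2\,\nabla\cdot(\delta\lambda\nabla\delta\phi).
\end{equation*}
The left-hand side is linear: it equals $\partial_t\lambda_0 + \epsilon\,\partial_t\delta\lambda$.

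Second, the $\mathcal{O}(1)$ terms cancel identically by the zeroth-order equation \eqref{eq:zeroth_order}. Collecting the $\mathcal{O}(\epsilon)$ terms and discarding the $\mathcal{O}(\epsilon^2)$ remainder, which is what linearization means, gives $\partial_t\delta\lambda = \nabla\cdot(\lambda_0\nabla\delta\phi) + \nabla\cdot(\delta\lambda\nabla\phi_0)$. Rearranging yields \eqref{eq:linearized_pde}. Substituting $\bm{v}_0 = -\nabla\phi_0$ into the source term then gives $S_{\mathrm{eff}} = \partial_t\delta\lambda + \nabla\cdot(\delta\lambda\,\bm{v}_0)$. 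This has the form of a linearized continuity operator acting on the fluctuation, which justifies the stated interpretation as a net mass imbalance.

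Third, I would add a well-posedness remark so that the linearized equation is meaningful in the sense of Theorem~\ref{Theorem1}. On $\Omega$ with the Neumann condition $\partial_n\delta\phi = 0$, the operator $\nabla\cdot(\lambda_0\nabla)$ is uniformly elliptic provided $\lambda_0>0$. By the Fredholm alternative, a solution exists if and only if $\int_\Omega S_{\mathrm{eff}}\,\mathrm{d}\bm{x}=0$. The divergence theorem gives
\begin{equation*}
\int_\Omega S_{\mathrm{eff}}\,\mathrm{d}\bm{x} = \frac{\mathrm{d}}{\mathrm{d}t}\int_\Omega\delta\lambda\,\mathrm{d}\bm{x} + \int_{\partial\Omega}\delta\lambda\,\bm{v}_0\cdot\mathbf{n}\,\mathrm{d}S.
\end{equation*}
The first term vanishes because $\delta\lambda$ has zero mean. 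The boundary term vanishes because the background no-flux condition gives $\bm{v}_0\cdot\mathbf{n} = -\partial_n\phi_0 = 0$.

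The main obstacle is not the algebra, which is routine. It is the justification that discarding the $\epsilon^2$ term is consistent, namely that $\delta\phi$ remains $\mathcal{O}(1)$. I would handle this by the elliptic estimate $\|\nabla\delta\phi\|_{L^2}\le C\|S_{\mathrm{eff}}\|_{H^{-1}}$ for the Neumann problem, with a constant controlled by $\inf\lambda_0$. This bounds the neglected term by $\mathcal{O}(\epsilon^2)$ uniformly for smooth data.
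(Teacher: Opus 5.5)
Your proposal is correct and follows the same route as the paper's proof in Appendix~\ref{app:proof_theorem2}: you substitute the expansions, use bilinearity to split the right-hand side into $\epsilon^0,\epsilon^1,\epsilon^2$ terms, cancel the zeroth order via \eqref{eq:zeroth_order}, discard the $\epsilon^2$ term, and rearrange. Your Fredholm check ($\int_\Omega S_{\mathrm{eff}}\,\mathrm{d}\bm{x}=0$, from the zero mean of $\delta\lambda$ and $\partial_n\phi_0=0$) and your elliptic-estimate justification of the truncation go beyond the paper's proof, and the first supplies the zero-net-source property that the paper only asserts in Section~\ref{sec:greens_function}.
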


\begin{proof}
	See Appendix~\ref{app:proof_theorem2}.
\end{proof}

Equation \eqref{eq:linearized_pde} is a linear PDE for the unknown perturbation $\delta\phi$. Crucially, the operator $\mathcal{L}_0 = \nabla \cdot (\lambda_0 \nabla)$ depends only on the known background state, rendering the problem solvable via the Green's function method.

\vspace*{-2mm}
\subsection{Green's Function Integral Representation}\label{sec:greens_function} % S4.2

The linearity of the perturbation equation \eqref{eq:linearized_pde} allows us to formulate the general solution using the Green's function method. Let $G(\bm{x}, \bm{y}; t)$ be the generalized Green's function for the background operator $\mathcal{L}_0 = \nabla \cdot (\lambda_0 \nabla)$ at time $t$.
Since the problem is subject to Neumann boundary conditions (no-flux), the standard definition $\mathcal{L}_0 G = \delta$ is ill-posed due to the global mass conservation constraint. Instead, $G$ is defined as the unique solution (orthogonal to constants) to:
\begin{equation} % eq.31
	\nabla_{\bm{x}} \cdot (\lambda_0(\bm{x},t) \nabla_{\bm{x}} G(\bm{x}, \bm{y}; t)) = \delta(\bm{x} - \bm{y}) - \frac{1}{|\Omega|},
\end{equation}
subject to $\frac{\partial G}{\partial n} = 0$ on $\partial \Omega$.
Because the effective source term $S_{\mathrm{eff}}$ represents a mass redistribution with zero net integral, i.e., $\int_{\Omega} S_{\mathrm{eff}} \,\mathrm{d}\bm{x} = 0$, the constant offset term $-1/|\Omega|$ vanishes upon integration.
The potential perturbation generated by any arbitrary density fluctuation is thus rigorously given by the convolution integral:
\begin{equation} % eq.32
	\delta\phi(\bm{x},t) = \int_{\Omega} G(\bm{x}, \bm{y}; t) S_{\mathrm{eff}}(\bm{y},t) \,\mathrm{d}\bm{y}.
	\label{eq:greens_solution}
\end{equation}
Consequently, the velocity perturbation field can be expressed explicitly as:
\begin{equation} % eq.33
	\delta \bm{v}(\bm{x},t) = - \nabla_{\bm{x}}\! \left( \int_{\Omega} G(\bm{x}, \bm{y}; t) \left( \frac{\partial \delta\lambda}{\partial t} + \nabla_{\bm{y}} \cdot (\delta\lambda \bm{v}_0) \right) \mathrm{d}\bm{y}\! \right)\! .
\end{equation}
This formalism establishes that the F-STSG framework extends beyond symmetric geometries. Through \eqref{eq:greens_solution}, the dynamics of complex, non-symmetric topologies can be analyzed by decomposing the field into a superposition of eigenmodes. The Green's function $G$ acts here as the fundamental propagator, diffusing local density fluctuations across the network domain.

\subsection{Spectral Decomposition of Network Propagator}\label{sec:spectral_properties} % S4.3

While the integral representation in \eqref{eq:greens_solution} provides the general solution, the structure of the perturbation dynamics is most rigorously understood through its spectral properties.
The differential operator governing the background flow, $\mathcal{L}_0 \phi = \nabla \cdot (\lambda_0 \nabla \phi)$, acts as a generalized Laplacian with spatially varying conductivity.
Crucially, under the Neumann boundary conditions derived in Theorem~\ref{Theorem1}, this operator is self-adjoint with respect to the standard $L^2$ inner product.
According to standard Sturm-Liouville theory for elliptic operators on bounded domains~\cite{Courant1953}, there exists a countable set of orthonormal eigenfunctions $\{\psi_n(\bm{x})\}$ and corresponding non-negative eigenvalues, $0 = \nu_0 < \nu_1 \le \nu_2 \le \dots$, satisfying:
\begin{equation} % eq.34
	\nabla \cdot (\lambda_0(\bm{x}) \nabla \psi_n(\bm{x})) = -\nu_n \psi_n(\bm{x}),
\end{equation}
subject to the Neumann boundary condition $\partial \psi_n / \partial n = 0$ on $\partial \Omega$.
Note that the zero eigenvalue $\nu_0=0$ corresponds to the constant eigenfunction $\psi_0 = |\Omega|^{-1/2}$, representing the conservation of total mass.

The generalized Green's function $G(\bm{x}, \bm{y})$, defined on the subspace orthogonal to $\psi_0$, admits the following eigenfunction expansion (Mercer's theorem):
\begin{equation} % eq.35
	G(\bm{x}, \bm{y}) = - \sum_{n=1}^{\infty} \frac{\psi_n(\bm{x}) \psi_n(\bm{y})}{\nu_n}.
	\label{eq:spectral_expansion}
\end{equation}
Substituting this spectral form into the perturbation solution \eqref{eq:greens_solution} yields:
\begin{equation} % eq.36
	\delta\phi(\bm{x},t) = - \sum_{n=1}^{\infty} \frac{\psi_n(\bm{x})}{\nu_n} \underbrace{\left( \int_{\Omega} \psi_n(\bm{y}) S_{\mathrm{eff}}(\bm{y},t) \,\mathrm{d}\bm{y} \right)}_{\text{Modal Projection } \hat{S}_n(t)}.
\end{equation}

Physically, \eqref{eq:spectral_expansion} reveals the spatial spectral response of the network topology. The eigenfunctions $\psi_n(\bm{x})$ represent the natural spatial eigenmodes of traffic flow adjustments, while the eigenvalues $\nu_n$ quantify the topological resistance to the $n$-th mode.
Low-order modes characterized by small $\nu_n$ correspond to long-range, global transport which decays slowly (as $1/\nu_n$ is large), whereas high-order modes associated with large $\nu_n$ represent localized fluctuations that are rapidly attenuated.
This spectral hierarchy implies that the network acts as a low-pass spatial filter, suppressing high-frequency stochastic noise while effectively propagating macroscopic demand trends.

%======================================================================
%=====================Section V ============================
%======================================================================
\section{Field-Theoretic Metrics and Dynamic Performance Analysis}\label{sec:metrics_toolbox} % S5

The preceding sections established the core theoretical framework: the reconstruction of the macroscopic velocity field $\bm{v}(\bm{x},t)$ as the unique minimum-energy solution to the inverse boundary value problem.
We now leverage this derived field to characterize the system dynamics.
Unlike heuristic indicators used in traditional network monitoring, the metrics derived herein are process-endogenous, stemming directly from the conservation laws governing the hydrodynamic limit.
This section formulates two complementary classes of metrics: Lagrangian kinematics, describing the intrinsic transport of the information demand, and dynamic network performance, quantifying the coupling between this transport and the stationary infrastructure.

\subsection{Lagrangian Kinematics of Information}\label{sec:fluid_metrics} % S5.1

These metrics quantify the collective mobility of the node constellation, mathematically independent of the specific network infrastructure. They serve as the fundamental predictors of topological evolution.

\subsubsection{The Information Flux Vector}\label{sec:metric_flux} % S5.1.1)
Generalizing the concept of microscopic node kinetics, we define the Information Flux, a vector field quantifying the spatial advection of the intensity field.

\begin{Definition}\label{Definition4} %[Information Flux] % Def.4
	The Information Flux, $\bm{J}(\bm{x},t)$, is defined as the product of the local intensity and the macroscopic velocity field:
	\begin{equation} % eq.37
		\bm{J}(\bm{x},t) \triangleq \lambda(\bm{x},t) \bm{v}(\bm{x},t).
	\end{equation}
	$\bm{J}(\bm{x},t)$ serves as the kinematic generator of the ASE.
	Specifically, let $\bar{R}(\bm{x},t) \triangleq \mathbb{E}[\log(1+\text{SINR})]$ denote the local ergodic SE per node.
	In the interference-limited cellular regime, $\bar{R}$ exhibits signal-to-interference ratio (SIR) scale invariance. As established in SG \cite{Andrews2011}, densification scales both signal and interference power proportionally. Under the assumption of local homogeneity, the ergodic efficiency satisfies $\bar{R} \approx \bar{R}_{\infty}$, asymptotically independent of the spatial density $\lambda$.
	Consequently, the macroscopic transport of Shannon capacity, denoted as the Capacity Flux $\bm{J}_{\mathrm{cap}}$, is dominated by the density dynamics: $\bm{J}_{\mathrm{cap}} \approx \bar{R}_{\infty} \bm{J}(\bm{x},t)$.
	Thus, a non-zero flux vector $\bm{J}$ directly quantifies the spatial advection of the network capacity region~\cite{ElGamal2011}, necessitating the dynamic redistribution of spectral resources.
\end{Definition}

This definition allows us to quantify the precise flow of demand across any arbitrary geometric boundary $\mathcal{B}$, such as a cell edge or a tracking area border.

\begin{remark}%[Equivalence to Capacity Flux] % Rem.4
	It is pertinent to verify whether $\bm{J}$, defined via node density, truly represents information flow.
	In the hydrodynamic limit of interference-limited networks, the local ASE scales linearly with node density: $\mathcal{T}(\bm{x},t) \approx \lambda(\bm{x},t) \cdot \bar{R}_{\infty}$, as analytically derived in Section~\ref{sec:app_optimization}.
	Applying the continuity equation, the temporal evolution of capacity is governed by:
	\begin{equation} % eq.38
		\frac{\partial \mathcal{T}}{\partial t} \approx \bar{R}_{\infty} \frac{\partial \lambda}{\partial t} = - \nabla \cdot (\bar{R}_{\infty} \bm{J}).
	\end{equation}
	This identity proves that $\bm{J}(\bm{x},t)$ is indeed the sufficient statistic governing the dynamic reshaping of the network's spatial capacity profile.
	
	\textit{Dimensional Analysis:} If $\lambda$ has dimension $[L^{-d}]$ (nodes/volume) and $\bm{v}$ has dimension $[L T^{-1}]$, then $\bm{J}$ has dimension $[L^{-(d-1)} T^{-1}]$. Scaling by the SE $\bar{R}_{\infty}$ [bits/s/Hz/node], the capacity flux $\bm{J}_{\mathrm{cap}}$ carries the dimension of [bits/s/Hz/m$^{d-1}$], rigorously representing the flow of spectral capacity across a geometric cross-section.
\end{remark}

\begin{Proposition}\label{Prop_Flux_Integral} %[Net Flux Across a Manifold] % Pro.2
	Let $\mathcal{B}$ be a $(d-1)$-dimensional manifold in $\mathbb{R}^d$ with outward normal $\mathbf{n}$.
	The net rate of information demand crossing $\mathcal{B}$ at time $t$, denoted $J_{\mathcal{B}}(t)$, is given by the surface integral:
	\begin{equation} % eq.39
		J_{\mathcal{B}}(t) = \int_{\mathcal{B}} \bm{J}(\bm{x},t) \cdot \mathbf{n} \,\mathrm{d}S.
		\label{eq:handover_flux}
	\end{equation}
\end{Proposition}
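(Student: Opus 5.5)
We read Proposition~\ref{Prop_Flux_Integral} as an identity between the microscopic crossing count and the surface integral of the macroscopic flux. The plan is to derive it from the Lagrangian flow map of Lemma~\ref{Lemma_Displacement} and the continuity equation \eqref{eq:continuity_conservative}. The key point is that $\bm{J}=\lambda\bm{v}$ is exactly the density current appearing in \eqref{eq:continuity_conservative}. The net rate at which demand crosses $\mathcal{B}$ is therefore the number of nodes crossing per unit time in the direction of $\mathbf{n}$ minus those crossing against it, normalized in the hydrodynamic limit \eqref{eq:weak_convergence}.

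\textbf{Step 1 (local crossing count).} Cover $\mathcal{B}$ by small surface patches $\mathrm{d}S$ centered at $\bm{x}\in\mathcal{B}$. Over a short interval $[t,t+\Delta t]$, a node at $\bm{y}$ moves to $\mathcal{T}_{t+\Delta t}\circ\mathcal{T}_t^{-1}(\bm{y}) = \bm{y}+\bm{v}(\bm{y},t)\Delta t + o(\Delta t)$, using the smoothness of $\bm{v}$ assumed in Lemma~\ref{Lemma_Displacement}. The nodes that cross the patch positively are those in the oblique cylinder of base $\mathrm{d}S$ and height $(\bm{v}\cdot\mathbf{n})\Delta t$. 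By the IPPP property of Definition~\ref{def:digital_tide}, their expected number is
\[
\lambda(\bm{x},t)\,\bigl(\bm{v}(\bm{x},t)\cdot\mathbf{n}\bigr)\,\Delta t\,\mathrm{d}S + o(\Delta t).
\]
When $\bm{v}\cdot\mathbf{n}<0$, the same expression counts negative crossings with the correct sign.

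\textbf{Step 2 (integration and limit).} Summing over patches, dividing by $\Delta t$, and letting $\Delta t\to0$ gives $\int_{\mathcal{B}}\lambda\,\bm{v}\cdot\mathbf{n}\,\mathrm{d}S=\int_{\mathcal{B}}\bm{J}\cdot\mathbf{n}\,\mathrm{d}S$. The interchange of the limit and the surface integral is justified by dominated convergence, given that $\lambda$ and $\bm{v}$ are continuous on a compact (or locally finite-area) manifold $\mathcal{B}$. Under the homogeneous per-node demand assumption of Section~\ref{sec:continuity_equation}, node crossings equal demand crossings. Multiplying by $\bar{R}_{\infty}$ gives the capacity version.

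\textbf{Step 3 (consistency check).} When $\mathcal{B}=\partial V$ bounds a region $V$, the divergence theorem together with \eqref{eq:continuity_conservative} gives $J_{\mathcal{B}}=\int_V\nabla\cdot\bm{J}\,\mathrm{d}\bm{x}=-\frac{\mathrm{d}}{\mathrm{d}t}\int_V\lambda\,\mathrm{d}\bm{x}$. This confirms that the formula is the outflow rate of demand, and it pins down the sign convention for $\mathbf{n}$.

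\textbf{Main obstacle.} The delicate step is Step 1 for an open (non-closed) manifold $\mathcal{B}$. There, the flux integral cannot be reduced to a mass balance. We must argue directly that, in expectation, the count of signed crossings of the IPPP equals $\lambda\,\bm{v}\cdot\mathbf{n}$ to first order, including at the edges of $\mathcal{B}$, where cylinders are truncated. Our approach is to use the Campbell formula on the advected PPP applied to the indicator of crossing, and to note that edge effects contribute $O(\Delta t^2)$ per unit boundary length of $\partial\mathcal{B}$. Both contributions vanish in the limit.
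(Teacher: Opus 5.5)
Your argument is correct, but it proves more than the paper's proof does.

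\textbf{The paper's route.} The proof in Appendix~\ref{app:proof_theorem3} is essentially definitional. It takes from Definition~\ref{Definition4} that $\bm{J}$ is a flux density (flow per unit area), writes $\mathrm{d}J_{\mathcal{B}}=(\bm{J}\cdot\mathbf{n})\,\mathrm{d}S$ for a surface element, and integrates over $\mathcal{B}$. The reading of $\bm{J}\cdot\mathbf{n}$ as a crossing rate is therefore assumed, not derived.

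\textbf{Your route.} You derive that reading from the discrete model. You push the node process forward by the flow map of Lemma~\ref{Lemma_Displacement} and count signed crossings of a patch through the swept cylinder of volume $|\bm{v}\cdot\mathbf{n}|\,\Delta t\,\mathrm{d}S$ via Campbell's formula. You then pass to the limit, with the divergence-theorem check fixing the orientation for closed surfaces.

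\textbf{What your route buys.} You get an operational statement: the surface integral equals the expected signed crossing rate of the node process. That is exactly what Fig.~\ref{fig:handover_flux_validation}(c) tests empirically. Also, only the first-moment measure enters, so the Poisson property is not actually needed.

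\textbf{What it costs.} You need a hypothesis the paper's proof never invokes: the nodes must actually move along the same $\bm{v}$ that defines $\bm{J}$. Suppose $\bm{v}$ is the minimum-energy reconstruction $-\nabla\phi$, while the true motion carries an extra component $\bm{u}$ with $\nabla\cdot(\lambda\bm{u})=0$. Then $\int_{\mathcal{B}}\lambda\,\bm{u}\cdot\mathbf{n}\,\mathrm{d}S=0$ for closed $\mathcal{B}=\partial V$, but in general not for open $\mathcal{B}$. So for open manifolds, your identification with empirical crossings holds only under Lemma~\ref{Lemma_Displacement}'s hypothesis that the reconstructed field is the transport field. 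This is the real content of your remark that the open case cannot be reduced to a mass balance, more than the edge effects, which are indeed $O(\Delta t^2)$.
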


\begin{proof}
	See Appendix~\ref{app:proof_theorem3}.
\end{proof}

Unlike reactive counting measures that register transport events post-facto, the flux integral $J_{\mathcal{B}}(t)$ serves as a predictive kinematic statistic. It quantifies the instantaneous momentum of demand directed at a boundary, providing a sufficient statistic for anticipating future load imbalances before they manifest as local capacity outages.

\subsubsection{Topological Compression and Material Derivative}\label{sec:metric_divergence} % S5.1.2)
To rigorously characterize the formation of demand hotspots, we analyze the compressibility of the flow.

\begin{Definition}\label{Definition6} %[Congestion Divergence] % Def.5
	The Congestion Divergence, $\mathcal{D}(\bm{x},t)$, is the scalar field defined by the divergence of the macroscopic velocity:
	\begin{equation} % eq.40
		\mathcal{D}(\bm{x},t) \triangleq \nabla \cdot \bm{v}(\bm{x},t).
	\end{equation}
\end{Definition}

\begin{Proposition}\label{Prop_Material_Derivative} %[Material Derivative as a Stability Indicator] % Pro.3
	For a conservative system ($S=0$), the Congestion Divergence is mathematically equivalent to the negative logarithmic rate of change of the intensity, as observed in the Lagrangian frame of reference:
	\begin{equation}\label{eqPro3} % eq.41
		\mathcal{D}(\bm{x},t) = - \frac{1}{\lambda(\bm{x},t)} \frac{D\lambda(\bm{x},t)}{Dt},
	\end{equation}
	where $\frac{D}{Dt} \triangleq \frac{\partial}{\partial t} + \bm{v} \cdot \nabla$ is the material (or Lagrangian) derivative, capturing the rate of change experienced by a moving fluid element.
\end{Proposition}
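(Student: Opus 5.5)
The identity follows from expanding the divergence term in the conservative continuity equation \eqref{eq:continuity_conservative} with the product rule and then regrouping terms into the material derivative. I will assume $\lambda$ and $\bm{v}$ are $C^1$ in space and time, as in Theorem~\ref{Theorem1}, and that $\lambda>0$ on the domain so that division by $\lambda$ is legitimate. On the whole space I will work pointwise wherever $\lambda(\bm{x},t)>0$, which is where the logarithmic rate of change is defined.

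The first step is to apply the vector-calculus product identity to the flux term:
\begin{equation*}
\nabla\cdot(\lambda\bm{v})=\bm{v}\cdot\nabla\lambda+\lambda\,\nabla\cdot\bm{v}.
\end{equation*}
Substituting this into $\partial_t\lambda+\nabla\cdot(\lambda\bm{v})=0$ gives
\begin{equation*}
\partial_t\lambda+\bm{v}\cdot\nabla\lambda+\lambda\,\nabla\cdot\bm{v}=0.
\end{equation*}
The first two terms are, by the definition of $\frac{D}{Dt}$ in the notation section, exactly $\frac{D\lambda}{Dt}$. The equation therefore becomes $\frac{D\lambda}{Dt}=-\lambda\,\mathcal{D}$, where $\mathcal{D}$ is the Congestion Divergence of Definition~\ref{Definition6}. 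Dividing by $\lambda$ yields \eqref{eqPro3}.

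To justify calling this a rate of change observed in the Lagrangian frame, I will note that along a characteristic $\dot{\bm{x}}(t)=\bm{v}(\bm{x}(t),t)$, the chain rule gives $\frac{\mathrm{d}}{\mathrm{d}t}\lambda(\bm{x}(t),t)=\frac{D\lambda}{Dt}$. Consequently
\begin{equation*}
\mathcal{D}=-\frac{\mathrm{d}}{\mathrm{d}t}\ln\lambda(\bm{x}(t),t).
\end{equation*}
These characteristics are the same flow map $\mathcal{T}_t$ used in Lemma~\ref{Lemma_Displacement}, which ties the interpretation to that construction.

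There is no real obstacle here; the argument is a direct calculation. The only point needing care is the regularity and positivity assumptions. Positivity of $\lambda$ is needed for the division and for the logarithm to make sense, and differentiability of $\bm{v}$ is needed for the product rule. For the minimum-energy field $\bm{v}=-\nabla\phi$, differentiability is supplied by elliptic regularity of the solution in Theorem~\ref{Theorem1} when $\lambda$ is smooth.
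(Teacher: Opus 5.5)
Your proof is correct and follows the paper's own argument in Appendix~\ref{app:proof_theorem4}: you expand $\nabla\cdot(\lambda\bm{v})$ by the product rule, group $\partial_t\lambda+\bm{v}\cdot\nabla\lambda$ into $\frac{D\lambda}{Dt}$, and divide by $\lambda>0$. Your remarks on the characteristic interpretation $\mathcal{D}=-\frac{\mathrm{d}}{\mathrm{d}t}\ln\lambda(\bm{x}(t),t)$ and on the positivity and regularity hypotheses are accurate, and the paper leaves them implicit.
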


\begin{proof}
	See Appendix~\ref{app:proof_theorem4}.
\end{proof}

Proposition~\ref{Prop_Material_Derivative} establishes the material derivative $\frac{D\lambda}{Dt}$ as the fundamental indicator of network stability. A region characterized by negative divergence $\mathcal{D} < 0$ acts as a topological sink, where the flow field physically compresses the node constellation, inevitably leading to a density surge. This offers a kinematic guarantee of future congestion, providing an early-warning signal distinct from simple static density thresholds.

\subsubsection{Global Centroid Dynamics}\label{sec:metric_centroid} % S5.1.3)
To track the global trajectory of the demand, we define the first moment of the intensity field.

\begin{Definition}\label{Definition5} %[Network Centroid] % Def.6
	The Network Centroid, $\bm{C}(t)$, is the instantaneous expectation of the node location vector:
	\begin{equation} % eq.42
		\bm{C}(t) \triangleq \mathbb{E}_{\bm{x} \sim p(\cdot,t)}[\bm{x}] = \frac{1}{N(t)} \int_{\mathbb{R}^d} \bm{x} \lambda(\bm{x},t) \,\mathrm{d}\bm{x}.
	\end{equation}
\end{Definition}

\begin{Proposition}\label{Prop_Centroid_Drift} %[Centroid Drift Velocity]
	In a conservative system ($\frac{\mathrm{d}N}{\mathrm{d}t} = 0$), the velocity of the centroid, $\bm{V}_C(t) \triangleq \frac{\mathrm{d}\bm{C}}{\mathrm{d}t}$, is exactly the intensity-weighted spatial average of the macroscopic velocity field:
	\begin{equation} % eq.43
		\bm{V}_C(t) = \mathbb{E}_{\bm{x} \sim p(\cdot,t)}[\bm{v}(\bm{x},t)] = \int_{\mathbb{R}^d} \bm{v}(\bm{x},t) p(\bm{x},t) \,\mathrm{d}\bm{x}.
	\end{equation}
\end{Proposition}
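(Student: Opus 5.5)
The argument is a direct differentiation of the centroid definition, with the continuity equation \eqref{eq:continuity_conservative} substituted for $\partial_t\lambda$ and a divergence-theorem integration by parts to move the derivative onto $\bm{x}$.

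\textbf{Step 1: differentiate the centroid.} Since $N(t)$ is constant in the conservative regime, Definition~\ref{Definition5} gives
\begin{equation*}
\bm{V}_C(t)=\frac{1}{N}\int \bm{x}\,\partial_t\lambda\,\mathrm{d}\bm{x}.
\end{equation*}
Differentiation under the integral sign is justified by assuming $\lambda$ is smooth with $\|\bm{x}\|\,|\partial_t\lambda|$ integrable.

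\textbf{Step 2: substitute the continuity equation.} Replacing $\partial_t\lambda=-\nabla\cdot\bm{J}$ with $\bm{J}=\lambda\bm{v}$ (Definition~\ref{Definition4}), I would work componentwise. For each coordinate $x_i$ I use the product rule
\begin{equation*}
x_i\,\nabla\cdot\bm{J}=\nabla\cdot(x_i\bm{J})-\bm{J}\cdot\nabla x_i=\nabla\cdot(x_i\bm{J})-J_i .
\end{equation*}
Integrating over the domain gives
\begin{equation*}
\int x_i\,\partial_t\lambda\,\mathrm{d}\bm{x}=-\int\nabla\cdot(x_i\bm{J})\,\mathrm{d}\bm{x}+\int J_i\,\mathrm{d}\bm{x}.
\end{equation*}

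\textbf{Step 3: remove the boundary term.} By the divergence theorem, the first integral on the right equals $\oint x_i\,\bm{J}\cdot\mathbf{n}\,\mathrm{d}S$.
\begin{itemize}
\item On a bounded domain $\Omega$, this vanishes by the no-flux Neumann condition of Theorem~\ref{Theorem1}, which gives $\bm{v}\cdot\mathbf{n}=0$ on $\partial\Omega$.
\item On $\mathbb{R}^d$, I would integrate over a ball of radius $R$ and let $R\to\infty$. This requires the decay condition $\|\bm{x}\|\,\lambda\|\bm{v}\|=o(R^{-(d-1)})$.
\end{itemize}
Once the boundary term is gone, $\bm{V}_C=\frac{1}{N}\int\lambda\bm{v}\,\mathrm{d}\bm{x}=\int\bm{v}\,p\,\mathrm{d}\bm{x}$, where $p=\lambda/N$ from \eqref{eq:weak_convergence}. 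This is the claimed identity.

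\textbf{Main obstacle.} The only delicate step is Step 3 on the whole space. There I would appeal to the finite-kinetic-energy setting stated after Theorem~\ref{Theorem1}. Finite kinetic energy together with a finite second moment of $\lambda$ gives $\int\|\bm{x}\|\,\lambda\|\bm{v}\|\,\mathrm{d}\bm{x}<\infty$ by Cauchy--Schwarz. Integrability of $x_i\bm{J}$ then implies that the sphere integrals $\oint_{\|\bm{x}\|=R}|x_i\bm{J}|\,\mathrm{d}S$ are integrable in $R$, so they must vanish along some sequence $R_k\to\infty$. Taking that sequence is enough to obtain the limit.
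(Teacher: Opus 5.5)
Your proposal is correct and follows the same route as the paper's proof in Appendix~\ref{app:proof_theorem5}: differentiate $\bm{C}(t)$ with $N$ fixed, substitute $\partial_t\lambda=-\nabla\cdot(\lambda\bm{v})$, and integrate by parts componentwise using $\nabla x_i=\mathbf{e}_i$. The paper simply assumes the boundary term vanishes because the flux decays at infinity, whereas your argument (Cauchy--Schwarz with finite kinetic energy and a finite second moment, then a sequence $R_k\to\infty$ along which the sphere integrals vanish) actually justifies that step.
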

\begin{proof}
	See Appendix~\ref{app:proof_theorem5}.
\end{proof}

\subsection{Dynamic Performance under Field-Measure Coupling}\label{sec:network_metrics} % S5.2

We now map the continuous kinematics derived above onto the discrete performance of the infrastructure. To justify the calculation of instantaneous spatial expectations on a non-stationary field, we formally characterize the system's operating regime via the adiabatic approximation.

\begin{Assumption}\label{Assump:Adiabatic}%[The Adiabatic Regime] % Assu.1
	Let $T_{\mathrm{packet}}$ be the packet transmission duration, and let $T_{\mathrm{hydro}}$ be the characteristic time scale of the macroscopic topological evolution, defined as the inverse of the maximum normalized material derivative:
	\begin{equation} % eq.44
		T_{\mathrm{hydro}} \triangleq \left( \sup_{\bm{x},t} \left| \frac{1}{\lambda} \frac{D\lambda}{Dt} \right| + \frac{\sup \|\bm{v}\|}{L_{\mathrm{cell}}} \right)^{-1}.
	\end{equation}
	where $L_{\mathrm{cell}}$ is the characteristic inter-site distance. This composite metric captures both topological compression and cell-crossing advection dynamics.
	We define the non-stationarity parameter as the ratio $\epsilon \triangleq T_{\mathrm{packet}} / T_{\mathrm{hydro}}$.
	The F-STSG framework operates in the adiabatic (or quasi-static) limit where $\epsilon \to 0$, under the condition that the channel coherence time satisfies $T_{\mathrm{coh}} \ge T_{\mathrm{packet}}$.
	In this regime, the intensity field $\lambda(\bm{x},t)$ is treated as quasi-static during a transmission event, allowing instantaneous performance to be computed as a snapshot expectation over the frozen Palm measure.
\end{Assumption}

{\color{black}\begin{remark}%[Heuristic Error Scaling and Validity] % Rem.5
	Assumption~\ref{Assump:Adiabatic} defines the asymptotic validity of the theory. For systems with finite time-scale separation ($\epsilon > 0$), we heuristically approximate the error in the calculated coverage probability to scale as $\mathcal{O}(\epsilon)$. This first-order scaling is postulated under the assumption that the performance functional $P_{\mathrm{cov}}(\lambda)$ is locally Lipschitz continuous with respect to the intensity field. During a transmission event $T_{\mathrm{packet}}$, the density perturbation is linearly bounded by $\Delta \lambda \approx (\partial_t \lambda) T_{\mathrm{packet}} \propto \epsilon$. Physically, this error reflects the unmodeled Lagrangian displacement of mobile nodes during a single packet transmission. For typical urban mobility ($v \approx 10$ m/s) and cellular frame structures ($T_{\mathrm{packet}} \approx 1$ ms), we have $\epsilon \approx 10^{-4}$, rendering the zero-order adiabatic assumption practically robust, thereby circumventing the need for a rigorous functional perturbation proof.
\end{remark}}

\begin{remark}%[Link to Non-Ergodic Channel Capacity] % Rem.6
	From an information-theoretic perspective, the time-varying topology $\lambda(\bm{x},t)$ induces a non-ergodic macroscopic channel. Classical ergodic capacity is defined over infinite time. However, due to the macroscopic evolution $T_{\mathrm{hydro}}$, the system effectively operates as a block-fading channel where the block is the quasi-static epoch defined by the macroscopic flow. Therefore, our metric $P_{\mathrm{cov}}(t)$ is rigorously equivalent to the complement of the information outage probability conditioned on the instantaneous macroscopic state $\lambda_t$.
\end{remark}

\subsubsection{Instantaneous Association Probability} % S5.2.1)
The association preference is governed by the rule in \eqref{eq:assoc_rule}. The macroscopic association behavior is characterized by the spatial expectation of local preferences over the dynamic demand field.

\begin{Proposition}\label{Prop_Assoc_Prob} %[Dynamic Association Probability] % Pro.5
	The probability that a typical mobile node associates with tier $k$ at time $t$, denoted $A_k(t)$, is derived by weighting the location-specific association probability $\mathcal{A}_k(\bm{x},t)$ by the instantaneous normalized intensity field:
	\begin{equation} % eq.45
		A_k(t) = \mathbb{E}_{\bm{x} \sim p(\cdot,t)} \left[ \mathcal{A}_k(\bm{x},t) \right] =\!\! \int_{\mathbb{R}^d}\!\! \mathcal{A}_k(\bm{x},t) \frac{\lambda(\bm{x},t)}{N(t)} \,\mathrm{d}\bm{x},\!
		\label{eq:assoc_prob_general}
	\end{equation}
	where $\mathcal{A}_k(\bm{x},t) = \mathbb{P}(\mathcal{K}(\bm{X}^*) = k \mid \bm{x})$ is determined by the SG of the active infrastructure $\Psi(t)$.
\end{Proposition}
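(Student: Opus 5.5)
The plan is to obtain $A_k(t)$ by conditioning on the location of the typical mobile node and then applying the law of total probability, with the spatial distribution of that node given by the normalized Digital Tide $p(\bm{x},t)$. The first step fixes the time $t$ and invokes Assumption~\ref{Assump:Adiabatic}. In the limit $\epsilon\to 0$ the intensity $\lambda(\cdot,t)$ and the active set $\Psi(t)$ are frozen over a transmission event. The association question is therefore a static snapshot problem.

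The second step identifies the law of the typical node's location. By Lemma~\ref{Lemma_Displacement}, $\Phi_{\mathrm{M}}(t)$ is an IPPP with intensity $\lambda(\cdot,t)$. Slivnyak--Mecke applied to this inhomogeneous PPP shows that the reduced Palm distribution at $\bm{x}$ equals the original law. Selecting a typical node is then equivalent to sampling its location from $\lambda(\bm{x},t)/N(t) = p(\bm{x},t)$, through the Campbell averaging $\mathbb{E}\sum_{\bm{Y}\in\Phi_{\mathrm{M}}}f(\bm{Y})/\mathbb{E}[\Phi_{\mathrm{M}}(\mathbb{R}^d)] = \int f\,\lambda/N\,\mathrm{d}\bm{x}$. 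Alternatively, one can use the weak convergence \eqref{eq:weak_convergence} of $\mu_N$. In that route the user-averaged fraction of tier-$k$ associations, $\int \mathbb{I}(\cdot)\,\mu_N(\mathrm{d}\bm{x},t)$, converges to the same integral, provided the map $\bm{x}\mapsto\mathcal{A}_k(\bm{x},t)$ is bounded and continuous. Boundedness holds because it is a probability. Continuity holds almost everywhere because the weighted-Voronoi boundaries in \eqref{eq:assoc_rule_generic} have Lebesgue measure zero.

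The third step handles the infrastructure. Since $\Phi_{\mathrm{B}}$ (and hence $\Psi(t)$) is independent of the mobile process given $\lambda$, conditioning on $\bm{x}$ gives $\mathbb{P}(\mathcal{K}(\bm{X}^*)=k\mid\bm{x}) = \mathcal{A}_k(\bm{x},t)$. This is a purely SG quantity determined by $\Psi(t)$ and the rule \eqref{eq:assoc_rule}. The tower property then yields $A_k(t)=\mathbb{E}_{\bm{x}\sim p}[\mathcal{A}_k(\bm{x},t)]$, which is \eqref{eq:assoc_prob_general}. Measurability of $\mathcal{A}_k$ in $\bm{x}$ follows from the measurable dependence of the argmax on $\bm{x}$, with ties occurring on a null set.

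The main obstacle is the second step: justifying that the ``typical mobile node'' is distributed by $p(\cdot,t)$ rather than uniformly. This matters because $\lambda$ is non-stationary and the usual stationary Palm argument does not apply directly. I would first try the Campbell--Mecke formula for the IPPP, defining $A_k(t)$ as the ratio of the expected number of tier-$k$-associated users to $N(t)$. The weighting by $\lambda/N$ then appears automatically. Fubini, justified by nonnegativity of the integrand, exchanges the expectation over $\Psi(t)$ with the spatial integral.
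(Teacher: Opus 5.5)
Your proposal is correct and follows essentially the same route as the paper: condition on the location $\bm{X}\sim p(\cdot,t)$ of the typical mobile node and apply the law of total probability (the tower property) to obtain $A_k(t)=\int_{\mathbb{R}^d}\mathcal{A}_k(\bm{x},t)\,p(\bm{x},t)\,\mathrm{d}\bm{x}$. The one difference is that the paper simply posits $\bm{X}\sim p(\cdot,t)$, whereas you justify this $\lambda/N$ weighting via Campbell's formula for the IPPP, the independence of $\Psi(t)$ from the user process, and Fubini; this adds rigor rather than changing the argument.
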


\begin{proof}
	See Appendix~\ref{app:proof_theorem6}.
\end{proof}

\subsubsection{Instantaneous Coverage Probability (Information Outage)} % S5.2.2)
The central performance metric is the Instantaneous Coverage Probability, $P_{\mathrm{cov}}(t)$. From an information-theoretic perspective, the condition $\text{SINR} > \gamma_k$ is equivalent to the instantaneous channel capacity $C = \log_2(1+\text{SINR})$ exceeding a target transmission rate $R_k = \log_2(1+\gamma_k)$. Thus, $P_{\mathrm{cov}}(t)$ is defined as the complement of the Information Outage Probability for a randomly selected node within the dynamic field.

\begin{Proposition}\label{Prop_Cov_Prob} %[Dynamic Coverage Probability] % Pro.6
	The expected instantaneous network coverage probability at time $t$, averaged over the spatial statistics of the infrastructure processes, is given by:
	\begin{equation} % eq.46
		P_{\mathrm{cov}}(t) = \sum_{k=1}^K A_k(t) \cdot P_{\mathrm{c}}^{(k)}(t),
		\label{eq:pcov_instantaneous}
	\end{equation}
	where $P_{\mathrm{c}}^{(k)}(t)$ is the conditional probability that a node achieves its SINR target $\gamma_k$, given association with tier $k$:
	\begin{equation} % eq.47
		P_{\mathrm{c}}^{(k)}(t) \triangleq \int_{\mathbb{R}^d} \mathbb{P}(\text{SINR} > \gamma_k \mid \bm{x}, k) \, p_k(\bm{x},t) \,\mathrm{d}\bm{x}.
	\end{equation}
	Here, $p_k(\bm{x},t)$ is the probability density of user locations conditioned on association with tier $k$, given explicitly by:
	\begin{equation} % eq.48
		p_k(\bm{x},t) = \frac{\mathcal{A}_k(\bm{x},t)\lambda(\bm{x},t)}{N(t) A_k(t)}.
	\end{equation}
\end{Proposition}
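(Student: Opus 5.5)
The plan is to obtain \eqref{eq:pcov_instantaneous} from the definition \eqref{eq:pcov_def_rigorous} in three moves: condition on the tier of the serving node, apply the law of total probability at each fixed location, and then exchange the spatial average with the sum over tiers. The adiabatic regime of Assumption~\ref{Assump:Adiabatic} lets us freeze $\lambda(\cdot,t)$ at the instant $t$. Lemma~\ref{Lemma_Displacement} then lets us treat the receiver location $\bm{x}$ as drawn from the normalized intensity $p(\bm{x},t)=\lambda(\bm{x},t)/N(t)$, with the infrastructure $\Psi(t)$ and the fading independent of it. Throughout, the inner probability is averaged over both fading and the infrastructure processes, as the statement specifies.

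First, fix a location $\bm{x}$. The tier events $\{\mathcal{K}(\bm{X}^*)=k\}$, $k=1,\dots,K$, partition the sample space up to a null set, since ties in \eqref{eq:assoc_rule} have probability zero for simple point processes with continuous path loss. The law of total probability then gives
\begin{equation*}
\mathbb{P}\big(\text{SINR}(\bm{x},t)>\gamma_{\mathcal{K}(\bm{X}^*)}\,\big|\,\bm{x}\big)=\sum_{k=1}^K \mathcal{A}_k(\bm{x},t)\,\mathbb{P}(\text{SINR}>\gamma_k\mid \bm{x},k).
\end{equation*}
On the event of tier $k$ association, the threshold becomes the deterministic $\gamma_k$, and $\mathcal{A}_k(\bm{x},t)$ is the location-specific association probability from Proposition~\ref{Prop_Assoc_Prob}.

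Second, integrate this identity against $p(\bm{x},t)$. Every integrand is bounded by $1$ and $p$ is a probability density, so Tonelli's theorem lets us swap the finite sum and the integral. This gives
\begin{equation*}
P_{\mathrm{cov}}(t)=\sum_{k=1}^K\int_{\mathbb{R}^d}\mathbb{P}(\text{SINR}>\gamma_k\mid\bm{x},k)\,\mathcal{A}_k(\bm{x},t)\frac{\lambda(\bm{x},t)}{N(t)}\,\mathrm{d}\bm{x}.
\end{equation*}
For each $k$ with $A_k(t)>0$, multiply and divide by $A_k(t)$. By \eqref{eq:assoc_prob_general}, $p_k(\bm{x},t)=\mathcal{A}_k\lambda/(N A_k)$ is nonnegative and integrates to one, so it is a bona fide conditional density; equivalently, it is the Bayes posterior of the location given tier $k$. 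The $k$-th summand is then exactly $A_k(t)P_{\mathrm{c}}^{(k)}(t)$. Tiers with $A_k(t)=0$ satisfy $\mathcal{A}_k\lambda=0$ almost everywhere, so they contribute zero and can be included by convention.

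The algebra is routine. The step that needs care is justifying the conditioning structure: $\bm{x}$ must be independent of $\Psi(t)$ and the fading, so that $\mathbb{P}(\cdot\mid\bm{x},k)$ is well defined as a regular conditional probability. A complication is that the activation probabilities $p_{a,k}$ depend on $\lambda$. They depend on the deterministic field, however, not on the realized user points, so $\Psi(t)$ remains independent of the typical user's location given $\lambda(\cdot,t)$. This is exactly where the frozen-snapshot adiabatic assumption and the PGFL restart of Lemma~\ref{Lemma_Displacement} are invoked.
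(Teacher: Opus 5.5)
Your proof is correct and uses the same law-of-total-probability decomposition over the tier-association events $E_k=\{\mathcal{K}(\bm{X}^*)=k\}$ as the paper, which writes $P_{\mathrm{cov}}(t)=\sum_k \mathbb{P}(\text{SINR}>\gamma_k\mid E_k)\,\mathbb{P}(E_k)$ directly and then simply \emph{defines} $P_{\mathrm{c}}^{(k)}(t)$ as $\mathbb{P}(\text{SINR}>\gamma_k\mid E_k)$. Your version conditions on the location first, integrates against $p(\cdot,t)$, and renormalizes by $A_k(t)$; this is a more careful form of the same route, because it actually derives the integral representation of $P_{\mathrm{c}}^{(k)}(t)$ and identifies $p_k$ as the Bayes posterior, a step the paper's proof leaves implicit.
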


\begin{proof}
	See Appendix~\ref{app:proof_theorem7}.
\end{proof}

Equation~\eqref{eq:pcov_instantaneous} mathematically encapsulates the field-measure coupling of the F-STSG framework. The term $A_k(t)$ is driven by the Lagrangian transport of the intensity field representing node mobility, while $P_{\textrm{c}}^{(k)}(t)$ is determined by the Eulerian configuration of the infrastructure governing the interference field.
	
%======================================================================
%=====================Section VI=======================================
%======================================================================

\vspace*{-1mm}
\section{Simulation and Validation}\label{sec:simulation} % S6

	We validate the F-STSG framework by confronting macroscopic analytical predictions against microscopic event-driven Monte Carlo ground truth. The validation is rigorously structured into three logical tiers:
	\begin{itemize}
		\item \textbf{Thermodynamic Consistency (Section~\ref{sec:sim_consistency}):} We verify the convergence of the discrete empirical measure to the hydrodynamic limit, confirming the $\mathcal{O}(N^{-1/2})$ scaling law.
		{\color{black}\item \textbf{Numerical Consistency and Solver Robustness (Sections~\ref{sec:sim_velocity} and \ref{sec:sim_topology}):} We assess the numerical consistency of the discrete operator, its capability to filter topologically stationary vorticity, and its stability under complex non-symmetric geometries.}
		\item \textbf{Dynamic Metrics and Performance (Sections~\ref{sec:sim_mobility_metrics} and \ref{sec:sim_performance}):} We validate the Lagrangian kinematic metrics (Information Flux) and confirm the end-to-end adiabatic coupling between the continuous intensity field and discrete network coverage probability.
\end{itemize}

%===================new=====================
\vspace*{-1mm}
\subsection{Asymptotic Consistency and Thermodynamic Limit}\label{sec:sim_consistency} % S6.1

The foundational premise of F-STSG is the hydrodynamic limit established in Section~\ref{sec:model_user_population}, postulating the weak convergence of the empirical measure $\mu_N$ to the continuum field $\lambda$ as $N \to \infty$.
We quantify the validity of this approximation by the asymptotic decay of the relative reconstruction error, defined as:
\begin{equation} % eq.49
	\mathcal{E}(N) \triangleq \frac{\| \hat{\lambda}_N(\bm{x},t) - \lambda(\bm{x},t) \|_{L^2}}{\| \lambda(\bm{x},t) \|_{L^2}},
\end{equation}
where $\lambda(\bm{x},t)$ is the theoretical macroscopic intensity derived from the continuum model, and $\hat{\lambda}_N(\bm{x},t)$ is the empirical density field reconstructed from the discrete Monte Carlo node constellation.

	\begin{figure}[!b]
	\vspace*{-6mm}
	\centering
	\includegraphics[width=0.9\columnwidth]{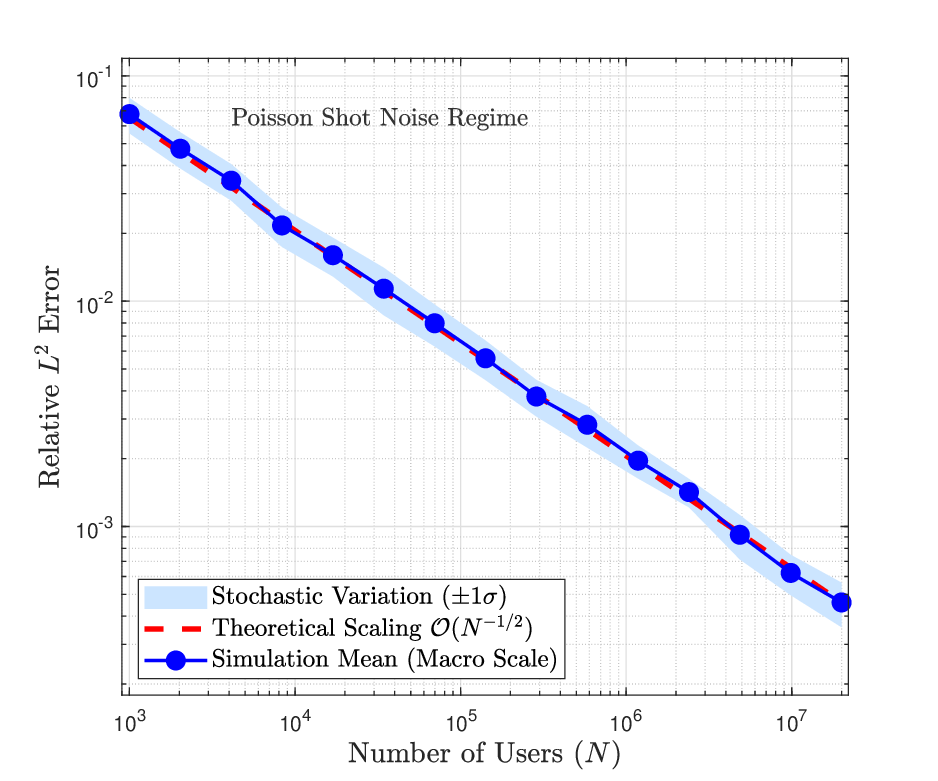} 
	\vspace*{-5mm}
	\caption{Validation of the Thermodynamic Limit: Relative reconstruction error $\mathcal{E}(N)$ vs. population size $N$.}
	\label{fig:convergence} % Fig.2
\vspace*{-1mm}
\end{figure}

\subsubsection{Simulation Setup} % S6.1.1)
We simulate a non-separable dynamic topology characterized by a time-varying Gaussian hotspot. 
To distinguish stochastic convergence from discretization bias, the theoretical ground truth $\lambda(\bm{x},t)$ is integrated using the cumulative distribution function (CDF) difference method. 
In the Monte Carlo phase, for each realization $m\! \in\! \{1, \dots, M\}$, we generate a discrete node constellation $\Phi^{(N)}_m$ by drawing $N$ random coordinate samples according to the instantaneous density profile.
We then reconstruct the empirical intensity $\hat{\lambda}_N$ using a mesoscopic smoothing kernel with a bandwidth sufficiently larger than the typical inter-node distance.
This effectively isolates the bulk transport phenomena, the Digital Tide, from high-frequency microscopic singularities. 
The statistical results presented are averaged over $M\! =\! 10^5$ independent Monte Carlo realizations for each population size $N$.
	
	\subsubsection{Convergence Analysis} % S6.1.2)
	Fig.~\ref{fig:convergence} plots the evolution of $\mathcal{E}(N)$ against the population size $N$. The numerical error (black markers) exhibits a strictly linear decay in the log-log domain with a slope of $-0.5$, strictly adhering to the theoretical Poisson shot noise scaling $\mathcal{O}(N^{-1/2})$. The narrow $\pm 1\sigma$ confidence interval (shaded region) further confirms the measure concentration of the system. As $N$ increases, the variance of individual snapshots vanishes. Crucially, in the ultra-dense regime ($N \approx 10^7$), the error drops below $10^{-3}$, confirming that the Digital Tide approximation becomes asymptotically exact and justifying the use of deterministic PDEs for network state evolution.
	
		\begin{figure}[!b]
		\vspace*{-6mm}
		\centering
		\hspace*{-10mm}\includegraphics[width=1.2\columnwidth]{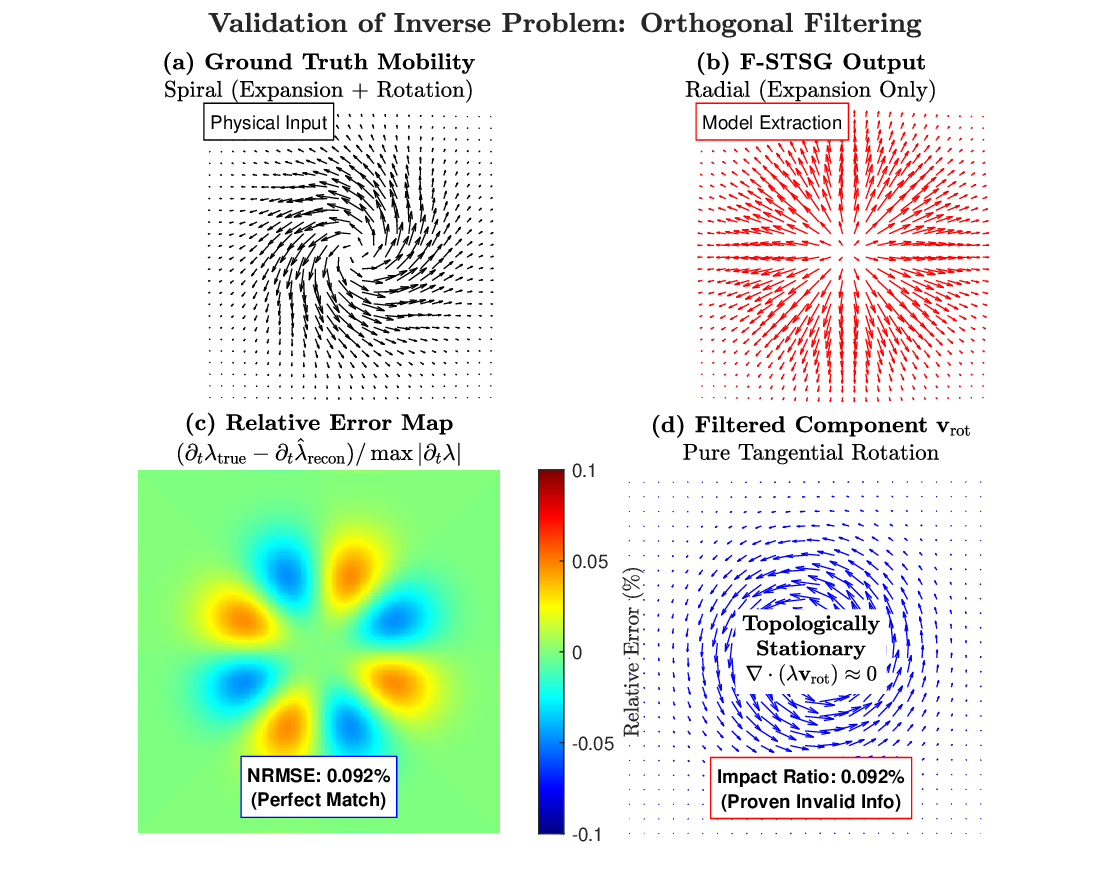}
		\vspace*{-9mm}
		\caption{Validation of the inverse problem. (a) Ground truth velocity $\bm{v}_{\mathrm{true}}$ (superposition of expansion and rotation). (b) Reconstructed field $\hat{\bm{v}}$ (irrotational). (c) Relative prediction error map. (d) Residual component $\bm{v}_{\mathrm{true}} - \hat{\bm{v}}$, corresponding to the filtered vorticity.}
		\label{fig:inverse_validation} % Fig.3
		\vspace*{-1mm}
	\end{figure}
	
	%==================new VI-B======================	
\vspace*{-1mm}
\subsection{Validation of Inverse Problem: Orthogonal Filtering}\label{sec:sim_velocity}% S6.2

Unlike the stochastic convergence analysis in Section~\ref{sec:sim_consistency}, this section validates the mathematical exactness of the inverse solver via deterministic numerical experiments using synthesized continuum fields.
Proposition~\ref{Prop_Optimal_Transport} asserts that the F-STSG framework reconstructs the unique minimum-energy transport component, acting as a spectral filter for kinematic modes that do not contribute to the information flux. 

To verify this property and the resolution of the inverse problem's non-uniqueness, we construct a ground truth velocity field $\bm{v}_{\mathrm{true}}$ as a superposition of a compressive signal and a solenoidal interference component (kinematic noise):
\begin{equation} % eq.50
	\bm{v}_{\mathrm{true}} = \underbrace{\bm{v}_{\mathrm{irr}}}_{\text{Signal}} + \underbrace{\bm{v}_{\mathrm{rot}}}_{\text{Kinematic Noise}}.
\end{equation}
Here, $\bm{v}_{\mathrm{irr}}$ is a radial gradient field driving density evolution. The noise component $\bm{v}_{\mathrm{rot}}$ represents a ``topologically stationary'' rotation (vorticity) designed to satisfy the strict orthogonality condition $\bm{v}_{\mathrm{rot}} \perp \nabla \lambda$, or equivalently $\nabla \cdot (\lambda \bm{v}_{\mathrm{rot}}) \equiv 0$. This component consumes kinetic energy but induces no net mass transport.
	
The F-STSG solver is driven solely by the scalar evolution $\partial_t \lambda$ generated by $\bm{v}_{\mathrm{true}}$. The reconstruction results, summarized in Fig.~\ref{fig:inverse_validation}, empirically confirm the orthogonality of the decomposition.
While the ground truth field in Panel (a) exhibits pronounced spiraling vorticity, Panel (b) demonstrates that the algorithm successfully isolates the pure radial expansion, projecting the dynamics onto the canonical irrotational manifold defined by $\hat{\bm{v}} = -\nabla \phi$.
The fidelity of this projection is quantified in Panel (c): the local relative error remains strictly bounded below $0.1\%$, with a global normalized root mean square error (NRMSE) of approximately $0.092\%$. We note that in this continuum setting, the NRMSE is mathematically equivalent to the relative $L^2$ error defined as $\|\hat{\bm{v}} - \bm{v}_{\mathrm{true}}\|_{L^2} / \|\bm{v}_{\mathrm{true}}\|_{L^2}$. This residual error is attributable to high-order discretization artifacts rather than model bias.
Finally, Panel (d) visualizes the discarded residual $\bm{v}_{\mathrm{true}} - \hat{\bm{v}}$. This field corresponds precisely to the injected solenoidal interference. The vanishing divergence of this residual confirms that the framework correctly identifies and filters out energetic inefficiencies that are irrelevant to capacity dimensioning.

	{\color{black}\subsection{Numerical Consistency and Spectral Fidelity in Complex Topologies}\label{sec:sim_topology} % S.6.3
	
	Following the kinematic filtration analysis, this section verify the numerical self-consistency and spectral fidelity of the discrete F-STSG operator under complex topological conditions. We specifically examine the reconstruction performance in regimes exhibiting non-convex geometries and saddle-point singularities, where standard gradient-based solvers are prone to numerical divergence.}
	
		\begin{figure}[!b]
		\vspace*{-5mm}
		\centering
		\hspace*{-5mm}\includegraphics[width=0.5\textwidth]{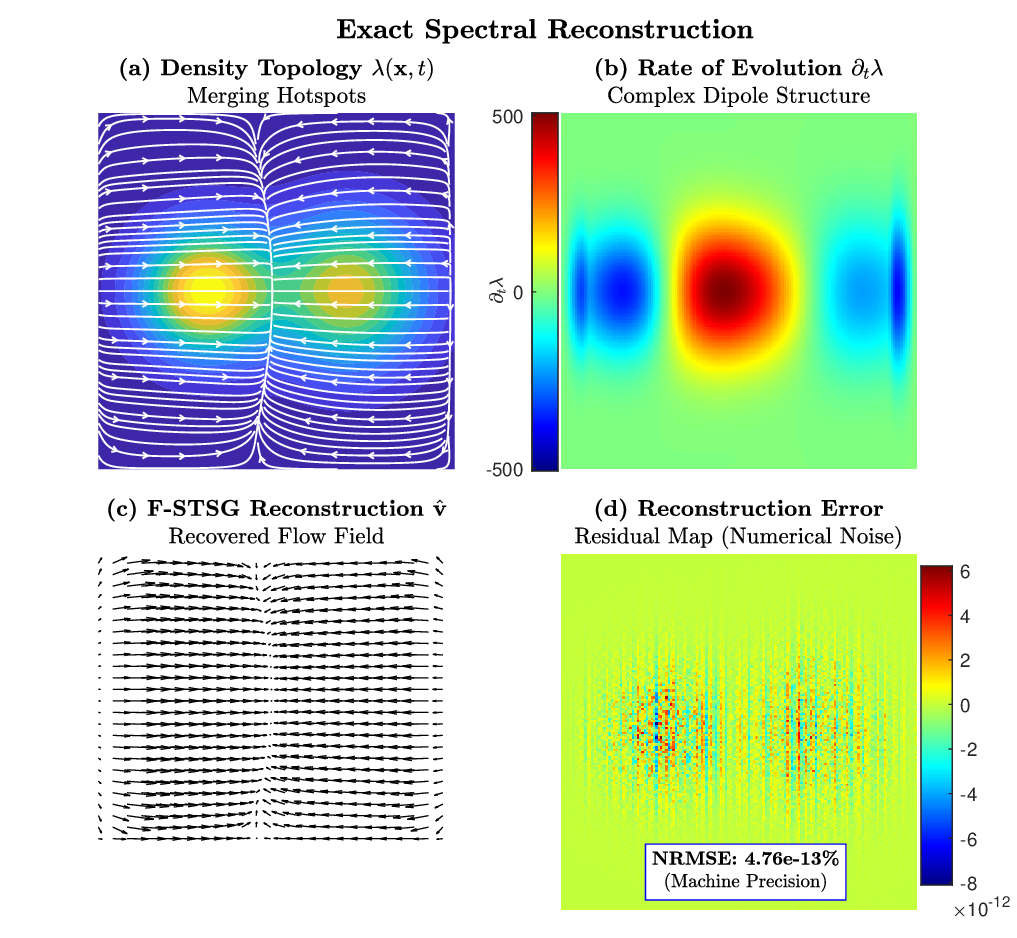}
		\vspace*{-6mm}
		\caption{Validation under complex non-symmetric topology (Hotspot Merging). 
			(a) Non-convex density field $\lambda$ with a saddle point. 
			(b) Scalar evolution rate $\partial_t \lambda$. 
			(c) Reconstructed velocity field $\hat{\bm{v}}$. 
			(d) Relative error map (NRMSE $\approx 10^{-13}\%$).}
		\label{fig:complex_topology} % Fig.4
		\vspace*{-1mm}
	\end{figure}
	
	\subsubsection{Experimental Setup and Consistent Discretization} % S6.3.1)
	We synthesize a Merging Hotspots scenario illustrated in Fig.~\ref{fig:complex_topology}(a), where two asymmetric Gaussian density peaks converge. This configuration generates a topological saddle point at the origin, characterized by a vanishing gradient, i.e., $\nabla \lambda \to \mathbf{0}$, but a non-zero Hessian, serving as a critical assessment of flow resolution in stagnation regions.
	To verify the solver at machine precision and eliminate discretization bias between the forward (ground truth generation) and inverse (spectral reconstruction) operators, we employ a spectrally consistent discretization scheme. Let $\mathcal{D}_h$ denote the central-difference divergence operator on the spatial grid. We construct the inverse spectral operator using the fast Fourier transform with modified wavenumbers $\tilde{k}$, defined as:
	\begin{equation} % eq.51
		\tilde{k}_x = \frac{\sin(k_x \Delta x)}{\Delta x}, \quad \tilde{k}_y = \frac{\sin(k_y \Delta y)}{\Delta y},
	\end{equation}
	where $k_x$ and $k_y$ are the standard wavenumbers, and $\Delta x$ and $\Delta y$ are the grid spacings. This modification ensures that the spectral derivative operator is algebraically equivalent to the spatial central difference operator. Consequently, the ground truth velocity field $\bm{v}_{\mathrm{true}}$ is generated by projecting the raw kinematic superposition onto the curl-free manifold using this consistent spectral basis:
	\begin{equation} % eq.52
		\bm{v}_{\mathrm{true}} = \mathcal{P}_{\text{consistent}}\left( \frac{\lambda_A \bm{v}_A + \lambda_B \bm{v}_B}{\lambda_A + \lambda_B} \right).
	\end{equation}
	This formulation strictly enforces the irrotational constraint $\bm{v}_{\mathrm{true}} = -\nabla \phi$ in the discrete domain, allowing for the isolation of topological solving capability from discretization errors.
	
	\subsubsection{Results and Analysis} % S6.3.2)
	The reconstruction results are summarized in Fig.~\ref{fig:complex_topology}. As visualized in Panel (c), the reconstructed field $\hat{\bm{v}}$ correctly resolves the topological saddle point at the origin, accurately capturing the flow stagnation and redirection inherent to the merging process. 
	Crucially, Panel (d) demonstrates that the reconstruction error is limited solely by floating-point arithmetic. {\color{black}The NRMSE is approximately $10^{-13}\%$, which corresponds to machine-level precision. This result serves as a numerical consistency check, confirming the algebraic closure between the forward density evolution and the spectrally consistent inverse operator. It demonstrates that our discretization scheme accurately preserves the irrotational manifold defined in Theorem~\ref{Theorem1}, ensuring that the inversion process introduces no parasitic numerical vorticity or discretization-induced bias, even in the presence of topological saddle points.}

\begin{figure}[!b]
	\vspace*{-6mm}
	\centering
	\includegraphics[width=\columnwidth]{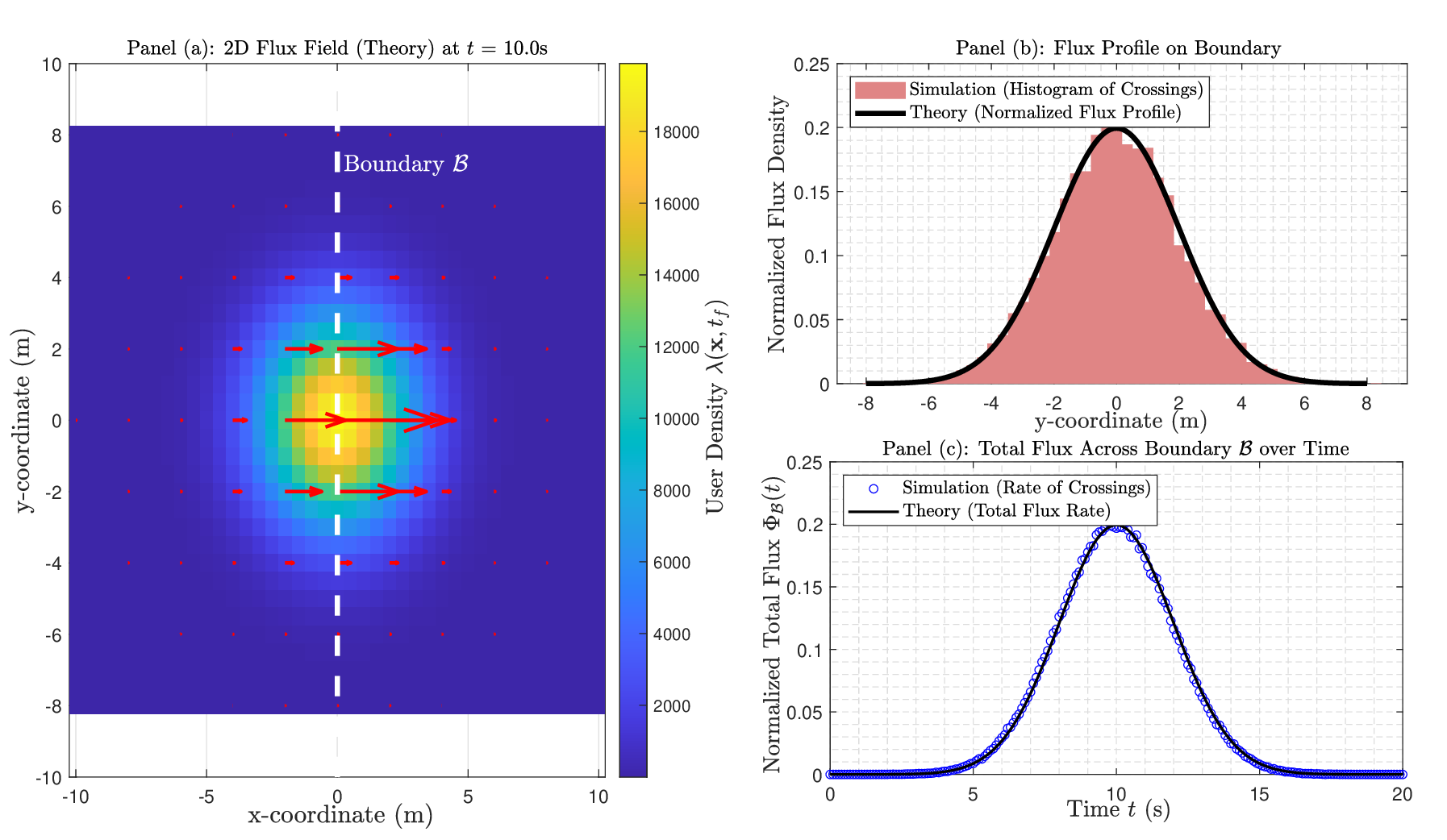} % Use single column width to save space
	\vspace*{-6mm}
	\caption{Validation of Information Flux (Proposition \ref{Prop_Flux_Integral}).}
	\label{fig:handover_flux_validation} % Fig.5
	\vspace*{-1mm}
\end{figure}

\subsection{Validation of Macroscopic Mobility Metrics}\label{sec:sim_mobility_metrics} % S6.4

Building on the kinematic foundation, we validate the derived mobility metrics: Net Flux across Boundaries (Proposition~\ref{Prop_Flux_Integral}), Congestion Divergence (Proposition~\ref{Prop_Material_Derivative}), and Centroid Drift (Proposition~\ref{Prop_Centroid_Drift}).

\subsubsection{Vector Field Consistency (Information Flux)} % S6.4.1)
We scrutinize the Information Flux $\bm{J}(\bm{x},t)$ in a non-symmetric scenario: a Gaussian hotspot translating across a static boundary. Fig.~\ref{fig:handover_flux_validation} presents the multi-dimensional validation. Panel (a) confirms that the analytical flux vectors align with the direction of mass transport. 
Panel (b) validates the spatial consistency of the derived field, showing that the theoretical flux density profile along the boundary perfectly matches the histogram of microscopic crossing locations.
Panel (c) shows that the theoretical flux integral $\int \bm{J} \cdot \mathbf{n} \mathrm{d}S$, depicted as a solid line, accurately predicts the empirical rate of discrete boundary crossings represented by markers. This validates \eqref{eq:handover_flux} as a deterministic predictor of boundary crossing rates.

\subsubsection{Scalar Consistency (Divergence and Centroid)} % S6.4.2)
We further verified the scalar metrics through extensive simulations. The numerical results confirm that the Congestion Divergence $\mathcal{D}(\bm{x},t)$ computed from the velocity field is mathematically identical to the negative material derivative of the density, validating Proposition~\ref{Prop_Material_Derivative}. Note that the precise validation of this divergence field provides the numerical basis for both the predictive phase-lead analysis in Section~\ref{sec:app_radial_flow} and the entropy analysis presented later in Section~\ref{sec:app_source_coding}. Therefore, we omit these results here for brevity.

Furthermore, the theoretical Centroid Drift $\bm{V}_C(t)$ is shown to track the ensemble average velocity of the particle cloud with zero phase lag, validating the global conservation law in Proposition~\ref{Prop_Centroid_Drift}. This kinematic fidelity forms the theoretical foundation for the global resource guidance strategy demonstrably superior to gradient-based methods, as analyzed in Section~\ref{sec:app_centroid}. Hence, we also omit this result here for brevity.

\begin{figure}[!b]
	\vspace*{-6mm}
	\centering
	\includegraphics[width=0.9\columnwidth]{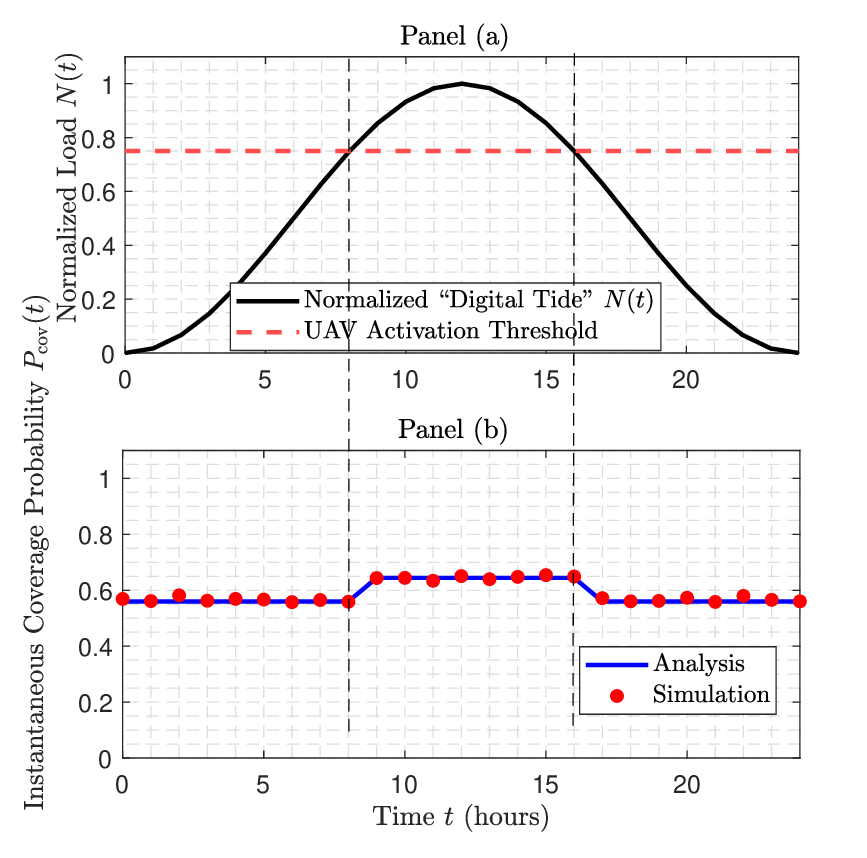}
	\vspace*{-6mm}
	\caption{Validation of dynamic network performance (Propositions \ref{Prop_Assoc_Prob} \& \ref{Prop_Cov_Prob}).}
	\label{fig:pcov_validation} % Fig.6
	\vspace*{-1mm}
\end{figure}

\subsection{Validation of Dynamic Network Performance}\label{sec:sim_performance} % S6.5

Finally, we validate the end-to-end coupling between the continuous intensity field and the discrete infrastructure performance (Propositions~\ref{Prop_Assoc_Prob} \& \ref{Prop_Cov_Prob}). To establish analytically tractable benchmarks, we instantiate the framework using a multi-tier HPPP model, with closed-form expressions for association probability $A_k(t)$ and conditional coverage $P_c^{(k)}(t)$ derived in Appendix~\ref{app:performance_derivation}.

Fig.~\ref{fig:pcov_validation} compares the analytical instantaneous coverage probability $P_{\mathrm{cov}}(t)$ against Monte Carlo snapshots. The results demonstrate that the framework accurately captures the time-varying performance evolution, correctly identifying the transition from the static baseline to the dynamic equilibrium driven by topological reconfiguration. The exact agreement verifies the validity of the adiabatic coupling between the continuous macroscopic intensity and the discrete Poisson infrastructure.

%======================================================================
%=====================Section VII======================================
%======================================================================

\section{Theoretical Insights and Asymptotic Analysis}\label{sec:applications} % S7

\begin{figure}[!b]
	\vspace*{-6mm}
	\centering
	\includegraphics[width=\columnwidth]{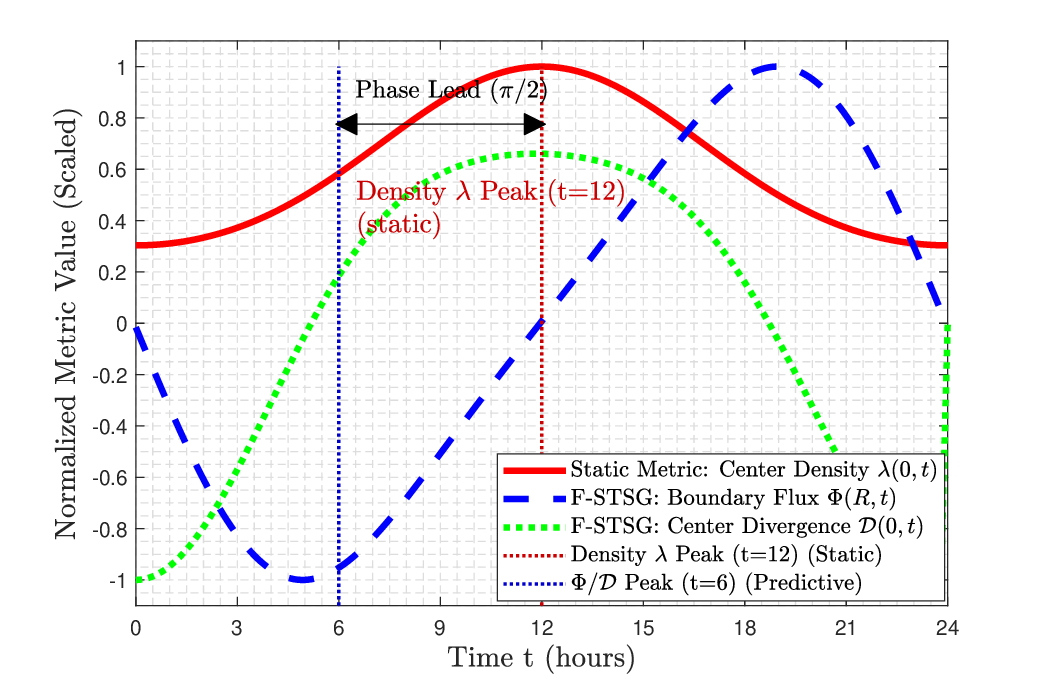}
	\vspace*{-6mm}
	\caption{Temporal causality analysis: the fundamental phase-lead property of the Information Flux ($\bm{J}$) and Congestion Divergence ($\mathcal{D}$) relative to the scalar density ($\lambda$).}
	\label{fig:handover} % Fig.7
	\vspace*{-2mm}
\end{figure}

Having established the numerical robustness and asymptotic consistency of the F-STSG framework, we characterize its utility in deriving fundamental limits and operational insights for non-stationary networks. The macroscopic mobility metrics developed in Section~\ref{sec:fluid_metrics} provide a predictive analytical lens for network orchestration, capturing dynamics that are fundamentally inaccessible via conventional static SG. We demonstrate the theoretical power of this field-theoretic approach through five key insights:
{\color{black}i)~The operational utility of the kinematic phase-lead properties of Information Flux and Congestion Divergence for predictive orchestration;}
ii)~Global spatial resource guidance, demonstrating that tracking the Network Centroid guarantees transport cost minimization compared to reactive gradient-based strategies;
iii)~Information-theoretic source-channel matching to quantify and minimize the entropic cost incurred by control loop latency;
{\color{black}iv)~Asymptotic energy-capacity scaling laws under quadratic-overhead approximations that define the thermodynamic limits of densification independent of physical layer parameters;} and
v)~The fundamental source-channel duality of network mobility, establishing the equivalence between macroscopic flow divergence and topological entropy production.

\subsection{Temporal Causality and Phase-Lead Properties}\label{sec:app_radial_flow} % S7.1

We first analyze the temporal response characteristics of the derived F-STSG operators. Consider the canonical non-separable intensity model, evolving under a periodic diurnal cycle with period $T=24$\,h. While traditional state-based analysis focuses on the scalar intensity $\lambda(r,t)$, our framework introduces vector-based kinematic operators.

Fig.~\ref{fig:handover} illustrates the normalized temporal evolution of the scalar density $\lambda$ versus the derived Information Flux $\bm{J}$ (Proposition~\ref{Prop_Flux_Integral}) and Congestion Divergence $\mathcal{D}$ (Proposition~\ref{Prop_Material_Derivative}).
The analysis reveals a fundamental phase-lead property:
\begin{itemize}
	\item \textit{Zero-Crossing at Peak Load:} At $t=12$\,h, the scalar density reaches its global maximum, where $\partial_t \lambda = 0$. Precisely at this instant, the Information Flux crosses zero. This confirms that $\bm{J}$ acts as the kinematic driver of the state: a vanishing flux is the necessary condition for a stationary point in the load.
	\item \textit{Predictive Causality:} The flux $\bm{J}$ and divergence $\mathcal{D}$ reach their extrema at $t=6$\,h, exhibiting a $T/4$ phase lead relative to the density peak.
	Mathematically, this arises because the flux captures the first-order temporal derivative of the field, satisfying $\nabla \cdot \bm{J} = -\partial_t \lambda$.
\end{itemize}

This phase shift constitutes a deterministic predictive window. Unlike reactive metrics that saturate strictly at the moment of congestion, the F-STSG operators provide sufficient statistics for future topological states, enabling control systems to respond to the cause of congestion—advection—rather than its effect, accumulation.

	\begin{figure}[!b]
	\vspace*{-5mm}
	\centering
	\includegraphics[width=\columnwidth]{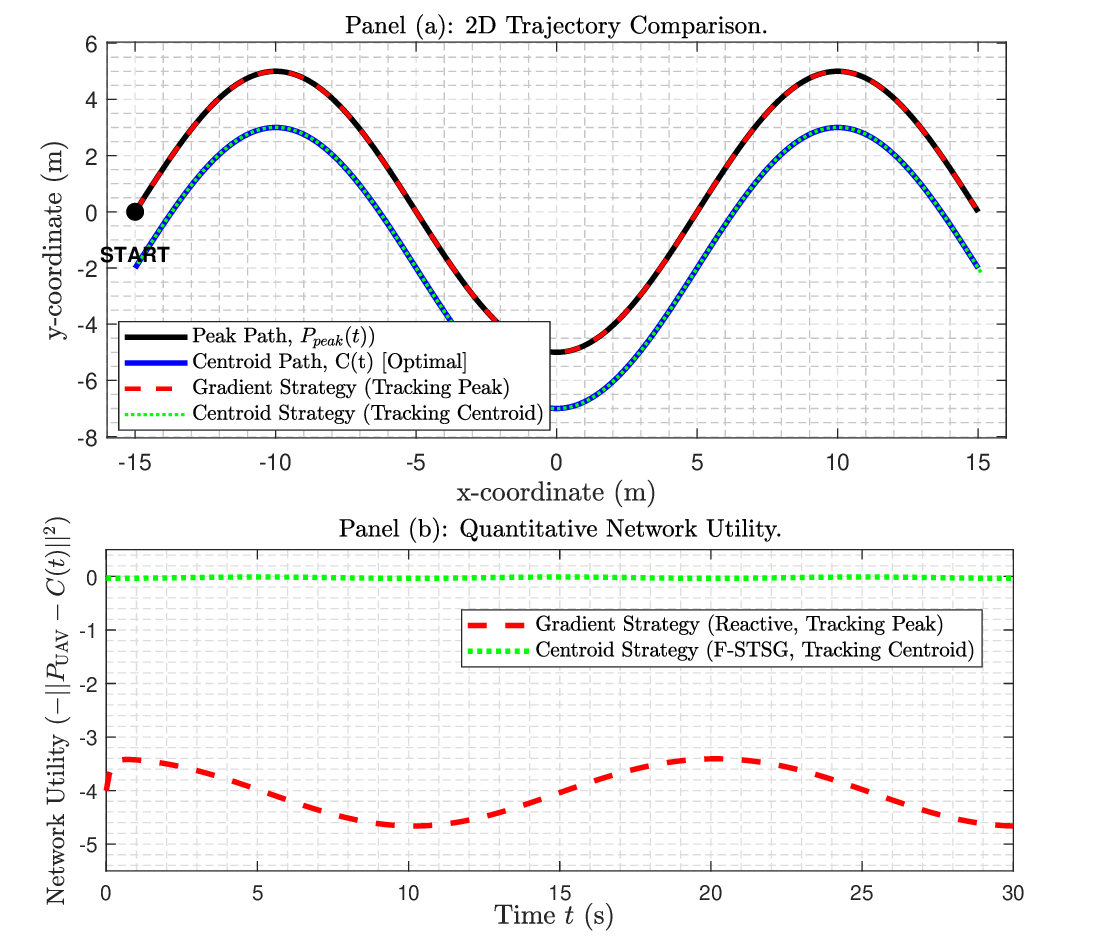}
	\vspace*{-8mm}
	\caption{Spatial optimality analysis: comparative trajectory dynamics in a non-symmetric topology.}
	\label{fig:guidance_application} % Fig.8
	\vspace*{-1mm}
\end{figure}

\subsection{Spatial Optimality and Gradient Dynamics}\label{sec:app_centroid} % S7.2

Beyond temporal prediction, the F-STSG framework offers rigorous bounds on spatial resource orchestration. This perspective is consistent with flux-aware UAV logistics provisioning, where mobile infrastructure is guided by the evolution of the demand field rather than by static hotspot snapshots~\cite{DongDigitalTidesTMC}. We analyze the trajectory optimization problem for a mobile service point, such as a drone, seeking to maximize network utility in a non-symmetric, translating ``comet'' topology, as illustrated in Fig.~\ref{fig:guidance_application}, where the network utility is defined as the negative quadratic transport cost, $U(t) \triangleq -\|\bm{P}_{\text{UAV}}(t) - \bm{C}(t)\|^2$. Maximizing this utility is mathematically equivalent to minimizing the Fr\'{e}chet variance of the system \cite{Villani2003}, thereby ensuring the global minimization of the aggregate path loss.
Two distinct tracking strategies are compared analytically:
\begin{itemize}
	\item \textit{Local Gradient Ascent (Reactive Strategy):} The resource tracks the instantaneous density peak $\bm{P}_{\text{peak}}(t) = \arg \max \lambda(\bm{x},t)$. This represents the standard greedy approach.
	\item \textit{Centroid Dynamics (Proactive):} The resource tracks the Network Centroid $\bm{C}(t)$ given in Definition~\ref{Definition5}, driven by the drift velocity $\bm{V}_C(t)$ derived in Proposition~\ref{Prop_Centroid_Drift}.
\end{itemize}

As shown in Fig.~\ref{fig:guidance_application}(b), the gradient-based strategy suffers from a persistent optimality gap. This is because $\bm{P}_{\text{peak}}$ is a local statistic of the scalar field, which fails to capture the global moment distribution of the probability mass.
In contrast, the F-STSG metric $\bm{C}(t)$ minimizes the Fréchet expectation of the squared distance to the user population.
Consequently, tracking the centroid via $\bm{V}_C(t)$ guarantees the global minimization of the transport cost functional, proving that the macroscopic velocity field $\bm{v}(\bm{x},t)$ provides the optimal control law for global coverage maximization.

\begin{figure}[!b]
	\vspace*{-5mm}
	\centering
	\includegraphics[width=1\columnwidth]{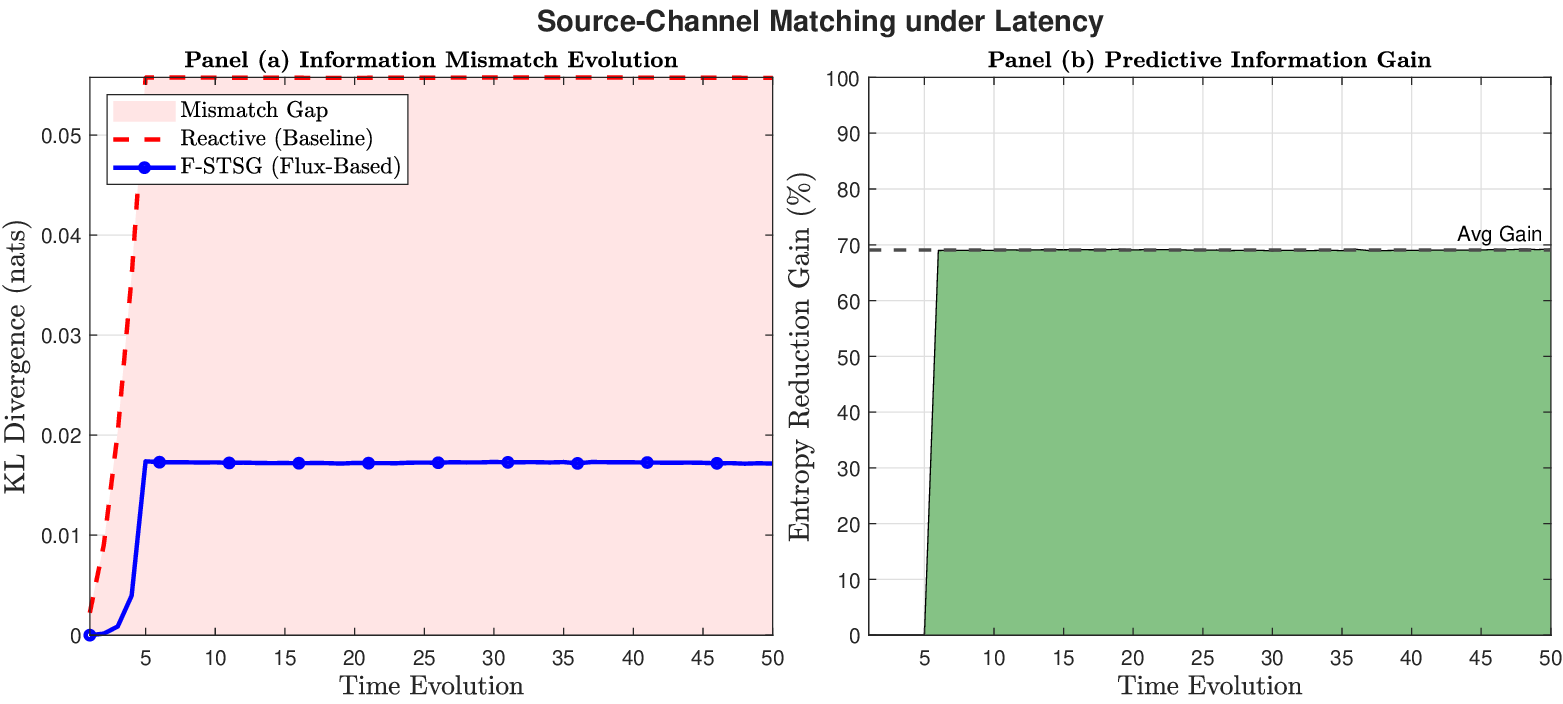} 
	\vspace*{-7mm}
	\caption{Source-channel matching performance under latency ($\tau=5$). (a) Time evolution of KL divergence for the reactive baseline and F-STSG. (b) Percentage of entropy reduction gain achieved by the F-STSG strategy.}
	\label{fig:entropy_validation} % Fig.9
	\vspace*{-1mm}
\end{figure}

\subsection{Source-Channel Matching: Minimizing Entropic Cost of Latency}\label{sec:app_kl_validation} % S7.3

This section evaluates the F-STSG framework from the perspective of source-channel matching, shifting the focus from geometric tracking error to information-theoretic discrepancy. We consider the problem of matching a controlled resource distribution $\mu(\bm{x},t)$ to a non-stationary demand distribution $\lambda(\bm{x},t)$ subject to a control loop latency $\tau$. This formulation provides a numerical verification of the source-channel duality discussed in Section~\ref{sec:variants}.

\subsubsection{Experimental Setup and Metric}
A non-symmetric rotating topology, illustrated in Fig.~\ref{fig:entropy_validation}(a), serves as the time-varying information source in our evaluation. Due to the feedback latency $\tau$, the available CSI is outdated. The reactive strategy (baseline) aligns resources with the lagged observation: $\mu_{\text{react}}(\bm{x},t) = \lambda(\bm{x}, t-\tau)$. In contrast, the F-STSG strategy utilizes the Information Flux $\bm{J}$ to compensate for this advective aging. By invoking the continuity equation, we perform a first-order predictive shift:
\begin{equation} % eq.53
	\mu_{\text{pro}}(\bm{x},t) \approx \lambda(\bm{x}, t-\tau) - \tau \nabla \cdot \bm{J}(\bm{x}, t-\tau).
\end{equation}
The matching efficiency is quantified by the Kullback-Leibler (KL) divergence, $D_{\mathrm{KL}}(\lambda_t || \mu_t)$, which measures the instantaneous information loss (in nats) incurred by the temporal decorrelation of the topology.

\subsubsection{Results and Information Gain}
Fig.~\ref{fig:entropy_validation} illustrates the evolution of the information mismatch. As shown in Panel (a), the reactive strategy yields a persistent entropy penalty of approximately $0.055$\,nats. This divergence quantifies the irreducible uncertainty arising from the spatial lag between stale estimates and the true distribution. By contrast, the F-STSG strategy suppresses the mismatch to a steady-state value of $0.019$\,nats, demonstrating the effectiveness of flux-based prediction in maintaining distribution alignment.

As quantified in Panel (b), the flux-based prediction maintains a stable entropy reduction gain of approximately 65\%. This result confirms that the Information Flux $\bm{J}$ serves as an effective kinematic predictor, translating the outdated measure along the manifold of the Digital Tide. By minimizing this mismatch, the framework reduces the thermodynamic cost of network control and preserves the mutual information between the infrastructure and the mobile demand field.

%======================================================================
%=========================Section VII-C  =============================
%======================================================================

\subsection{Asymptotic Energy-Capacity Scaling Laws}\label{sec:app_optimization} % S7.4

Finally, we utilize the F-STSG framework to derive fundamental scaling limits \cite{Gupta2000} for network densification. We consider the trade-off between ASE and network energy efficiency (EE).

\subsubsection{Problem Formulation}
The activation of the on-demand tier increases the transmitter density $\lambda$.
From an information-theoretic perspective, the ASE $\mathcal{T}(\lambda)$ scales linearly with density in the interference-limited regime, obeying $\mathcal{T} \sim \lambda \bar{R}_{\infty}$.
However, the energetic cost of densification is non-linear.
Let $\Omega(\lambda)$ be a generalized convex cost functional representing the total network power density.
We model this cost based on the rigorous finite-rate feedback theory established by Jindal \cite{Jindal2006}.
In a dense interference network with $K \propto \lambda$ active nodes, achieving the maximal degrees of freedom (DoF) requires suppressing interference to the noise floor.
Crucially, maintaining the full multiplexing gain requires the CSI feedback precision per link to scale linearly with the SIR, yielding $R_{\text{fb}} \propto K$, where $R_{\text{fb}}$ denotes the required rate of CSI feedback (in bits per user) necessary to maintain a bounded rate loss relative to the perfect CSI capacity.
Aggregated over $K$ users, the total network coordination overhead scales as $\mathcal{O}(K^2) \propto \lambda^2$.
Thus, the cost functional $\Omega(\lambda)$ admits a Maclaurin series expansion with a dominant second-order term:
\begin{equation} % eq.54
	\Omega(\lambda) = \underbrace{c_0}_{\text{Static}} + \underbrace{c_1 \lambda}_{\text{Transmission}} + \underbrace{\kappa \lambda^2}_{\text{Coordination}} + \mathcal{O}(\lambda^3).
\end{equation}
The network EE is defined as $\eta_{\mathrm{EE}}(\lambda) = \mathcal{T}(\lambda)/\Omega(\lambda)$.

{\color{black}\subsubsection{Derivation of the Asymptotic Scaling Law}
	Rather than assuming a specific heuristic power model, we analyze the asymptotic behavior of $\Omega(\lambda)$ based on the topology of internode interactions.
	\begin{Proposition}\label{Prop_Scaling} %[Scaling Law] % Pro.7
		Under the quadratic-overhead approximation where coordination cost is dominated by pairwise interactions ($\Omega''(\lambda) \approx 2\kappa > 0$) and higher-order terms are negligible ($\mathcal{O}(\lambda^3) \to 0$), the energy-optimal node density $\lambda^*$ obeys the asymptotic scaling law:
	\begin{equation} % eq.55
		\lambda^* \approx \sqrt{\frac{\mathcal{P}_{\mathrm{static}}}{\kappa}},
		\label{eq:scaling_law_final}
	\end{equation}
	where $\mathcal{P}_{\mathrm{static}} = \Omega(0)$ is the baseline power and $\kappa = \frac{1}{2}\Omega''(0)$ is the effective coordination coefficient.
\end{Proposition}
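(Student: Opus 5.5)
The plan is to maximize the energy efficiency $\eta_{\mathrm{EE}}(\lambda)=\mathcal{T}(\lambda)/\Omega(\lambda)$ directly, using the two modelling inputs already fixed above. First, in the interference-limited regime the ASE is linear, $\mathcal{T}(\lambda)=\lambda\bar{R}_{\infty}$, and by the SIR scale invariance invoked in Definition~\ref{Definition4}, $\bar{R}_{\infty}$ does not depend on $\lambda$. Second, under the stated approximation the cost truncates to the quadratic $\Omega(\lambda)=c_0+c_1\lambda+\kappa\lambda^2$, with $c_0=\Omega(0)=\mathcal{P}_{\mathrm{static}}$ and $\kappa=\tfrac12\Omega''(0)$. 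Because $\bar{R}_{\infty}$ enters only as a positive multiplicative constant, maximizing $\eta_{\mathrm{EE}}$ is the same as maximizing $g(\lambda)=\lambda/\Omega(\lambda)$ over $\lambda>0$. This already shows that $\lambda^*$ is independent of link-level spectral efficiency.

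The key step is the first-order condition. Differentiating gives
\begin{equation*}
g'(\lambda)=\frac{\Omega(\lambda)-\lambda\Omega'(\lambda)}{\Omega(\lambda)^2},
\end{equation*}
so a stationary point satisfies $\Omega(\lambda)=\lambda\Omega'(\lambda)$. Substituting the quadratic, $c_0+c_1\lambda+\kappa\lambda^2=c_1\lambda+2\kappa\lambda^2$. The linear transmission term cancels identically, leaving $\kappa\lambda^2=c_0$ and hence $\lambda^*=\sqrt{c_0/\kappa}=\sqrt{\mathcal{P}_{\mathrm{static}}/\kappa}$.

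I would then confirm that this point is the global maximum on $(0,\infty)$. The numerator $\Omega-\lambda\Omega'=c_0-\kappa\lambda^2$ is strictly decreasing and changes sign exactly once, from positive to negative, so $g$ is unimodal. This requires $c_0,\kappa>0$ and $\Omega>0$ on $(0,\infty)$. Geometrically, $\lambda^*$ is the tangency point of a ray from the origin with the convex cost curve.

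The main obstacle is justifying the approximation "$\approx$" once the $\mathcal{O}(\lambda^3)$ remainder is restored and $\mathcal{T}$ is only asymptotically linear. My first attempt would write $\Omega=c_0+c_1\lambda+\kappa\lambda^2+r(\lambda)$ with $|r(\lambda)|\le c_3\lambda^3$. The stationarity equation then becomes $\kappa\lambda^2-c_0=\lambda r'(\lambda)-r(\lambda)=\mathcal{O}(\lambda^3)$. An implicit-function argument around $\lambda^*$ would give $\lambda^*=\sqrt{c_0/\kappa}\,(1+\mathcal{O}(c_3\sqrt{c_0}/\kappa^{3/2}))$. This yields the scaling law in the regime where the cubic coefficient is small relative to $\kappa^{3/2}/\sqrt{c_0}$, which is exactly the quadratic-overhead hypothesis. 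A similar perturbation of $\mathcal{T}$ by a density-dependent correction to $\bar{R}$ can be handled the same way, but I expect it to carry only a heuristic error bound.
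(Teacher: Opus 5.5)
Your argument is the same as the paper's: maximize $\lambda/\Omega(\lambda)$, impose the quotient-rule condition $\Omega=\lambda\Omega'$, note that $c_1$ cancels, and obtain $\lambda^*=\sqrt{c_0/\kappa}$. Your sign-change (unimodality) argument for global optimality is actually tighter than the paper's appeal to $\eta_{\mathrm{EE}}''(\lambda^*)<0$, which on its own only certifies a local maximum. In your optional remainder analysis, the perturbed condition should read $\kappa\lambda^2-c_0=r-\lambda r'$ (you flipped the sign), and you need a bound on $r'$ as well as on $r$; the relative correction $\mathcal{O}(c_3\sqrt{c_0}/\kappa^{3/2})$ you state is nonetheless correct in magnitude.
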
}

\begin{proof}
  See Appendix~\ref{ApendixI}.
\end{proof}

{\color{black}\begin{remark} % Rem.7
	Proposition~\ref{Prop_Scaling} reveals two fundamental design principles:
	\begin{itemize}
		\item \textit{Structural Invariance via Order Cancellation:}
		Crucially, the linear coefficient $c_1$ (dynamic transmission power) is mathematically absent from the optimality condition. This implies that within the quadratic regime, the densification limit is intrinsic to the topology of coordination and decoupled from propagation parameters. We note, however, that this exact cancellation strictly relies on the quadratic truncation; if the optimal density shifts into a regime where higher-order coordination costs (e.g., cubic terms from complex multi-node routing) become non-negligible, the order cancellation breaks down and $c_1$ re-enters the optimality condition.
		\item \textit{The Inverse-Square Information Barrier:}
		The quadratic nature of the feedback cost implies a hard thermodynamic limit.
		The scaling $\lambda^* \propto \kappa^{-1/2}$ dictates that to double the optimal network density, the unit coordination cost must be reduced by a factor of four.
		This proves that merely improving hardware efficiency ($\mathcal{P}_{\mathrm{static}}$) yields diminishing returns, and the fundamental bottleneck is the information entropy of coordination ($\kappa$).
	\end{itemize}
\end{remark}}

\begin{figure}[!b]
\vspace*{-4mm}
	\centering
	\includegraphics[width=0.9\columnwidth]{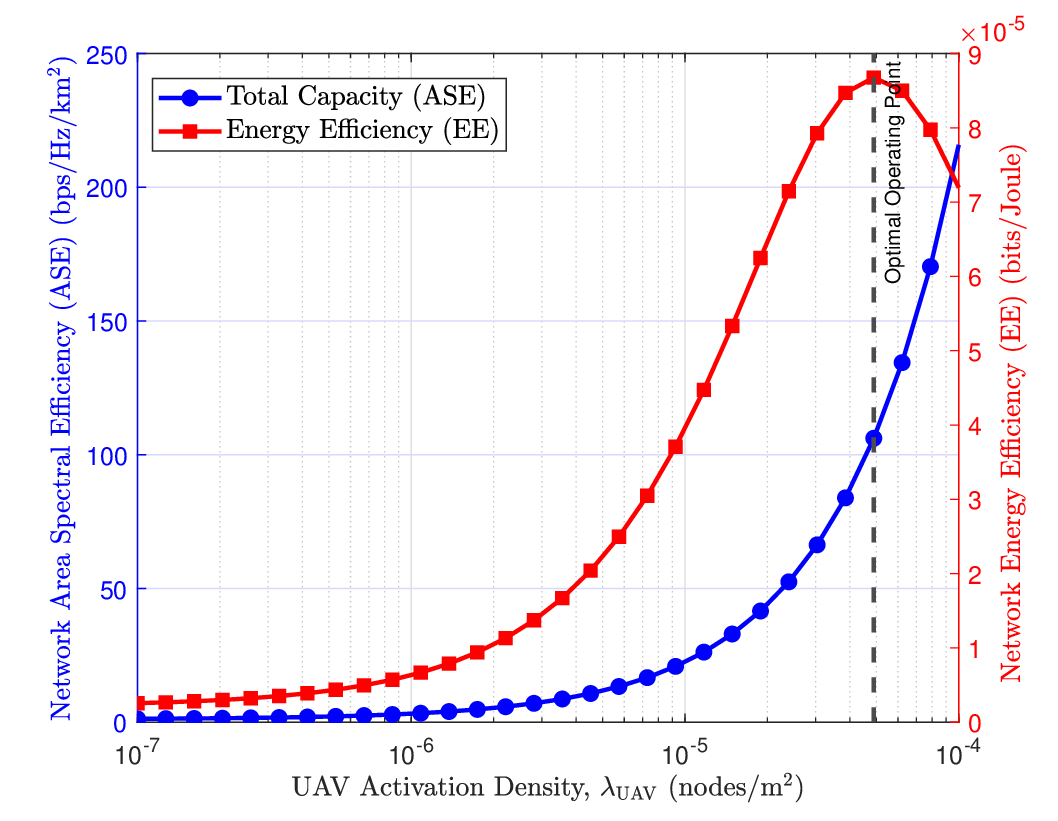}
	\vspace*{-4mm}
	\caption{Energy-capacity scaling analysis: numerical validation of the universal scaling law.}
	\label{fig:optimal_ee_ase} % Fig.10
	\vspace*{-1mm}
\end{figure}

\subsubsection{Thermodynamic Interpretation and Operating Regimes}
Equation \eqref{eq:scaling_law_final} defines the thermodynamic equilibrium point for UDNs.
{\color{black}Numerical validation presented in Fig.~\ref{fig:optimal_ee_ase}, utilizing parameters $\mathcal{P}_{\mathrm{static}}\! \approx\! 100$\,W and $\kappa\! \approx\!
	5 \times 10^7$, confirms this scaling law within the valid bounds of the quadratic approximation and identifies two distinct operating regimes:}
\begin{itemize}
	\item \textit{Noise-Limited Regime ($\lambda \ll \lambda^*$):} The network is sparse. Densification yields positive marginal energy returns as path loss reduction outweighs power costs.
	\item \textit{Coordination-Limited Regime ($\lambda \gg \lambda^*$):} The network is congested by overhead. The quadratic term in $\Omega(\lambda)$ dominates, leading to a rapid decay in energy efficiency.
\end{itemize}

\subsection{The Source-Channel Duality of Mobility}\label{sec:app_source_coding} % S7.5

{\color{black}The F-STSG allows us to elevate the discussion from physical transport to information theory. From the perspective of a network controller, the dynamic topology itself constitutes an evolving information source, while the control signaling infrastructure acts as the communication channel. To maintain optimal resource allocation, the channel capacity must be sufficient to track this non-stationary source. This raises a fundamental conceptual question: \textit{What is the intrinsic information-theoretic footprint of the topological state?} We now prove that the Congestion Divergence $\mathcal{D}$ derived in our framework is rigorously equivalent to the entropy production rate of this topological source, thereby establishing the kinematic foundation for this duality.}

\begin{Proposition}\label{Prop_Entropy} %[Topological Entropy Rate] % Pro.8
	Let $H(t) = - \int_{\mathbb{R}^d} p(\bm{x},t) \ln p(\bm{x},t) \,\mathrm{d}\bm{x}$ be the differential entropy (in nats) of the normalized node distribution.
	In a conservative system governed by the macroscopic velocity field $\bm{v}(\bm{x},t)$, the rate of entropy production is rigorously determined by the spatial expectation of the Congestion Divergence:
	\begin{equation} % eq.56
		\frac{\mathrm{d}H(t)}{\mathrm{d}t} = \mathbb{E}_{\bm{x} \sim p(\cdot,t)} [\mathcal{D}(\bm{x},t)].
		\label{eq:entropy_rate}
	\end{equation}
\end{Proposition}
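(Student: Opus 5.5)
We will compute $\frac{\mathrm{d}H}{\mathrm{d}t}$ directly from the continuity equation \eqref{eq:continuity_conservative} written for the normalized density. In the conservative regime $N(t)\equiv N$ is constant, so $p=\lambda/N$ satisfies
\[
\partial_t p+\nabla\cdot(p\bm{v})=0 .
\]
The first step is to differentiate under the integral sign:
\[
\frac{\mathrm{d}H}{\mathrm{d}t}=-\int(\partial_t p)\ln p\,\mathrm{d}\bm{x}-\int \partial_t p\,\mathrm{d}\bm{x}.
\]
The second term vanishes because $\int p\,\mathrm{d}\bm{x}=1$ for all $t$. This interchange is justified under the smoothness and decay assumptions already used for Theorem~\ref{Theorem1}: $\lambda$ is smooth, and either $\Omega$ is bounded with $\bm{v}\cdot\mathbf{n}=0$, or we work on $\mathbb{R}^d$ with finite kinetic energy and sufficient decay.

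Next, we substitute $\partial_t p=-\nabla\cdot(p\bm{v})$ and integrate by parts:
\[
\frac{\mathrm{d}H}{\mathrm{d}t}=\int \nabla\cdot(p\bm{v})\ln p\,\mathrm{d}\bm{x}
=\oint p\ln p\,\bm{v}\cdot\mathbf{n}\,\mathrm{d}S-\int p\bm{v}\cdot\nabla\ln p\,\mathrm{d}\bm{x}.
\]
The boundary term vanishes by the no-flux condition \eqref{eq:neumann_bc}, or by decay at infinity in the whole-space case. Using $p\nabla\ln p=\nabla p$, the remaining term equals $-\int\bm{v}\cdot\nabla p\,\mathrm{d}\bm{x}$.

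A second integration by parts, again with a vanishing boundary term $\oint p\,\bm{v}\cdot\mathbf{n}\,\mathrm{d}S=0$, gives
\[
-\int\bm{v}\cdot\nabla p\,\mathrm{d}\bm{x}=\int p\,\nabla\cdot\bm{v}\,\mathrm{d}\bm{x}=\mathbb{E}_{\bm{x}\sim p}[\mathcal{D}(\bm{x},t)],
\]
which is \eqref{eq:entropy_rate}.

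An alternative route gives a useful consistency check. By Proposition~\ref{Prop_Material_Derivative}, $\mathcal{D}=-\frac{D}{Dt}\ln\lambda=-\frac{D}{Dt}\ln p$. Writing $H=-\mathbb{E}[\ln p]$ and applying the Reynolds transport theorem along the Lagrangian flow, under which $p\,\mathrm{d}\bm{x}$ is invariant, yields $\frac{\mathrm{d}H}{\mathrm{d}t}=-\mathbb{E}\bigl[\frac{D}{Dt}\ln p\bigr]=\mathbb{E}[\mathcal{D}]$ immediately.

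The main obstacle is rigor in the whole-space case. We need $p\ln p\,\bm{v}$ and $p\bm{v}$ to decay fast enough that the surface integrals vanish, and we need $p\,|\nabla\cdot\bm{v}|$ to be integrable. For heavy-tailed profiles such as the power-law Breathing City model, we would first check that $\beta(t)>d+1$ or a similar condition ensures this. Otherwise we would state the result on the bounded domain of Theorem~\ref{Theorem1}, where the Neumann condition disposes of the boundary terms cleanly.
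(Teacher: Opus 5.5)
Your proposal is correct and follows essentially the same route as the paper's proof. Both differentiate under the integral, substitute $\partial_t p=-\nabla\cdot(p\bm{v})$, integrate by parts to reach $-\int\bm{v}\cdot\nabla p\,\mathrm{d}\bm{x}$, and then use the product rule with a vanishing total-flux term to obtain $\int p\,\nabla\cdot\bm{v}\,\mathrm{d}\bm{x}$. Your additions do not change the argument: you dispose of the constant term separately, you track the boundary terms explicitly, and you add a Lagrangian consistency check via Proposition~\ref{Prop_Material_Derivative}.
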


\begin{proof}
  See Appendix~\ref{ApendixJ}.
\end{proof}

{\color{black}\begin{remark} \label{rem:entropy_interpretation} 
	Proposition~\ref{Prop_Entropy} establishes the continuous foundation for the source-channel duality of network mobility:
	\begin{itemize}
		\item \textit{The Duality Perspective:} The dynamic topology acts as the non-stationary information source, while the control network serves as the channel. Stability dictates that the signaling capacity must accommodate the topological deformation. Here, the Information Flux $\bm{J}$ acts as the kinematic generator of the source's differential entropy.
		\item \textit{Kinematic Information Footprint:} While the differential entropy $H(t)$ can be negative and does not directly equate to a discrete bit rate, its temporal derivative $\frac{\mathrm{d}H}{\mathrm{d}t}$ serves as an intrinsic information footprint. A large absolute magnitude $\left| \frac{\mathrm{d}H}{\mathrm{d}t} \right|$ indicates rapid spatial deformation, signaling severe non-stationarity that stresses the control channel.
		\item \textit{Rate-Distortion Implications:} Establishing the strict operational bit rate $R_{\mathrm{ctrl}}$ requires formalizing a rate-distortion function $R(D)$ with specific spatial quantization bounds. Nevertheless, Proposition~\ref{Prop_Entropy} guarantees that any such minimum coding rate required to bound the tracking error will be fundamentally driven by the statistical moments of the divergence field $\mathcal{D}(\bm{x},t)$.
	\end{itemize}
\end{remark}}

%======================================================================
%===========================Section VIII ==============================
%======================================================================
\section{Theoretical Generalizations and Control Implications}\label{sec:variants} % S8

The F-STSG framework established in Section~\ref{sec:core_theory} was constructed under a set of canonical assumptions—specifically, that the macroscopic transport is conservative ($S=0$) and irrotational ($\nabla \times \bm{v} = \bm{0}$), and that the infrastructure follows a Poisson distribution.
These constraints were imposed to ensure analytical tractability and to isolate the baseline mechanics of the intensity evolution.
However, the core field-theoretic architecture is generic.
In this section, we demonstrate how the framework can be rigorously generalized to model complex physical environments where these constraints are relaxed.

\subsection{Generalized Transport Dynamics}\label{sec:gen_transport} % S8.1

Real-world transport dynamics often exhibit behaviors that violate ideal fluid assumptions. We now formalize the mathematical extensions required to capture non-conservative source dynamics and rotational vorticity.

\subsubsection{Non-Conservative Dynamics (Poisson Formulation)} % S8.1.1)
The baseline derivation assumed that local intensity changes are solely due to advective flux divergence.
However, in the scenarios characterized by bursty demand generation, such as synchronized IoT wake-up cycles, the system is non-conservative.
We generalize the model by introducing a scalar source field $S(\bm{x},t)$.
The governing continuity equation becomes:
\begin{equation} % eq.59
	\frac{\partial \lambda}{\partial t} + \nabla \cdot (\lambda \bm{v}) = S(\bm{x},t).
\end{equation}
Under the minimum-energy potential flow assumption ($\bm{v} = -\nabla \phi$), this transforms the governing equation into a non-homogeneous Poisson equation:
\begin{equation} % eq.60
	\nabla \cdot (\lambda(\bm{x},t) \nabla \phi(\bm{x},t)) = \underbrace{\frac{\partial \lambda}{\partial t} - S(\bm{x},t)}_{\text{Net Source Term } \mathcal{S}_{\mathrm{net}}(\bm{x},t)}.
	\label{eq:poisson_non_conservative}
\end{equation}
This extends the F-STSG to the class of non-homogeneous elliptic PDEs.
The formal solution is expressed via the Green's function $G(\bm{x}, \bm{y})$ of the operator $\nabla \cdot (\lambda \nabla)$:
\begin{equation} % eq.61
	\phi(\bm{x},t) = \int_{\Omega} G(\bm{x}, \bm{y}) \left( \frac{\partial \lambda}{\partial t}(\bm{y},t) - S(\bm{y},t) \right) \, \mathrm{d}\bm{y}.
\end{equation}
This formulation confirms that even with stochastic source terms, the macroscopic velocity field remains mathematically determinate.

\subsubsection{Rotational Dynamics (Vector Potential)} % S8.1.2)
Complex mobility patterns may exhibit vorticity, exemplified by orbital traffic, that cannot be described by a scalar potential alone. To model such rotational flows where $\nabla \times \bm{v} \neq \bm{0}$, we consider the generalized velocity field decomposition:
\begin{equation} % eq.62
	\bm{v}(\bm{x},t) = \bm{v}_{\mathrm{irr}} + \bm{v}_{\mathrm{rot}} = -\nabla \phi(\bm{x},t) + \nabla \times \bm{A}(\bm{x},t),
\end{equation}
where $\bm{A}(\bm{x},t)$ is the vector potential.
In the context of Optimal Transport, the irrotational component $-\nabla \phi$ represents the energy-optimal geodesic path for density reconfiguration. Conversely, the solenoidal component $\bm{v}_{\mathrm{rot}}$ represents non-geodesic circulation. While this component does not contribute to topological compression (divergence is zero) or the mass conservation constraint $\partial_t \lambda$, it strictly increases the kinetic energy of the system. Therefore, $\bm{A}$ captures the ``excess'' kinematic cost of mobility—such as orbital loitering—that exists in reality but is filtered out by the efficiency-driven F-STSG baseline.

\subsection{Microscopic Turbulence and Kinetic Entropy}\label{sec:turbulence} % S8.2

The foundational F-STSG inversion derived in Section~\ref{sec:core_theory} applies the principle of minimum kinetic energy to recover a unique potential field. This solution effectively captures the laminar macroscopic transport, denoted by $\bar{\bm{v}}(\bm{x},t)$. However, in high-mobility regimes, the discrete node distribution exhibits stochastic deviations from this energy-minimizing field. To rigorously characterize this, we invoke the Reynolds decomposition~\cite{Batchelor1967}, splitting the instantaneous velocity field into a coherent mean component and a turbulent fluctuation:
\begin{equation} % eq.63
	\bm{v}(\bm{x},t) = \bar{\bm{v}}(\bm{x},t) + \bm{v}'(\bm{x},t),
\end{equation}
where the fluctuation satisfies $\mathbb{E}[\bm{v}']\! =\! \bm{0}$. Physically, $\bar{\bm{v}}$ represents the coherent macroscopic trend, such as collective handovers, while $\bm{v}'$ captures the local stochastic kinematics. We quantify the intensity of this disorder via the specific turbulent kinetic energy (TKE), defined as $k(\bm{x},t)\! \triangleq\! \frac{1}{2}\mathbb{E}[\|\bm{v}'\|^2]$.

From an information-theoretic perspective, this decomposition reveals the fine-grained structure of the topological entropy rate derived in Proposition~\ref{Prop_Entropy}. While the macroscopic divergence $\nabla \cdot \bar{\bm{v}}$ drives the coherent information generation, the TKE $k(\bm{x},t)$ constitutes an irreducible entropy floor. Consequently, the presence of turbulence degrades the extrapolation fidelity of the reconstructed field. Assuming that the network controller relies solely on the recovered macroscopic potential $\phi$ for state prediction, the tracking error $E(t)$ accumulates according to a turbulent diffusion process. The $L^2$-norm of the density reconstruction error is bounded by the cumulative TKE:
\begin{equation} % eq.64
	E(t) \triangleq \| \lambda - \hat{\lambda} \|_{L^2} \propto \sqrt{\int_0^t \int_{\Omega} k(\bm{x},\tau) \lambda(\bm{x},\tau) \,\mathrm{d}\bm{x} \,\mathrm{d}\tau}.
\end{equation}
This relationship implies that without microscopic correction, the state estimation uncertainty grows with the square root of the integrated turbulent energy.

\begin{remark}\label{rem:topological_temperature} % Rem.10
	In our hydrodynamic framework, the TKE $k(\bm{x},t)$ functions analogously to a thermodynamic temperature. In the low-temperature regime ($k \to 0$), the topology is highly structured and deterministic, allowing for efficient compression of control signaling ($\mathcal{D} \approx 0$). Conversely, in the high-temperature regime ($k \gg 0$), the microscopic vorticity generates significant background entropy. This implies that the universal scaling law in Proposition~\ref{Prop_Scaling} requires modification under turbulence: the effective coordination cost $\kappa$ increases monotonically with $k$, as additional signaling bandwidth is mandated to track the stochastic component $\bm{v}'$ that is orthogonal to the minimum-energy potential flow.
\end{remark}

%=================================================================================

\subsection{Generalized Stochastic Geometry: Structural Correlations}\label{sec:gen_infrastructure} % S8.3

The validation in Section~\ref{sec:simulation} utilized the homogeneous PPP (HPPP) for tractability.
We now extend the performance analysis to general stationary point processes exhibiting structural correlations.

\subsubsection{Pair Correlation Function} % S8.3.1)
For a general infrastructure process $\Phi_{\mathrm{B}}$, the interference statistics are governed by its second-order moment measure. Specifically, we utilize the PCF, denoted as $g(r)$, which represents the normalized probability of finding two nodes separated by distance $r$.
\begin{itemize}
	\item \textit{Cluster Processes:} $g(r) > 1$ for small $r$, capturing the aggregation of small cells in hotspots.
	\item \textit{Repulsive Processes:} $g(r) < 1$ for small $r$, capturing the grid-like regularity of planned macro-cells.
\end{itemize}

\subsubsection{Palm Measure Representation} % S8.3.2)
The interference analysis is rigorously conducted under the reduced Palm distribution.
The generalized Laplace functional of the aggregate interference $I$ is approximated by the PCF-weighted integral (using the Campbell-Mecke theorem approximation):
\begin{align} % eq.65
	\mathcal{L}_I(s) \approx& \exp\Bigg( -2\pi \lambda_{\mathrm{B}} \int_{0}^{\infty} \left(1 - \mathbb{E}_h \left[\frac{1}{1+s P h r^{-\alpha}}\right]\right) \nonumber \\
	 & \times g(r) r \,\mathrm{d}r \Bigg).
\end{align}
By substituting this generalized kernel into the coverage probability derivation (Proposition~\ref{Prop_Cov_Prob}), the F-STSG framework incorporates the impact of infrastructure geometry into the dynamic performance limits.

\subsection{Lagrangian Control Theoretic Extensions}\label{sec:gen_control} % S8.4

Finally, the field-theoretic formulation enables a transition from reactive Eulerian control to predictive Lagrangian control.

\subsubsection{Conservation-Based State Prediction} % S8.4.1)
Traditional control loops suffer from causality lag, reacting only after density accumulates.
Our framework reveals that the future evolution of the intensity field is mathematically encoded in the current divergence of the Information Flux $\bm{J}$.
By coupling the first-order Taylor expansion with the continuity Eeuation, we derive the advection predictor:
\begin{equation} % eq.66
	\hat{\lambda}(\bm{x}, t+\Delta t) \approx \lambda(\bm{x},t) - \Delta t \left( \nabla \cdot \bm{J}(\bm{x},t) - S(\bm{x},t) \right).
\end{equation}
This identity proves that the spatial divergence of the flux is a sufficient statistic for the temporal derivative of the load, enabling zero-latency provisioning.

\subsubsection{Material Derivative Feedback} % S8.4.2)
This leads to a generalized feed-forward control law.
The activation state of on-demand resources, $p_a(\bm{x},t)$, should be defined not as a function of static load $\lambda$, but as a functional of the material derivative:
\begin{equation} % eq.67
	\frac{D\lambda}{Dt} \triangleq \frac{\partial \lambda}{\partial t} + \bm{v} \cdot \nabla \lambda = -\lambda (\nabla \cdot \bm{v}).
\end{equation}
By targeting regions with highly negative material derivative (topological sinks), the network can preemptively allocate resources exactly where the demand field is compressing, stabilizing the quality of service before congestion manifests.

%======================================================================
%==========================Section IX==================================
%======================================================================
\section{Future Research Directions}\label{sec:future_work} % S9

The F-STSG framework establishes a rigorous bridge between continuum mechanics and network information theory, transforming the analysis of mobile networks from static snapshots to dynamic fields.
This foundational work opens three critical theoretical frontiers that warrant further investigation.

\subsection{Analytical Solutions for Generalized Dynamics}\label{S9.1}

While Section~\ref{sec:variants} provided the mathematical formulation for non-conservative and rotational dynamics, deriving closed-form solutions for these generalized equations remains an open challenge.
Future work should focus on solving the non-homogeneous Poisson equation (\ref{eq:poisson_non_conservative}) for specific bursty traffic models, such as synchronized IoT wake-up cycles.
Furthermore, explicitly characterizing the impact of the vector potential $\bm{A}$ (vorticity) on handover signaling overhead—distinct from the scalar potential $\phi$ driving capacity demand—would provide a complete kinematic description of network mobility.
Additionally, integrating the F-STSG driver with advanced point processes, such as determinantal point processes (DPPs) which model repulsion, would refine the interference analysis in ultra-dense, planned networks.

\subsection{Rate-Distortion Theory of Topology}\label{S9.2}
{\color{black}
Building on the source-channel duality perspective established in Section~\ref{sec:app_source_coding}, a fertile avenue lies in defining the fundamental operational limits of topological tracking. Since the network topology acts as a non-stationary information source whose continuous entropy production is driven by the divergence field $\mathbb{E}[\mathcal{D}]$, future research should formally investigate the rate-distortion function $R(D)$ of the mobility field.}
Specifically, given a constraint on the control signaling rate $R < \dot{H}$, what is the optimal quantization scheme that minimizes the Wasserstein-2 distance between the reconstructed and true density fields?
Framing this as a distributed source coding problem with geometric side information will define the ``Information Capacity of Topology,'' providing a rigorous lower bound for the control plane bandwidth in 6G systems.

\subsection{Algorithmic Foundations: Physics-Informed Learning}\label{S9.3}

The third direction focuses on the computational implementation of the framework.
The inverse boundary value problem (Section~\ref{sec:core_theory}) presents a unique opportunity for scientific machine learning.
Rather than purely data-driven prediction, future work could employ physics-informed neural networks (PINNs) that embed the continuity equation directly into the loss function as a regularization term.
This hybrid approach would enable the unsupervised discovery of latent macroscopic velocity fields $\bm{v}(\bm{x},t)$ from sparse, noisy signaling data, such as handover counts, solving the F-STSG inverse problem in complex urban regimes where analytical Green's functions are intractable.

%======================================================================
%=============================Section X ===============================
%======================================================================
\section{Conclusion}\label{sec:conclusion} % S10

This paper has addressed a fundamental theoretical limitation in the modeling of large-scale wireless networks: the inability of the prevailing SG paradigm to analytically capture the macroscopic, collective evolution of network topologies.
To bridge this gap, we have introduced the F-STSG framework.
By modeling the mobile node constellation in the thermodynamic limit, we have moved beyond the static snapshots of traditional analysis to a dynamic continuum description governed by conservation laws.

The core contribution of this work is the resolution of the inverse boundary value problem, establishing the macroscopic velocity field $\bm{v}(\bm{x},t)$ as the unique minimum-energy driver of the intensity evolution.
Leveraging this field-theoretic formulation, we have derived a novel class of kinematic metrics, including the Information Flux $\bm{J}$ and the Congestion Divergence $\mathcal{D}$, which serve as sufficient statistics for predicting topological reconfiguration.
Furthermore, by coupling this continuous transport dynamics with the discrete geometry of the infrastructure, we have characterized the fundamental energy-capacity limits of non-stationary networks.
{\color{black}Specifically, we have derived an asymptotic scaling law $\lambda^*\! \propto\!
	\sqrt{\mathcal{P}_{\mathrm{static}}/\kappa}$ under the quadratic-overhead approximation, revealing an inverse-square information barrier: within this regime, optimal densification is fundamentally limited by the entropy of coordination rather than transmission physics.}
Crucially, we have identified the source-channel duality of mobility, establishing that the entropy production rate of the user field is rigorously determined by the divergence of the Information Flux.
This confirms that topological uncertainty is not merely a kinematic feature but a thermodynamic information source. Consequently, meaningful network orchestration is impossible whenever the control channel capacity fails to match this intrinsic entropy rate.

Ultimately, the F-STSG transforms the network operator's perspective from managing a sequence of static random fields to orchestrating a continuous response to a deterministic hydrodynamic pressure.
It provides the necessary mathematical foundation for the design of next-generation fluid-aware networks.
%======================================================================
%=======================Appendix=======================================
%======================================================================
\appendix

\subsection{Proof of Proposition~\ref{Prop_Optimal_Transport}}\label{app:proof_prop1} % ApA

{\color{black}
\begin{proof}
	We seek to minimize the instantaneous kinetic energy functional subject to the local mass conservation constraint:
	\begin{equation} 
		\min_{\bm{v}} E(t) = \frac{1}{2} \int_{\Omega} \lambda(\bm{x},t) \|\bm{v}(\bm{x},t)\|^2 \,\mathrm{d}\bm{x}, \quad \text{s.t. } \frac{\partial \lambda}{\partial t} + \nabla \cdot (\lambda \bm{v}) = 0.
	\end{equation}
	To solve this constrained variational problem, we introduce a time-dependent scalar Lagrange multiplier $\phi(\bm{x},t)$ to enforce the continuity constraint. The unconstrained Lagrangian functional is given by:
	\begin{equation} 
		\mathcal{L}(\bm{v}, \phi) = \int_{\Omega} \left[ \frac{1}{2} \lambda \|\bm{v}\|^2 - \phi \left( \nabla \cdot (\lambda \bm{v}) + \frac{\partial \lambda}{\partial t} \right) \right] \,\mathrm{d}\bm{x}.
	\end{equation}
	Applying integration by parts to the term $\phi \nabla \cdot (\lambda \bm{v})$ and assuming a no-flux boundary condition ($\lambda \bm{v} \cdot \mathbf{n} = 0$ on $\partial \Omega$), we obtain:
	\begin{equation} 
		\mathcal{L}(\bm{v}, \phi) = \int_{\Omega} \left[ \frac{1}{2} \lambda \|\bm{v}\|^2 + \lambda \bm{v} \cdot \nabla \phi - \phi \frac{\partial \lambda}{\partial t} \right] \,\mathrm{d}\bm{x}.
	\end{equation}
	Taking the variational derivative with respect to the velocity field $\bm{v}$ and setting it to zero yields the first-order optimality condition:
	\begin{equation} 
		\delta_{\bm{v}} \mathcal{L} = \int_{\Omega} \lambda (\bm{v} + \nabla \phi) \cdot \delta \bm{v} \,\mathrm{d}\bm{x} = 0.
	\end{equation}
	Since the intensity field satisfies $\lambda(\bm{x},t) > 0$ strictly inside the domain, this condition must hold for any arbitrary variation $\delta \bm{v}$, which rigorously requires:
	\begin{equation} 
		\bm{v}(\bm{x},t) = -\nabla \phi(\bm{x},t).
	\end{equation}
	Thus, the unique macroscopic velocity field that minimizes the instantaneous kinetic energy while satisfying the continuity equation is strictly an irrotational gradient field.
\end{proof}}

\subsection{Proof of Theorem~\ref{Theorem2}}\label{app:proof_theorem2} % ApB

\begin{proof}
	We substitute the asymptotic expansions for the intensity field $\lambda(\bm{x},t) = \lambda_0 + \epsilon \delta\lambda$ and the potential $\phi(\bm{x},t) = \phi_0 + \epsilon \delta\phi$ into the governing quasi-linear PDE \eqref{eq:conservative_pde}:
	\begin{equation}\label{eqApB1} % eq.69
		\frac{\partial (\lambda_0 + \epsilon \delta\lambda)}{\partial t} = \nabla \cdot \left( (\lambda_0 + \epsilon \delta\lambda) \nabla (\phi_0 + \epsilon \delta\phi) \right).
	\end{equation}
Expanding the terms on the RHS of (\ref{eqApB1}) and grouping them by powers of $\epsilon$:
	\begin{align}\label{eqRHS} % eq.70
		\text{RHS} =& \underbrace{\nabla \cdot (\lambda_0 \nabla \phi_0)}_{\text{Zeroth Order } (\epsilon^0)} 
	  + \epsilon \underbrace{\left( \nabla \cdot (\lambda_0 \nabla \delta\phi) + \nabla \cdot (\delta\lambda \nabla \phi_0) \right)}_{\text{First Order } (\epsilon^1)} \nonumber \\
		& + \epsilon^2 \underbrace{\nabla \cdot (\delta\lambda \nabla \delta\phi)}_{\text{Second Order } (\epsilon^2)}.
	\end{align}
	The LHS of (\ref{eqApB1}) expands linearly as 
	\begin{align}\label{eqLHS} % eq.71
	  \text{LHS} = \frac{\partial \lambda_0}{\partial t} + \epsilon \frac{\partial \delta\lambda}{\partial t}.
	\end{align}
	By definition, the background state satisfies the zeroth-order conservation law identically: 
	\begin{equation} % eq.72
		\frac{\partial \lambda_0}{\partial t} = \nabla \cdot (\lambda_0 \nabla \phi_0).
	\end{equation}
	Subtracting the above zeroth-order term from the both sides, i.e., (\ref{eqLHS}) and  (\ref{eqRHS}), and neglecting the second-order term (linearization assumption $\epsilon \ll 1$) of the RHS (\ref{eqRHS}), we equate the coefficients of the first-order term $\epsilon$:
	\begin{equation} % eq.73
		\frac{\partial (\delta\lambda)}{\partial t} = \nabla \cdot (\lambda_0 \nabla \delta\phi) + \nabla \cdot (\delta\lambda \nabla \phi_0).
	\end{equation}
	Rearranging this equation to isolate the diffusion operator $\nabla \cdot (\lambda_0 \nabla \delta\phi)$ yields the linearized flow potential equation \eqref{eq:linearized_pde}.
	This completes the proof.
\end{proof}

\subsection{Proof of Proposition~\ref{Prop_Flux_Integral}}\label{app:proof_theorem3} % ApC

\begin{proof}
	The net transport rate, $J_{\mathcal{B}}(t)$, crossing the boundary $\mathcal{B}$ is defined as the surface integral of the normal component of the flux density field $\bm{J}(\bm{x},t)$.
	Consider a differential surface element $\mathrm{d}S$ at a point $\bm{x} \in \mathcal{B}$ with outward-pointing unit normal vector $\mathbf{n}$.
	From Definition~\ref{Definition4}, $\bm{J}(\bm{x},t)$ represents the flux density (flow per unit area). The differential flow rate $\mathrm{d}J_{\mathcal{B}}$ through $\mathrm{d}S$ is the projection of the flux vector onto the normal direction:
	\begin{equation} % eq.74
		\mathrm{d}J_{\mathcal{B}}(t) = (\bm{J}(\bm{x},t) \cdot \mathbf{n}) \, \mathrm{d}S.
	\end{equation}
	The total rate $J_{\mathcal{B}}(t)$ is obtained by integrating this differential form over the manifold $\mathcal{B}$:
	\begin{equation} % eq.75
		J_{\mathcal{B}}(t) = \int_{\mathcal{B}} \bm{J}(\bm{x},t) \cdot \mathbf{n} \, \mathrm{d}S.
	\end{equation}
	This completes the proof.
\end{proof}

\subsection{Proof of Proposition~\ref{Prop_Material_Derivative}}\label{app:proof_theorem4} % Ap.D

\begin{proof}
	We start with the conservative continuity equation \eqref{eq:continuity_conservative}:
	\begin{equation} % eq.76
		\frac{\partial \lambda}{\partial t} + \nabla \cdot (\lambda \bm{v}) = 0.
	\end{equation}
	Applying the vector identity for the divergence of a scalar-vector product, $\nabla \cdot (\lambda \bm{v}) = (\nabla \lambda) \cdot \bm{v} + \lambda (\nabla \cdot \bm{v})$, we rewrite the equation as:
	\begin{equation} % eq.77
		\frac{\partial \lambda}{\partial t} + \bm{v} \cdot \nabla \lambda + \lambda (\nabla \cdot \bm{v}) = 0.
	\end{equation}
	Recalling the definition of the material derivative (Lagrangian derivative), $\frac{D\lambda}{Dt} \triangleq \frac{\partial \lambda}{\partial t} + \bm{v} \cdot \nabla \lambda$, the equation simplifies to:
	\begin{equation} % eq.78
		\frac{D\lambda}{Dt} + \lambda (\nabla \cdot \bm{v}) = 0.
	\end{equation}
	Substituting the definition of Congestion Divergence $\mathcal{D} \triangleq \nabla \cdot \bm{v}$, we obtain:
	\begin{equation} % eq.79
		\frac{D\lambda}{Dt} + \lambda \mathcal{D} = 0.
	\end{equation}
	Solving for $\mathcal{D}$ (assuming $\lambda > 0$) yields the result (\ref{eqPro3}).
	This completes the proof.
\end{proof}

\subsection{Proof of Proposition~\ref{Prop_Centroid_Drift}}\label{app:proof_theorem5} % ApE

\begin{proof}
	We differentiate the definition of the Network Centroid $\bm{C}(t) = \frac{1}{N} \int \bm{x} \lambda \,\mathrm{d}\bm{x}$ with respect to time. Since the system is conservative, the total mass $N$ is time-invariant ($\frac{\mathrm{d}N}{\mathrm{d}t} = 0$). Thus:
	\begin{equation} % eq.80
		N \bm{V}_C(t) = N \frac{\mathrm{d}\bm{C}}{\mathrm{d}t} = \int_{\mathbb{R}^d} \bm{x} \frac{\partial \lambda(\bm{x},t)}{\partial t} \,\mathrm{d}\bm{x}.
	\end{equation}
	Substituting the continuity equation $\frac{\partial \lambda}{\partial t} = - \nabla \cdot (\lambda \bm{v})$:
	\begin{equation} % eq.81
		N \bm{V}_C(t) = - \int_{\mathbb{R}^d} \bm{x} \left( \nabla \cdot (\lambda \bm{v}) \right) \,\mathrm{d}\bm{x}.
	\end{equation}
	We apply integration by parts. For the $i$-th component:
	\begin{equation} % eq.82
		\int_{\mathbb{R}^d}\!\! x_i (\nabla \cdot (\lambda \bm{v})) \,\mathrm{d}\bm{x} = \underbrace{\left( x_i (\lambda \bm{v}) \cdot \mathbf{n} \right)_{\partial \mathbb{R}^d}}_{0} -\! \int_{\mathbb{R}^d}\! (\nabla x_i) \cdot (\lambda \bm{v}) \,\mathrm{d}\bm{x},
	\end{equation}
	where the boundary term vanishes assuming the flux decays to zero at infinity. Since $\nabla x_i = \mathbf{e}_i$ (the basis vector), the integral becomes $-\int \lambda v_i \,\mathrm{d}\bm{x}$.
	Substituting this back into the vector equation:
	\begin{align} % eq.83
		N \bm{V}_C(t) =& - \left( - \int_{\mathbb{R}^d} \lambda(\bm{x},t) \bm{v}(\bm{x},t) \,\mathrm{d}\bm{x} \right) \nonumber \\ =& \int_{\mathbb{R}^d} \lambda(\bm{x},t) \bm{v}(\bm{x},t) \,\mathrm{d}\bm{x}.
	\end{align}
	Dividing by $N$ yields the intensity-weighted average velocity.
	This completes the proof.
\end{proof}

\subsection{Proof of Proposition~\ref{Prop_Assoc_Prob}}\label{app:proof_theorem6} % ApF

\begin{proof}
	Let $E_k$ denote the event that a typical mobile node associates with tier $k$. The global association probability is $A_k(t) = \mathbb{P}(E_k)$.
	Let $\bm{X}$ be the random location of the mobile node, distributed according to the instantaneous spatial PDF $p(\bm{x},t) = \lambda(\bm{x},t) / N(t)$.
	By the law of total probability:
	\begin{equation} % eq.84
		A_k(t) = \mathbb{E}_{\bm{X}} \left[ \mathbb{P}(E_k \mid \bm{X}) \right] = \int_{\mathbb{R}^d} \mathcal{A}_k(\bm{x},t) p(\bm{x},t) \,\mathrm{d}\bm{x},
	\end{equation}
	where $\mathcal{A}_k(\bm{x},t) \triangleq \mathbb{P}(E_k \mid \bm{X} = \bm{x})$ is the location-specific association probability.
	Given the association rule in \eqref{eq:assoc_rule}, the event $E_k \mid \bm{x}$ corresponds to the condition that the serving node $\bm{X}^*$ belongs to the active tier-$k$ set $\Psi_k(t)$:
	\begin{equation} % eq.85
		\mathcal{A}_k(\bm{x},t) = \mathbb{P}\left( \mathcal{K}(\bm{X}^*) = k \mid \bm{x} \right).
	\end{equation}
	This probability is uniquely determined by the SG of the infrastructure processes.
\end{proof}

\subsection{Proof of Proposition~\ref{Prop_Cov_Prob}}\label{app:proof_theorem7} % ApG

\begin{proof}
	Let $E_{\mathrm{cov}}$ be the event that a typical mobile node achieves the target SINR. This event is the disjoint union of coverage events across all tiers.
	Let $E_k$ be the association event as defined in Appendix~\ref{app:proof_theorem6}, and $C_k$ be the conditional coverage event given $E_k$ (i.e., $\text{SINR}_k > \gamma_k$).
	We have:
	\begin{equation} % eq.86
		P_{\mathrm{cov}}(t)\! = \!\mathbb{P}(E_{\mathrm{cov}}) \!= \!\! \sum_{k=1}^K \mathbb{P}(C_k \cap E_k) \!=\!\! \sum_{k=1}^K \mathbb{P}(C_k \mid E_k) \mathbb{P}(E_k).
	\end{equation}
	Recognizing $\mathbb{P}(E_k) = A_k(t)$ and defining the conditional coverage probability as $P_{\textrm{c}}^{(k)}(t) \triangleq \mathbb{P}(\text{SINR} > \gamma_k \mid E_k)$ yields \eqref{eq:pcov_instantaneous}.
\end{proof}

\subsection{Analytical Derivation for Canonical Poisson Instantiation}\label{app:performance_derivation} % ApH

This appendix details the derivation of the closed-form expressions employed in the simulation validation (Section~\ref{sec:sim_performance}). Invoking the adiabatic assumption (Assumption~\ref{Assump:Adiabatic}), we treat the instantaneous node distribution as a stationary spatial process. While the general F-STSG framework supports arbitrary intensity fields $\lambda(\bm{x},t)$, analytical tractability requires specifying the underlying point processes. Consistent with the simulation setup, we adopt the following canonical assumptions:
\begin{itemize}
	\item The baseline infrastructure $\Phi_k$ is a HPPP with density $\lambda_k$.
	\item The dynamic activation is an independent thinning process driven by the macroscopic intensity, resulting in an active node process $\Psi_k(t)$ with time-varying density $\lambda_{a,k}(t) = p_{a,k}(t)\lambda_k$.
	\item The channel follows Rayleigh fading ($h \sim \exp(1)$) with path loss exponent $\alpha > 2$.
	\item Mobile nodes associate based on the maximum biased mean received power (MBMRP) rule.
\end{itemize}

\subsubsection{Instantaneous Association Probability (Proposition~\ref{Prop_Assoc_Prob})}\label{app:assoc_prob}
The instantaneous association probability $A_k(t)$ is the spatial average of the local association preference. Due to the stationarity of the HPPP infrastructure, the local preference is translation invariant. Thus, $A_k(t)$ is simply the probability that a tier-$k$ node provides the maximum biased received power:
\begin{equation} % eq.87
	A_k(t)\! =\! \mathbb{P}\! \left( \max_{\bm{X}_i \in \Psi_k(t)} \frac{P_k \mathcal{W}_k}{\|\bm{X}_i\|^{\alpha}} > \max_{j \neq k} \max_{\bm{X}_j \in \Psi_j(t)} \frac{P_j \mathcal{W}_j}{\|\bm{X}_j\|^{\alpha}}\! \right)\! .\!
\end{equation}
Let $R_k\! =\! \min_{\bm{X}_i \in \Psi_k(t)}\! \|\bm{X}_i\|$ be the distance to the nearest tier-$k$ node. For an HPPP, the PDF of $R_k$ is $f_{R_k}(r)\! =\! 2\pi \lambda_{a,k}(t) r \exp\big(-\pi \lambda_{a,k}(t) r^2\big)$. The association event is equivalent to $P_k \mathcal{W}_k R_k^{-\alpha} > P_j \mathcal{W}_j R_j^{-\alpha}$ for all $j$.
Solving this standard SG problem yields the closed-form expression used in our simulation:
\begin{equation} % eq.88
	A_k(t) = \frac{\lambda_{a,k}(t) (P_k \mathcal{W}_k)^{2/\alpha}}{\sum_{j=1}^K \lambda_{a,j}(t) (P_j \mathcal{W}_j)^{2/\alpha}}.
	\label{eq:app_assoc_final}
\end{equation}

\subsubsection{Instantaneous Conditional Coverage Probability (Proposition~\ref{Prop_Cov_Prob})}\label{app:cov_prob}
We derive the conditional probability $P_{\textrm{c}}^{(k)}(t) \triangleq \mathbb{P}(\text{SINR}_k > \gamma_k \mid E_k)$.
The SINR is given by $\text{SINR}_k = \frac{P_k h r^{-\alpha}}{I_{\text{total}} + \sigma^2}$. In the interference-limited regime ($\sigma^2 \to 0$) with Rayleigh fading ($h \sim \exp(1)$), the coverage probability is:
\begin{align} % eq.89
	P_{\textrm{c}}^{(k)}(t) &= \int_0^\infty \mathbb{P}\left( h > \frac{\gamma_k r^{\alpha}}{P_k} I_{\text{total}} \right) f_{R_0 \mid E_k}(r) \,\mathrm{d}r \nonumber \\
	&= \int_0^\infty \mathbb{E}_{I} \left[ \exp\left( - \frac{\gamma_k r^{\alpha}}{P_k} I_{\text{total}} \right) \right] f_{R_0 \mid E_k}(r) \,\mathrm{d}r \nonumber \\
	&= \int_0^\infty \left( \prod_{j=1}^K \mathcal{L}_{I_j}\left(s = \frac{\gamma_k r^{\alpha}}{P_k}\right) \right) f_{R_0 \mid E_k}(r) \,\mathrm{d}r.
	\label{eq:cov_integral_form}
\end{align}
We now explicitly derive the two components: $f_{R_0 \mid E_k}(r)$ and $\mathcal{L}_{I_j}(s)$.

\paragraph{Serving Distance Distribution}
Given the association condition $E_k$, the distance $R_0$ to the serving node is distributed according to:
\begin{equation} % eq.90
	f_{R_0 \mid E_k}(r) = \frac{2\pi \lambda_{a,k}(t) r \exp\! \left(\! -\pi r^2 \sum\limits_{j=1}^K \lambda_{a,j}(t) \big(\frac{P_j \mathcal{W}_j}{P_k \mathcal{W}_k}\big)^{2/\alpha}\! \right)}{A_k(t)}.
\end{equation}
Substituting $A_k(t)$ from \eqref{eq:app_assoc_final} simplifies this to:
\begin{equation} % eq.91
	f_{R_0 \mid E_k}(r) = 2\pi \Lambda_{k,\text{assoc}}(t) r \exp\left( -\pi r^2 \Lambda_{k,\text{assoc}}(t) \right),
	\label{eq:serving_dist_pdf}
\end{equation}
where $\Lambda_{k,\text{assoc}}(t) \triangleq \sum_{j=1}^K \lambda_{a,j}(t) \left( \frac{P_j \mathcal{W}_j}{P_k \mathcal{W}_k} \right)^{2/\alpha}$.

\paragraph{Interference Laplace Transform}
The aggregate interference $I_j$ from tier $j$ originates from active nodes $\Psi_j(t)$ distributed in $\mathbb{R}^d$ excluding the exclusion ball $b(0, r C_{j,k})$, where $C_{j,k} = (P_j \mathcal{W}_j / P_k \mathcal{W}_k)^{1/\alpha}$ ensures the association condition is met. The Laplace transform is derived as:
\begin{align} % eq.92
	& \mathcal{L}_{I_j}(s) = \mathbb{E}_{\Psi_j} \left[ \prod_{i: X_i \in \Psi_j} \frac{1}{1 + s P_j \|\bm{X}_i\|^{-\alpha}} \right] \nonumber \\
	& ~ = \exp\! \left(\! -2\pi \lambda_{a,j}(t)\! \int_{r C_{j,k}}^\infty \!\! \left(\! 1 - \frac{1}{1\! +\! s P_j v^{-\alpha}} \right) v \,\mathrm{d}v\! \right)\!\!.
\end{align}
Substituting $s = \frac{\gamma_k r^{\alpha}}{P_k}$ and utilizing the substitution $u = (v/r C_{j,k})^2$, the integral simplifies to:
\begin{equation} % eq.93
	\mathcal{L}_{I_j}(s) = \exp\left( - \pi r^2 \lambda_{a,j}(t) \rho_j(\gamma_k, \alpha) \right),
	\label{eq:laplace_final}
\end{equation}
where $\rho_j(\gamma_k, \alpha) \triangleq (P_j \mathcal{W}_j / P_k \mathcal{W}_k)^{2/\alpha} \int_{1}^{\infty} \frac{\gamma_k}{u^{\alpha/2} + \gamma_k} \mathrm{d}u$.

\paragraph{Final Integration}
Substituting \eqref{eq:serving_dist_pdf} and \eqref{eq:laplace_final} back into \eqref{eq:cov_integral_form}, we obtain:
\begin{align} % eq.94
	& P_{\textrm{c}}^{(k)}(t) = 2\pi \Lambda_{k,\text{assoc}}(t) \int_0^\infty r \nonumber\\
	& ~ \times \exp\! \left(\! -\pi r^2\! \left(\! \Lambda_{k,\text{assoc}}(t) + \sum_{j=1}^K \lambda_{a,j}(t) \rho_j \right)\! \right) \mathrm{d}r.
\end{align}
Noting the integral $\int_0^\infty 2 C_1 r \exp(- (C_1 + C_2) r^2) \mathrm{d}r\! =\! \frac{C_1}{C_1 + C_2}$, the closed-form expression is derived as:
\begin{equation} % eq.95
	P_{\textrm{c}}^{(k)}(t) = \frac{\Lambda_{k,\text{assoc}}(t)}{\Lambda_{k,\text{assoc}}(t) + \sum_{j=1}^K \lambda_{a,j}(t) \rho_j(\gamma_k, \alpha)}.
	\label{eq:app_cov_final}
\end{equation}

\subsubsection{Overall Instantaneous Coverage Probability}
Combining the results, the total instantaneous coverage probability is:
\begin{equation} % eq.96
	P_{\mathrm{cov}}(t) = \sum_{k=1}^K A_k(t) \cdot P_{\textrm{c}}^{(k)}(t).
	\label{eq:app_pcov_total_final}
\end{equation}
This derivation explicitly links the time-varying activation densities $\lambda_{a,k}(t)$ to the precise microscopic performance metrics.

\subsection{Proof of Proposition~\ref{Prop_Scaling}}\label{ApendixI}

\begin{proof}
	We analyze the asymptotic behavior of the objective function $\eta_{\mathrm{EE}}(\lambda) \approx \frac{\lambda \bar{R}_{\infty}}{\Omega(\lambda)}$.
	Differentiating $\eta_{\mathrm{EE}}(\lambda)$ with respect to $\lambda$ using the quotient rule:
	\begin{equation} % eq.97
		\frac{\mathrm{d} \eta_{\mathrm{EE}}}{\mathrm{d} \lambda} = \frac{\bar{R}_{\infty}(\Omega(\lambda) - \lambda \Omega'(\lambda))}{(\Omega(\lambda))^2}.
	\end{equation}
	The optimal density is determined by the vanishing of the numerator.
	Truncating the expansion to the second order ($\Omega(\lambda) \approx c_0 + c_1 \lambda + \kappa \lambda^2$), the optimality condition becomes:
	\begin{equation} % eq.98
		(c_0 + c_1 \lambda + \kappa \lambda^2) - \lambda(c_1 + 2\kappa \lambda) \approx 0.
	\end{equation}
	Simplifying this condition reveals a fundamental structural cancellation:
	\begin{equation} % eq.99
		c_0 - \kappa \lambda^2 \approx 0 \implies \lambda^* \approx \sqrt{\frac{c_0}{\kappa}}.
	\end{equation}
	The second derivative at this critical point is $\eta_{\rm EE}''(\lambda^*) < 0$, confirming a global maximum. 
\end{proof}

\subsection{Proof of Proposition~\ref{Prop_Entropy}}\label{ApendixJ}

\begin{proof}
	By definition, the differential entropy is $H(t) = - \int_{\mathbb{R}^d} p \ln p \,\mathrm{d}\bm{x}$. Differentiating it yields:
	\begin{equation} % eq.100
		\frac{\mathrm{d}H}{\mathrm{d}t} = - \int_{\mathbb{R}^d} \frac{\partial p}{\partial t} (1 + \ln p) \,\mathrm{d}\bm{x}.
	\end{equation}
	Substituting the continuity equation $\partial_t p = - \nabla \cdot (p \bm{v})$:
	\begin{equation} % eq.101
		\frac{\mathrm{d}H}{\mathrm{d}t} = \int_{\mathbb{R}^d} \nabla \cdot (p \bm{v}) (1 + \ln p) \,\mathrm{d}\bm{x}.
	\end{equation}
	Applying the Divergence Theorem (assuming the flux $p \bm{v}$ vanishes at infinity) and integration by parts:
	\begin{align} % eq.102
		\frac{\mathrm{d}H}{\mathrm{d}t} &= - \int_{\mathbb{R}^d} (p \bm{v}) \cdot \nabla (1 + \ln p) \,\mathrm{d}\bm{x} \nonumber \\
		&= - \int_{\mathbb{R}^d} p \bm{v} \cdot \frac{\nabla p}{p} \,\mathrm{d}\bm{x} = - \int_{\mathbb{R}^d} \bm{v} \cdot \nabla p \,\mathrm{d}\bm{x}.
	\end{align}
	Using the vector identity $\nabla \cdot (p \bm{v}) = p (\nabla \cdot \bm{v}) + \bm{v} \cdot \nabla p$, we substitute $-\bm{v} \cdot \nabla p = p (\nabla \cdot \bm{v}) - \nabla \cdot (p \bm{v})$. Integrating over the domain, the term $\int \nabla \cdot (p \bm{v}) \mathrm{d}\bm{x}$ vanishes due to mass conservation. Thus:
	\begin{equation} % eq.103
		\frac{\mathrm{d}H}{\mathrm{d}t} = \int_{\mathbb{R}^d} p(\bm{x},t) (\nabla \cdot \bm{v}(\bm{x},t)) \,\mathrm{d}\bm{x}.
	\end{equation}
	Recalling the definitions of $\mathcal{D}(\bm{x},t) \triangleq \nabla \cdot \bm{v}(\bm{x},t)$ and $\mathbb{E}_{\bm{x} \sim p}[\cdot]$, we finally obtain:
	\begin{equation} % eq.104
		\frac{\mathrm{d}H(t)}{\mathrm{d}t} = \int_{\mathbb{R}^d} \mathcal{D}(\bm{x},t) p(\bm{x},t) \,\mathrm{d}\bm{x} \equiv \mathbb{E}_{\bm{x} \sim p(\cdot,t)} [\mathcal{D}(\bm{x},t)].
	\end{equation}
	This completes the proof.
\end{proof}

%\bibliography{IEEEtran}
%\bibliography{reference}	
\small
%\bibliography{IEEEabrv, reference}
%\bibliography{reference}		
% Generated by IEEEtran.bst, version: 1.14 (2015/08/26)

\end{document}